\definecolor{BrickRed}{rgb}{0.8,0.25,0.33}
\theoremstyle{plain}
\newtheorem{thm}{Theorem}[section]
\newtheorem{cor}[thm]{Corollary}
\newtheorem{prop}[thm]{Proposition}
\newtheorem{lem}[thm]{Lemma}
\newtheorem{lemma}[thm]{Lemma}
\newtheorem{claim}[thm]{Claim}
\newtheorem{Def}[thm]{Definition}
\newtheorem{obs}[thm]{Observation}
\newtheorem{rem}[thm]{Remark}
\newtheorem{example}[thm]{Example}
\crefname{thm}{Theorem}{theorems}
\crefname{cla}{Claim}{claims}
\crefname{lem}{Lemma}{lemmas}
\crefname{fact}{Fact}{facts}
\newcommand{\E}{\mathbb{E}}
\newcommand{\R}{\mathbb{R}}
\newcommand{\bbP}{\Pr}
\newcommand{\eps}{\varepsilon}
\newcommand{\calA}{\mathcal{A}}
\newcommand{\calB}{\mathcal{B}}
\newcommand{\calD}{\mathcal{D}}
\newcommand{\calF}{\mathcal{F}}
\newcommand{\calG}{\mathcal{G}}
\newcommand{\calI}{\mathcal{I}}
\newcommand{\calM}{\mathcal{M}}
\newcommand{\calP}{\mathcal{P}}
\newcommand{\calU}{\mathcal{U}}
\newcommand{\calY}{\mathcal{Y}}
\newcommand{\optoff}{\mathrm{OPT}_\mathrm{off}}
\newcommand{\opton}{\mathrm{OPT}_\mathrm{on}}
\newcommand{\Exp}{\mathrm{Exp}}
\newcommand{\Ber}{\mathrm{Ber}}
\newcommand{\alg}{\texttt{ALG}}
\newcommand*{\todos}{}%
\def\neel#1{}
\def\david#1{}
\def\neel#1{\marginpar{$\leftarrow$\fbox{N}}\footnote{$\Rightarrow$~{\sf\textcolor{purple}{#1 --Neel}}}}
\def\david#1{\marginpar{$\leftarrow$\fbox{D}}\footnote{$\Rightarrow$~{\sf\textcolor{blue}{#1 --David}}}}
\let\vec\mathbf
\renewcommand{\vec}{\mathbf}
\title{Combinatorial Stationary Prophet Inequalities}
\author[1]{Neel Patel}
\author[2]{David Wajc\thanks{Part of this work done while the author was at Stanford University and at Google Research. Supported by a Taub Family Foundation ``Leader in Science and Technology'' fellowship.}}
\affil[1]{University of Southern California}
\affil[2]{Technion --- Israel institute of Technology}
\date{\vspace{-1.25cm}}
\begin{document}

\maketitle

\begin{abstract}
    Numerous recent papers have studied the tension between thickening and clearing a market in (uncertain, online) long-time horizon Markovian settings.
    In particular, (Aouad and Sarita{\c{c}} EC'20, Collina et al.~WINE'20, Kessel et al.~EC'22) studied what the latter referred to as the Stationary Prophet Inequality Problem, due to its similarity to the classic finite-time horizon prophet inequality problem.
    These works all consider unit-demand buyers.
    Mirroring the long line of work on the classic prophet inequality problem subject to \emph{combinatorial constraints},
    we initiate the study of the stationary prophet inequality problem subject to \emph{combinatorially-constrained buyers}.

    Our results can be summarized succinctly as unearthing an algorithmic connection between contention resolution schemes (CRS) and stationary prophet inequalities.
    While the classic prophet inequality problem has a tight connection to online CRS (Feldman et al.~SODA'16, Lee and Singla ESA'18), we show that for the stationary prophet inequality problem, \emph{offline} CRS play a similarly central role.
    We show that, up to small constant factors, the best (ex-ante) competitive ratio achievable for the combinatorial prophet inequality equals the best possible balancedness achievable by offline CRS for the same combinatorial constraints.
\end{abstract}



\pagenumbering{gobble}

\section{Introduction}

A core challenge in economics is tackling the tension between thickening the market and clearing it. 
More broadly, in decision-making under uncertainty, a recurring theme is deciding between using scarce resources now, or keeping these resources for use for unknown later opportunities.

One problem capturing such tension is the (classic) prophet inequality problem, introduced in the 70s by Krengel and Sucheston \cite{krengel1977semiamarts,krengel1978semiamarts}.
Here, a seller has a single item, and buyers arrive sequentially, and then announce their bid for the item, drawn from a priori known distributions. 
When a buyer arrives, the seller must immediately either reject their bid or sell them the item at the bid price, thus forgoing all future buyers' bids.
The objective is to maximize the \emph{competitive ratio}, i.e., the ratio of the algorithm's expected reward to that of the ``prophetic'' offline algorithm, who can foretell the bid realizations.

The above basic problem has been generalized significantly, and (near-)optimal competitive ratios are known for sellers with varying combinatorial constraints, including multi-unit \cite{hajiaghayi2007automated,alaei2012online,arnosti2022tight}, matroid \cite{kleinberg2019matroid}, polymatroid \cite{dutting2015polymatroid}, matching \cite{ezra2020online,gravin2019prophet}, knapsack \cite{feldman2016online}, and arbitrary downward-closed sets \cite{rubinstein2016beyond}.\footnote{Recall that a set family $\calF\subseteq 2^\calG$ is downward-closed if whenever $A\in \calF$ and $B\subseteq A$, then $B\in\calF$.} Similarly, there has been growing interest in generalizing from linear rewards (i.e., sum of bids) to more involved set functions, such as subadditive, submodular and XOS valuations \cite{dutting2020log,rubinstein2017combinatorial,zhang2022improved,dutting2020prophet}.
Finally, some recent work has considered this family of problems from the perspective of poly-time approximation
of the (computationally unbounded) optimal online algorithm \cite{niazadeh2018prophet,anari2019nearly,papadimitriou2021online,braverman2022max,naor2023online,dutting2023prophet}.

Despite the breadth of work on prophet inequality problems, one key aspect of markets was overlooked: over sufficiently long time horizons, the seller may obtain additional goods to sell.

This gap in the literature was addressed by a number of recent works, the most relevant of which to our work are
\cite{aouad2020dynamic,collina2020dynamic,kessel2022stationary}. 
In particular, \cite{kessel2022stationary} introduced the \emph{stationary prophet inequality}, where items of different goods are produced and perish according to a Poisson time process, and buyers of different preference types similarly arrive according to a Poisson point process, and the algorithm must immediately decide what good to sell an item of to each buyer on arrival. 
The goal is to maximize the algorithm's infinite-time-horizon average reward. (See \Cref{sec:prelims}.)
Kessel et al.~\cite{kessel2022stationary} provide optimal $\frac{1}{2}$-competitive algorithms for the single-good problem, and algorithms improving on the multi-good algorithm by \cite{collina2020dynamic}, and the approximation of the optimal online algorithm by \cite{aouad2020dynamic} for the single-good problem.\footnote{It should be noted that \cite{aouad2020dynamic,collina2020dynamic} study more broadly a dynamic stochastic matching problem, which allows buyers to depart stochastically but not immediately, and to be sold items at any point until their departure.}

While the preceding papers provide a good first treatment of infinite-time-horizon two-sided markets, these papers are all restricted to unit-demand buyers. 
In this work, we extend this study to \emph{combinatorial-demand} buyers, who are interested in subsets of items belonging to a downward-closed 
set family, and may stipulate non-linear valuations for such sets. (See \Cref{sed:def}.) 
Our goal is to similarly develop a rich theory of achievable competitive and approximation ratios for infinite time horizons, mirroring the rich literature on the (finite time horizon) classic prophet inequality.

\vspace{-0.2cm}
\subsection{Results}

Our main result ties the combinatorial stationary prophet inequality problem to a central tool in the study of this problem's classic counterpart: contention resolution schemes (CRSes).
\begin{Def}[\cite{vondrak2011submodular}] \label{def:CRS}
Let $(\calG,\calF\subseteq 2^{\calG})$
be a set system and let polytope $\calP_\calF$ be a relaxation of $\operatorname{convexhull}\{\mathbbm{1}_F \mid F\in \calF\}$.\footnote{Polytope $P$ is a relaxation of another polytope $Q$ if it contains the same $\{0,1\}$-points, i.e. $P \cap \{0,1\}^n \supseteq Q\cap \{0,1\}^n$.
Throughout, all polytopes we work  with are \emph{solvable}, i.e., there exist efficient separation oracles for these polytopes.
Moreover, to avoid clutter, we often implicitly assume a fixed such polytope $\calP_\calF$ when discussing a constraint $\calF$.}
A \emph{Contention Resolution Scheme (CRS)} $\pi$ for $(\calG,\calF)$
and $\calP_\calF$ is an algorithm that takes as input a point $\vec x \in \calP_\calF$ and a set of \emph{active} elements 
$R(\vec x) \subseteq \calG$, including each element $i \in \calG$ independently with probability $x_i$, and outputs a feasible subset $\pi_{\vec{x}}(R(\vec x))\in \calF\cap 2^{R(\vec x)}$. 
The CRS $\pi$ is $c$-balanced (a.k.a, $\pi$ has \emph{balance ratio} $c$) if for every $\vec{x}\in \calP_\calF$,
$$\Pr [ i \in \pi_{\vec{x}}(R(\vec x)) \mid i \in R(\vec x) ] \geq c.$$
\end{Def}

Contention resolution schemes, originating in the study of submodular maximization \cite{calinescu2011maximizing}, have found wide-ranging applications including mathematical programming \cite{vondrak2011submodular}, combinatorial sparsification \cite{dughmi2022sparsification} and online rounding schemes \cite{naor2023online}, among others. 
\emph{Online} CRSes (OCRSes), that observe online for each element $i$ whether $i\in R(\vec{x})$, and must decide before observing the next element whether $i\in \pi_{\vec x}(R(\vec{x}))$, likewise have far-reaching applications to online optimization and beyond, including 
prophet inequalities \cite{feldman2016online,gravin2019prophet}, stochastic probing \cite{feldman2016online,adamczyk2015improved,baveja2018improved}, oblivious posted pricing mechanisms \cite{pollner2022improved,feldman2016online,ezra2020online} and algorithmic delegation \cite{bechtel2022delegated}. Most relevant to our work is OCRSes' tight connection to the classic prophet inequality problem:
OCRSes yield prophet inequalities with the same balance ratio \cite{feldman2016online}, while \emph{ex-ante} prophet inequalities (yielding an approximation of the ex-ante relaxation) imply OCRSes with the same value, as proven by Lee and Singla \cite{lee2018optimal}.

In contrast to the above, our main result is a proof that \emph{offline} contention resolution plays a similar central role for the (online) stationary prophet inequality problem.

\begin{restatable}{thm}{mainthm}\label{thm:main_thm}
Let $\calF$ be a downward-closed set family for which the maximum achievable CRS balance ratio is $c$.
Then, stationary prophet inequality for $\calF$-constrained buyers admits a (ex-ante) $(c/2)$-competitive algorithm, while no ex-ante $(c+\eps)$-competitive algorithm is possible, for~any~$\eps>0$.
\end{restatable}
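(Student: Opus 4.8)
\emph{Setup of the plan.} I would prove the two halves separately, using the ex-ante relaxation of $\optoff$ as the common bridge: the positive result rounds this relaxation online, the negative result shows that any rounding of it \emph{is} a contention resolution scheme. The conceptual reason \emph{offline} (rather than online) CRS is the right object is that when a buyer arrives the stationary algorithm sees the entire set of currently-available goods at once, and must only pick a feasible subset of it — exactly the offline-CRS interface; the only online-ness is across buyers and across time, and that is precisely what the ``half the market is available'' phenomenon (the $1/2$) absorbs.

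\emph{The $(c/2)$-competitive algorithm.} First I would write the ex-ante relaxation: a vector $\vec x^b\in\calP_\calF$ per buyer type $b$ giving the long-run marginal rate at which a type-$b$ buyer receives each good, a supply constraint bounding $\sum_b(\text{arrival rate of }b)\cdot x^b_g$ by the production rate of good $g$, and the (concave-closure) objective. Feasibility of each buyer's allocation (its indicator is a convex combination of indicators of sets in $\calF\subseteq\calP_\calF$) together with conservation of goods shows this upper-bounds $\optoff$. The algorithm then (i) solves the relaxation; (ii) splits the Poisson production stream of each good $g$, by independent thinning, into one sub-stream per type $b$ of rate $(\text{arrival rate of }b)\cdot x^b_g$, feeding a dedicated reservoir $Q_{b,g}$ whose items also perish — crucially making the reservoirs $\{Q_{b,g}\}$ mutually independent processes; (iii) on a type-$b$ arrival, draws a \emph{fresh} random set $\tilde R$ including each $g$ independently with probability $x^b_g$, runs the $c$-balanced CRS to get $\tilde S=\pi_{\vec x^b}(\tilde R)\in\calF$, and allocates the set $\tilde S\cap\{g:Q_{b,g}\neq\emptyset\}$ (feasible as $\calF$ is downward-closed). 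Since each $Q_{b,g}$ is replenished at least as fast as it is drained, a stationarity computation in the style of Kessel et al.\ gives $Q_{b,g}\neq\emptyset$ at a type-$b$ arrival with probability $\ge 1/2$, an event independent of the fresh randomness defining $\tilde R,\tilde S$; hence good $g$ reaches the buyer with probability $\ge \Pr[g\in\tilde R]\cdot c\cdot\tfrac12=\tfrac{c}{2}x^b_g$. For linear valuations this immediately yields reward $\ge\tfrac{c}{2}\cdot(\text{relaxation})\ge\tfrac{c}{2}\optoff$; for XOS/submodular valuations the same per-good guarantee yields a $\tfrac{c}{2}$-fraction of the concave-closure value via the standard CRS value-preservation argument.

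\emph{No ex-ante $(c+\eps)$-competitive algorithm.} Suppose $\calA$ were ex-ante $(c+\eps)$-competitive. Fix any $\vec x\in\calP_\calF$ and any $v\ge 0$, and build a single-buyer-type instance with constraint $\calF$, additive valuation $v$, vanishing buyer rate, and per-good Poisson production/perishing rates tuned so that at stationarity good $g$ is present with probability exactly $x_g$; with rare buyers, each buyer sees the goods at independent stationarity, i.e.\ an active set distributed as $R(\vec x)$ from \Cref{def:CRS}, and the ex-ante relaxation of the instance evaluates to (a normalization of) $\langle v,\vec x\rangle$. Since $\calA$ allocates each buyer a feasible subset of what it sees, its behavior induces a randomized map $\phi_v\colon R(\vec x)\mapsto \calF\cap 2^{R(\vec x)}$, and $(c+\eps)$-competitiveness forces $\E[\langle v,\mathbbm 1_{\phi_v(R(\vec x))}\rangle]\ge(c+\eps)\langle v,\vec x\rangle$. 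As this holds for every $v\ge 0$, a minimax/LP-duality argument over the zero-sum game in which one player picks a (distribution over) maps and the other picks a good turns the family $\{\phi_v\}_v$ into a single randomized map $\phi$ with $\Pr[g\in\phi(R(\vec x))\mid g\in R(\vec x)]\ge c+\eps$ for all $g$ — a $(c+\eps)$-balanced CRS at $\vec x$. Letting $\vec x$ range over $\calP_\calF$ (a CRS may depend on its input point) gives a $(c+\eps)$-balanced CRS for $(\calG,\calF)$, contradicting that $c$ is the best achievable balance, since the best balance equals $\inf_{\vec x}(\text{best balance at }\vec x)$.

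\emph{Where the difficulty is.} On the algorithmic side the $1/2$ is essentially the single-item Kessel et al.\ bound; the real obstacle is feeding the CRS a \emph{genuine product distribution} — the naive choice, the actual set of available goods, is badly correlated across goods because the CRS's own past decisions couple their inventories. Decoupling via per-type reservoirs, and especially driving the CRS by fresh randomness rather than reservoir states, is what makes this go through (and is why offline, not online, CRS is the natural object); the secondary difficulty there is passing from the per-good guarantee to a value guarantee for non-linear valuations. On the hardness side the obstacle is that $\calA$'s induced map is valuation-dependent, so one cannot directly read off a balanced CRS; the minimax step is what launders the valuation-dependent family $\{\phi_v\}$ into one universal CRS, after which the identity ``best balance $=\inf_{\vec x}$ best-balance-at-$\vec x$'' closes the argument.
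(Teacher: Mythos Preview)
Your impossibility argument is essentially the paper's: the paper directly invokes the correlation-gap characterization of the best CRS balance (your minimax step is exactly what is packaged inside that characterization), builds the same rare-buyer single-type instance, and upper bounds any algorithm by the value it could extract from the independent set of \emph{present} goods. So that half is fine.

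The algorithmic half has a genuine gap. Your claim that the per-buyer reservoir $Q_{b,g}$ is non-empty with probability at least $1/2$ is false: items in $Q_{b,g}$ also \emph{perish} at rate $\mu_g$, and nothing in the LP lower-bounds $\gamma_b x^b_g/\mu_g$. Concretely, take a single buyer type with $\gamma=1$ and a single good with $\lambda=\mu=N$; the good is present with probability $1-e^{-1}$, so the LP allows $x_g=1-e^{-1}$, yet your reservoir is fed at rate $1-e^{-1}$ against per-item perishing rate $N$, whence $\Pr[Q\neq\emptyset]\le 1-e^{-(1-e^{-1})/N}\to 0$ as $N\to\infty$. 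The Kessel et al.\ $1/2$ is a \emph{ratio} bound (available-to-present), not an absolute availability bound, and your thinning kills the ``present'' side of that ratio. ``Replenished at least as fast as drained'' ignores perishing entirely.

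The paper's fix is precisely to \emph{not} thin. Proposals to an arriving type-$j$ buyer are drawn from the full set of \emph{present} goods (presence is genuinely independent across goods), each good $i$ proposing with probability $x_{ij}/(\gamma_j\cdot\Pr[i\in P])$, so that the proposal marginals are exactly $x_{ij}/\gamma_j$ and lie in $\calP_\calF$; the CRS is applied to this product-distributed proposal set. The $1/2$ then enters as the \emph{conditional} probability that a proposing (hence present) good is also available. To decouple that conditional event from the CRS's output on the other goods, the paper introduces an auxiliary ``saved'' process (a proposing item becomes ``used'' whether or not the CRS selects it), shows saved $\subseteq$ available, and proves $\Pr[\text{saved}\mid\text{present}]\ge 1/2$ by reducing to the single-good birth-death analysis with the \emph{full} arrival rate $\lambda_i$. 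Your reservoir idea and the paper's ``saved'' idea are cousins, but the crucial difference is that the paper runs the $1/2$ calculation on the unthinned per-good process.
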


At the cost of decreasing the competitive ratio by a factor of $(1-1/e-\eps)^2$, or only $(1-1/e-\eps)$ if we ignore computational aspects (but not uncertainty about the future), our results also extend to monotone \emph{submodular} objectives (and worse constants for arbitrary non-negative submodular functions), provided the CRS used is \emph{monotone}.

The algorithmic part of Theorem~\ref{thm:main_thm} yields a plethora of results for rich families of constraints on buyers' demands (and also extend to settings where different buyers have different constraints). One prominent example is the multi-good setting, where buyers are again unit-demand. 
For this problem, the best known competitive ratio is $0.267$ \cite{kessel2022stationary}, improving on a previous $\frac{1}{8}=0.125$ of \cite{collina2020dynamic}.
Our general result beats these bounds, giving a $\frac{1}{2}(1-1/e)\approx 0.316$-competitive algorithm for this special case.
Our next result further improves on the above bound for this basic problem.

\begin{thm}\label{thm:multi-good-intro}
The multi-good (unit-demand) stationary prophet inequality problem admits an algorithm wich competitive ratio $(1-1/\sqrt{e})\approx 0.393$.
\end{thm}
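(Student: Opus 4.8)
The plan is to prove this bound directly, rather than through the contention‑resolution reduction behind \Cref{thm:main_thm}: applied to the rank‑$1$ uniform matroid that governs a single unit‑demand buyer (whose feasible sets are $\emptyset$ and the singletons), that reduction loses a factor of $2$ relative to the optimal balance ratio $1-1/e$ and yields only $\tfrac12(1-1/e)\approx 0.316$. So first I would set up the natural \emph{ex-ante} LP relaxation: introduce a variable $x_{tj}$ for the long‑run rate at which a buyer of type $t$ is matched to an item of good $j$, impose the unit‑demand constraints $\sum_j x_{tj}\le 1$ for every type $t$, and impose good‑side constraints bounding $\sum_t \lambda_t x_{tj}$ by the rate at which the prophet can actually consume items of good $j$ given its production rate $\mu_j$ and perishing rate $d_j$. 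One then shows $\optoff$ is at most this LP's optimum. Choosing the good‑side constraint so that it is not merely valid but tight is the linchpin of the argument, since everything downstream is measured against it.

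Next I would fix a global sampling parameter $\beta\in(0,1]$ and run the following algorithm on an optimal LP solution $\{x_{tj}\}$: when a buyer of type $t$ arrives, select a single target good $J$ with probability $\beta\,x_{tj}$ (and otherwise discard the buyer); if an item of good $J$ is currently in inventory, match the buyer to it and collect $v_{tJ}$, else discard the buyer. (One may refine this with per‑good value thresholds, in the spirit of the single‑good $\tfrac12$ algorithm of Kessel et al., or allow a buyer whose target is unavailable to fall back to another of her wanted goods; I would first analyze the bare version.) The point of this design is that the consumption process at good $j$ is a Poisson process of rate at most $\beta\sum_t\lambda_t x_{tj}$, thinned by the event that an item of $j$ is present.

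The analysis then proceeds good‑by‑good. Each good $j$'s inventory is a continuous‑time Markov chain driven by production (rate $\mu_j$), per‑item perishing (rate $d_j$), and consumption (rate at most $\beta\sum_t\lambda_t x_{tj}$ while nonempty); by the PASTA property (Poisson arrivals see time averages), a typical type‑$t$ arrival finds good $j$ nonempty with probability equal to the chain's stationary nonempty probability, which I would lower‑bound in terms of $\mu_j$, $d_j$, and the LP quantities. Summing the resulting per‑$(t,j)$ guarantees gives $\sum_{t,j}\lambda_t v_{tj} x_{tj}$ scaled by $\min_j \beta\cdot(\text{nonempty probability of good }j)$; the worst case pins each $\sum_t\lambda_t x_{tj}$ at its LP upper bound and reduces the instance to a one‑parameter family (indexed by the perishing‑to‑production ratio), and optimizing the remaining expression over $\beta$ yields the constant $1-1/\sqrt{e}=1-e^{-1/2}$. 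The competitive ratio then follows from $\optoff\le\mathrm{LP}$.

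I expect the main obstacle to be the interplay between the first and third steps: the stationary nonempty probability of a good necessarily degrades as $d_j$ grows, so a good‑side constraint that is merely valid (e.g.\ $\sum_t\lambda_t x_{tj}\le\mu_j$) is too loose — it would credit the LP for good‑$j$ matches that neither the prophet nor the algorithm can realize when items perish quickly, breaking the ratio. The resolution is to use a good‑side constraint that already incorporates $d_j$ (bounding the match rate essentially by the rate at which a freshly produced item is still present when an interested buyer arrives) and then to verify that the $\beta$‑thinned per‑good rate compares to \emph{that} quantity with ratio $1-1/\sqrt{e}$, uniformly in $d_j/\mu_j$ — this is where the optimization over $\beta$ actually has to be carried out carefully. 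A secondary point is the cross‑good coupling: a single buyer's target is shared among goods, so one must argue, via PASTA together with the independence of the sampling coins from the inventory states, that conditioning on a buyer's arrival and chosen target does not bias the inventory chains, and that simultaneous contention across goods does not spoil the per‑good stationary estimates.
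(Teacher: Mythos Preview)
Your proposed algorithm does not achieve the stated ratio, and the gap is not the LP but the algorithm itself. Consider a good $i$ with high perishing rate, so that $\Pr[i\in P]=1-e^{-\lambda_i/\mu_i}$ is tiny. Your algorithm targets $i$ with probability $\beta\,x_{ti}$ and sells only if $i$ is available; hence the per-pair sell rate is $\lambda_t\cdot\beta x_{ti}\cdot\Pr[i\in A]$, and the ratio to the LP contribution $\lambda_t x_{ti}$ is $\beta\cdot\Pr[i\in A]\le\beta\cdot\Pr[i\in P]\to 0$. No global $\beta\le 1$ can make $\min_i\beta\Pr[i\in A]\ge 1-1/\sqrt{e}$. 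Your proposed remedy of tightening the good-side LP constraint cannot help: the needed inequality $\beta\Pr[i\in A]\ge 1-1/\sqrt{e}$ is independent of the LP, and moreover no valid constraint of the form $\sum_t\lambda_t x_{ti}\le g(\mu_i,d_i)$ with $g$ strictly below the production rate exists in general (if buyer arrival rate is huge, the prophet sells essentially every produced item, regardless of~$d_i$).

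What is actually required is a \emph{per-good} rescaling of the targeting probability by $1/\Pr[i\in P]$, which is exactly what the paper does: each \emph{available} good $i$ proposes to an arriving buyer $j$ independently with probability $q_{ij}=x_{ij}/(\gamma_j\cdot\Pr[i\in P])$ (valid by Constraint~\eqref{eqn:ec-22-constraint}). But note that $\sum_i q_{ij}$ can exceed $1$, so one cannot simply ``pick a single target good'' with these probabilities; instead the paper lets all available goods propose independently and then applies the single-choice CRS of \Cref{thm:single_choice_CRS} to the (correlated) proposal set~$R_j$. The $(1-1/\sqrt{e})$ factor then emerges from showing $\Pr[S\cap R_j\neq\emptyset]\ge(1-1/\sqrt{e})\sum_{i\in S}x_{ij}/\gamma_j$ for every $S$, which in turn requires a stochastic-dominance coupling to an independent ``relaxed'' process (since availabilities across goods are genuinely correlated under the algorithm), combined with the single-good bound $\Pr[i\in\tilde A]/\Pr[i\in P]\ge 1/2$ and the convexity inequality $1-e^{-y/2}\ge(1-e^{-1/2})y$ for $y\in[0,1]$. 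Your intuition that cross-good coupling is the secondary obstacle is thus inverted: in the paper's (correct) algorithm it is the main technical step, whereas the scaling issue you flagged as ``the linchpin'' is handled not by the LP but by the per-good weights~$w_i$ in the proposal probabilities.
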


Finally, we focus our attention on \emph{matroid-constrained} stationary prophet inequalities. For this problem, Theorem~\ref{thm:main_thm} gives a $\frac{1}{2}(1-1/e)$-competitive algorithm. Echoing a growing interest in the question of efficiently approximating the optimum \emph{online} algorithm, we show that (slightly) better approximation is achievable by a polytime algorithm by extending our approach.

\subsection{Technical Overview}

Our starting point is the observation that some prior algorithms for single/multi-good SPI  \cite{collina2020dynamic,kessel2022stationary} can be interpreted as repeatedly applying offline contention resolution schemes when buyers arrive.
In hindsight, this is natural: by the central PASTA property from queueing theory (Lemma~\ref{pasta}), an arriving buyer observes a set of available items drawn from the stationary distribution. 
So, intuitively, ``proposing'' each available good to the buyer independently with probability derived from some LP relaxation, and then applying a CRS to choose proposed good(s) to sell should yield high average reward.

While the above approach sounds simple enough, it does not quite work as stated, since goods' availabilities are not independent, due to contention of different goods for previously-arrived buyers. 
Nonetheless, via stochastic coupling arguments (see Lemma~\ref{lem:stochastic-dominance}), for unit-demand buyers in both single-good and multi-good problems, one can couple the goods' availabilities in this process and those in a process in which goods face no contention, resulting in independent availability. This allowed \cite{collina2020dynamic,kessel2022stationary} to effectively appeal to contention resolution schemes, though sub-optimal ones, and not used in a black-box manner, as the proposals' distribution only \emph{dominates} an independent distribution.
In contrast, our improved multi-good algorithm of Theorem~\ref{thm:multi-good-intro} does rely on contention resolution schemes in a black-box manner, but explicitly relies on optimal CRSes for single items subject to \emph{dependent distributions} (Theorem~\ref{thm:single_choice_CRS}).

Our main divergence from prior work concerns combinatorially-constrained buyers. 
Here, each buyer may be sold items of \emph{multiple} different goods, which can create positive correlations between different items' availabilities, resulting in poor balance ratio of any CRS for the obtained distribution \cite{dughmi2019outer}.
To overcome this challenge, rather having buyers get proposals from available items, we let them get proposals from \emph{present} items -- i.e., items that have not perished, but may have been sold. 
 This then gives us independent distributions (as presence is independent across goods), and allows us to use an optimal CRS for the constraint family in a black-box manner.
Following this step, we then sell the buyer the subset of items selected by the CRS that are also available.
Our lower bounds on our algorithm's average reward (for linear objectives) then follow from linearity of expectation, once we lower bound the probability of a selected (present) item to also be available, which is the crux of the analysis.
For this last step, we essentially reduce to the analysis of the single-item case in \cite{kessel2022stationary}.
Thus, we obtain our main result of Theorem~\ref{thm:main_thm} from a ``two-pronged'' reduction: reducing to black-box use of optimal CRS on the one hand and to the single-good problem on the other.

Our approximation of the optimum online algorithm for matroid-constrained buyers follows the above approach, with the following twist: there, we may afford to propose goods more frequently (by virtue of additional LP constraints), and so our algorithm
requires an extension of CRS for points slightly outside the matroid polytope  (Lemma~\ref{lem:extended-crs}).

Our analysis for submodular objectives is more intricate, as linearity of expectation is no longer applicable. Instead, we lower bound the expected marginal contribution of each element to the set selected (and then to the subset sold) of the CRS, relating it to the multilinear extension of the selling rates $\vec{x}$ that approximately maximize the multlinear extension (using known constant-approximation for such objectives subject to solvable polytopes). 
As the multilienar extension provides a constant approximation for the concave extension, or the highest-valued distribution with given marginals, this yields a constant-approximate SPI, with the constant again depending linearly on the best balance ratio of a CRS for the underlying constraint.

Finally, our asymptotically tight lower bound of Theorem~\ref{thm:main_thm} similarly builds on focusing on allocation to present items, and then obtains its bound by known connections between the highest balance ratio of a CRS and the correlation gap.

\section{Preliminaries}
\label{sec:prelims}

\subsection{Problem statement}\label{sed:def}
In the stationary prophet inequality problem (SPI), a seller wishes to sell items of $n$ types of goods $\calG$ to $m$ types of buyers $\calB$, while (approximately) maximizing the seller's average gain over an infinite time horizon. 
Buyers and items arrive and depart according to independent Poisson point processes, with arrival and departure times only observed by the seller as they occur, as follows.

\underline{\textbf{Items}} of good $i \in \calG$ are homogeneous, supplied according to a Poisson process with rate $\lambda_i>0$, and perish at an exponential rate $\mu_i > 0$.
Thus, in any interval $[t, t + \Delta]$, in expectation $\lambda_i \cdot \Delta$ items of good $i$ are supplied, and an item that is supplied at time $t'$ perishes at a time $t''\sim t'+\Exp(\mu_i)$. 
An item is \emph{present} if it has been supplied but has not yet perished, and a present item is also \emph{available} if it has not been sold to a buyer. 
Similarly, a \emph{good} $i\in \calG$ is present or available if an item of said good is present or available, respectively.
We denote the sets of present and of available goods at time $t>0$ by $P^t\subseteq \calG$ and $A^t\subseteq \calG$, respectively.
We similarly denote by $P=P^\infty$ and $A=A^{\infty}$ the ``stationary'' set of present and available goods.

\underline{\textbf{Buyers}} of type $j\in \calB$ arrive according to a Poisson process with rate $\gamma_j > 0$. 
Each buyer type $j \in \calB$ offers a value for sets of items,  given by valuation function $f_j:\calG\to \mathbb{R}_{\geq 0}$. In the basic case, $f_j$ is linear, $f_j(S) = \sum_{i\in S}v_{ij}$.
Upon arrival of a buyer of type $j\in \calB$, the seller must irrevocably decide which available set of goods $S\subseteq \calG$ to sell an item of to the buyer,
where $S$ must belong to some known downward-closed family $\calF_j\subseteq 2^{\calG}$ (e.g., independent sets in matroids, matchings in graphs, etc). The seller then gains $f_j(S)$ value, and the buyer immediately departs.

\underline{\textbf{The seller}} has as objective the maximization of the average infinite-time reward on the input SPI instance $\calI$. 
We measure the algorithm's reward in terms of (i) its competitive ratio, i.e., its approximation of the average reward $\optoff(\calI)$ of the optimal offline algorithm (the ``prophet''), which knows the arrival and departure times up front or (ii) its approximation of the average reward $\opton(\calI)$  of the optimal online algorithm, which does not know future arrival and departure times, but is computationally unbounded.  

We refer to an SPI instance where buyers are constrained by $\calF$ as $\calF$-SPI for short.

\paragraph{Notation.} Throughout, for a set $\calU$ and vector $\vec x\in [0,1]^{\calU}$, we denote by $\calU(\vec x)$ a random set containing each element $i\in \calU$ independently with probability $x_i$. 

\paragraph{Queuing theory background.}
We recall the following fundamental PASTA property (``Poisson Arrivals See Time Averages''). 
\begin{lemma}[PASTA \cite{wolff1982poisson}]\label{pasta}
The fraction of Poisson arrivals who observe a stochastic process in a state is equal to the fraction of time the stochastic process is in this state, provided that the Poisson arrivals and the history of the stochastic process are independent.
\end{lemma}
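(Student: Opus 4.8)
The plan is to prove the standard PASTA theorem. Formalize the ``Poisson arrivals'' as a Poisson process $N=\{N(t)\}_{t\ge0}$ of rate $\lambda>0$ with ordered arrival epochs $T_1<T_2<\cdots$, the ``stochastic process'' as a process $\{X(s)\}_{s\ge0}$ on some measurable state space, fix a set of states $U$, and set $Z(s):=\mathbbm{1}[X(s)\in U]$. The hypothesis that the arrivals and the history of the process are independent is formalized as the \emph{lack of anticipation assumption} (LAA): for every $t\ge0$ the future increments $\{N(t+u)-N(t):u\ge0\}$ are independent of $\{X(s):s\le t\}$. Crucially, the LAA does \emph{not} prevent $X$ from depending on \emph{past} arrivals — exactly what is needed here, since buyer arrivals change the configuration of available items, but only after they occur, so the present/available configuration observed up to time $t$ is independent of buyers arriving after $t$.

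The heart of the argument is the identity
\begin{equation*}
\E\big[\#\{k:T_k\le t,\ X(T_k)\in U\}\big]\;=\;\lambda\cdot\E\Big[\int_0^t Z(s)\,ds\Big],\qquad t\ge0.\tag{$\star$}
\end{equation*}
I would prove $(\star)$ by a Riemann-sum argument: partition $[0,t]$ into $K$ slabs of width $\delta=t/K$; the chance of an arrival in slab $k=((k-1)\delta,k\delta]$ is $\lambda\delta+o(\delta)$, and by the LAA this event is independent of $Z((k-1)\delta)$, so the expected number of slab-$k$ arrivals that see $U$ is $(\lambda\delta+o(\delta))\Pr[X((k-1)\delta)\in U]+o(\delta)$, the last $o(\delta)$ bounding the contribution of $Z$ changing within the slab. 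Summing over $k$ and letting $K\to\infty$ sends the right-hand side to $\lambda\int_0^t\Pr[X(s)\in U]\,ds=\lambda\,\E[\int_0^t Z(s)\,ds]$ by Fubini. (Equivalently: $N(s)-\lambda s$ is a martingale for $\mathcal F_s=\sigma(X(u),N(u):u\le s)$ — this \emph{is} the LAA — and $Z(s^-)$ is $\mathcal F$-predictable, so $\int_0^tZ(s^-)\,dN(s)-\lambda\int_0^tZ(s)\,ds$ is a mean-zero martingale; under the LAA $Z(T_k^-)=Z(T_k)$ a.s.\ and left limits are immaterial for the Lebesgue integral, giving $(\star)$.)

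Dividing $(\star)$ by $\E[N(t)]=\lambda t$ shows that, for every fixed horizon $t$, the expected fraction of the first $N(t)$ arrivals that see $U$ equals the expected fraction of time in $[0,t]$ spent in $U$. To upgrade this to the almost-sure long-run \emph{frequencies} in the statement, let $t\to\infty$ and combine: (i) $N(t)/t\to\lambda$ a.s.\ (strong law for Poisson processes); (ii) the time average $\tfrac1t\int_0^tZ(s)\,ds\to p_U$ a.s.; and (iii) the arrival average $\tfrac1{N(t)}\sum_{k\le N(t)}Z(T_k)\to p_U$ a.s. In this paper's setting $X$ is the positive-recurrent, ergodic continuous-time Markov chain of present/available configurations, so (ii)--(iii) hold with $p_U$ the stationary probability of $U$, via the ergodic theorem for Markov chains or a regenerative/renewal-reward decomposition at returns to a fixed state; $(\star)$ then forces the two deterministic limits to coincide.

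I expect step (iii) — convergence of the \emph{arrival} average, and the fact that it converges to the \emph{same} value as the time average — to be the main obstacle, since $\{X(T_k)\}_k$ is not i.i.d.\ and conditioning on the epochs $T_k$ interacts with $X$. The clean fix is a regenerative structure: pick regeneration times for $X$ that are stopping times, so that by the LAA (strong Markov / optional sampling) they cut the arrival stream into i.i.d.\ cycles, reducing (iii) to the classical SLLN, after which $(\star)$ pins down the common limit. Everything before this — the identity $(\star)$ and the per-horizon equality of expected fractions — is routine once the LAA is phrased correctly.
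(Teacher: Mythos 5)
The paper offers no proof of this lemma: it is imported verbatim as a known result of Wolff \cite{wolff1982poisson}, so there is nothing paper-internal to compare against. Your sketch is a faithful reconstruction of the standard argument --- lack of anticipation, the compensator identity $\E[\int_0^t Z(s^-)\,dN(s)] = \lambda\,\E[\int_0^t Z(s)\,ds]$, then an ergodic upgrade --- and it is essentially correct. Two remarks. First, your parenthetical claim that ``under the LAA $Z(T_k^-)=Z(T_k)$ a.s.''\ is false in exactly the application at hand: a buyer's own arrival changes the availability configuration at time $T_k$, so $Z$ genuinely jumps at arrival epochs. This does not break anything, because PASTA (and every use of it in this paper, e.g.\ the selling-rate computations) concerns the state \emph{observed} by the arrival, i.e.\ the left limit $Z(T_k^-)$, and the martingale identity is stated with the predictable integrand $Z(s^-)$; the left limits are then immaterial only inside the \emph{Lebesgue} integral $\int_0^t Z(s)\,ds$, not at the arrival epochs themselves. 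You should simply drop that sentence. Second, you flag step (iii) --- convergence of the arrival average to the same limit as the time average --- as the main obstacle and propose a regenerative decomposition; that works, but it is heavier than necessary. Wolff's own route is to apply a martingale strong law to $Y(t)=\int_0^t Z(s^-)\,dN(s)-\lambda\int_0^t Z(s)\,ds$ (bounded jumps, quadratic variation growing at most linearly), giving $Y(t)/t\to0$ a.s.; combined with $N(t)/t\to\lambda$ and the a.s.\ convergence of the time average (which for the ergodic birth--death chains here is standard), the arrival average is forced to the same limit with no separate regeneration argument. So the proposal is sound, with one removable misstatement and one step that can be streamlined.
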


The above implies, 
for example, that the fraction of arriving buyers observing that $i$ is present is precisely $\Pr[i\in P]$. The following lemma asserts that this latter probability, $\Pr[i\in P]$, upper bounds the probability of $i$ being present at any given time $t\geq 0$ during the process. For a proof of this standard fact about birth-death processes, see e.g., 
\cite[Proposition 9.2.4]{ross1995stochastic}.

\begin{prop}\label{monotonicity of BD Processes}
$\Pr[i\in P^t]$ is monotonically increasing in $t\geq 0$, and in particular for any $t\geq 0$, $$\Pr[i\in P^t]\leq \Pr[i\in P^\infty]=\Pr[i\in P].$$
\end{prop}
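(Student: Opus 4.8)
The plan is to bypass the general birth--death machinery and reduce to a direct Poisson computation. Fix a good $i\in\calG$ and note that the items of good $i$ present at a given time $t\ge 0$ are exactly those whose supply time $s\le t$ satisfies (lifetime) $>t-s$; since lifetimes are i.i.d.\ $\Exp(\mu_i)$ and independent of the supply process, each item supplied at time $s$ is retained at time $t$ independently with probability $e^{-\mu_i(t-s)}$. The key step is then the thinning/marking theorem for Poisson processes: the supplies of good $i$ in $[0,t]$ form a rate-$\lambda_i$ Poisson process, and independently marking each point ``present at $t$'' with the location-dependent probability $e^{-\mu_i(t-s)}$ shows that the number of present items of good $i$ at time $t$ is $\mathrm{Poisson}(\rho_i(t))$ with
\[
\rho_i(t)\;=\;\int_0^t \lambda_i\, e^{-\mu_i(t-s)}\,ds\;=\;\frac{\lambda_i}{\mu_i}\bigl(1-e^{-\mu_i t}\bigr).
\]
Hence $\Pr[i\in P^t]=\Pr[\mathrm{Poisson}(\rho_i(t))\ge 1]=1-e^{-\rho_i(t)}$.

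From here the claim is immediate: $\rho_i(t)$ is strictly increasing in $t$ (its derivative is $\lambda_i e^{-\mu_i t}>0$), so $t\mapsto 1-e^{-\rho_i(t)}$ is increasing, which gives the monotonicity. Letting $t\to\infty$ gives $\rho_i(t)\to \lambda_i/\mu_i$, so $\Pr[i\in P^t]\le \lim_{t\to\infty}\Pr[i\in P^t]=1-e^{-\lambda_i/\mu_i}$; and since the stationary number of present items of good $i$ is $\mathrm{Poisson}(\lambda_i/\mu_i)$ (the stationary law of the $M/M/\infty$ queue counting present items), this limit is exactly $\Pr[i\in P^\infty]=\Pr[i\in P]$, as desired.

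I do not expect a real obstacle; the only point that warrants a sentence of care is the independence needed to invoke the thinning theorem, i.e.\ that the ``present at $t$'' indicators are mutually independent and independent of the supply point process — which holds because item lifetimes are i.i.d.\ and drawn independently of supply times. As an alternative that matches the cited birth--death fact, one can argue by coupling instead: the count of present items of good $i$ is a birth--death chain started from the empty (minimal) state, and for $t_1\le t_2$ one couples the chain run for time $t_2$ with the chain run for time $t_1$ (via the Markov property and monotonicity of birth--death chains in their initial state) so that the former dominates the latter pointwise, yielding $N^{t_1}\preceq_{\mathrm{st}}N^{t_2}$ and in particular $\Pr[N^{t_1}\ge 1]\le\Pr[N^{t_2}\ge 1]$.
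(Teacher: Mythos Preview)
Your proof is correct. The paper does not actually prove this proposition; it simply cites \cite[Proposition~9.2.4]{ross1995stochastic} as a standard fact about birth--death processes. Your approach is therefore more self-contained and genuinely different: rather than appealing to general birth--death monotonicity, you exploit the specific $M/M/\infty$ structure (linear death rates) and compute $\Pr[i\in P^t]$ exactly via Poisson thinning, obtaining the explicit transient formula $1-\exp\bigl(-\tfrac{\lambda_i}{\mu_i}(1-e^{-\mu_i t})\bigr)$, from which monotonicity and the limit are immediate. This buys you a closed-form expression (consistent with the stationary value $1-e^{-\lambda_i/\mu_i}$ used elsewhere in the paper, e.g.\ Proposition~\ref{lem:Pr-available-exact}), at the cost of being specific to this model rather than covering arbitrary birth--death chains. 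Your alternative coupling sketch is closer in spirit to the cited textbook result, which treats general birth--death processes started from the minimal state.
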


\subsection{Further background on CRS}
A key tool we use in our algorithms are contention resolution schemes of high balance ratio. 
The highest possible such balance ratio of any CRS for a constraint polytope $\calP_\calF$ is known to be tightly related to the \emph{correlation gap} of this polytope, which is the worst-case ratio between the value of the best correlated and independent distributions satisfying a given set of marginals in said polytope. 
\begin{prop}[\cite{chekuri2011multi,agrawal2012price}] \label{thm:CRS-duality}
If the highest balance ratio of a CRS for family of constraints $\mathcal C$ over the set of elements $\mathcal G$ is $\alpha$, then for any $\epsilon >0$ there exists a constraint $\mathcal F \in \mathcal C$, vector $\vec x \in P_\calF\cap [0,\eps]^{|\mathcal G|}$ and weight vector $\vec y$~s.t.
$$ \frac{\mathbb E \left[\max_{S \subseteq R(\vec x), S\in \calF}  \sum_{i\in S} y_i \right]}{\sum_{i\in \calG} y_i\cdot x_i} \leq \alpha .$$
\end{prop}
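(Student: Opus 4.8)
The plan is to obtain the inequality from strong LP duality applied to a \emph{single} point of the polytope, together with two reductions: from global CRSes to point-specific ones, and from arbitrary points to points with tiny coordinates. For a fixed constraint $\calF \in \mathcal C$ over ground set $\calG$ and a fixed $\vec x \in \calP_\calF$, let $c_\calF(\vec x)$ denote the best balance ratio achievable by a CRS that is only required to behave well on this one input $\vec x$. I would first argue $\alpha = \inf_{\calF \in \mathcal C}\inf_{\vec x \in \calP_\calF} c_\calF(\vec x)$: the ``$\le$'' direction is the trivial max--min $\le$ min--max, plus the observation that a global CRS is in particular point-specific; the ``$\ge$'' direction follows by assembling a global CRS for $\calF$ that, on input $(\vec x, R(\vec x))$, runs a (near-)optimal $\vec x$-specific CRS, so that its worst-case balance is $\inf_{\vec x} c_\calF(\vec x)$ (a measurable-selection remark, or restriction to the dense set of rational $\vec x$, makes this legitimate). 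Hence it suffices to exhibit, for some $\calF, \vec x, \vec y$, a value of the displayed ratio equal to $c_\calF(\vec x)$, and then pick $\calF, \vec x$ with $c_\calF(\vec x)$ equal to (an $\eps$-approximation of) the infimum $\alpha$.

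The value $c_\calF(\vec x)$ is itself a linear program. Writing $q_R := \Pr[R(\vec x) = R]$ for $R\subseteq\calG$ and introducing joint probabilities $w_{R,S}\ge 0$ for $R \subseteq \calG$ and $S \in \calF$ with $S \subseteq R$, this LP reads: maximize $c$ subject to $\sum_{S} w_{R,S} = q_R$ for every $R$, and $\sum_{R}\sum_{S\ni i,\, S\subseteq R} w_{R,S} \ge c\,x_i$ for every $i$ with $x_i>0$. It is feasible ($c=0$, always output $\emptyset\in\calF$) and bounded ($c\le 1$), so strong duality holds. Dualizing, the equality constraints give free multipliers $\beta_R$ and the balance constraints give $y_i \ge 0$; the $c$-column forces $\sum_i x_i y_i = 1$ and the $w_{R,S}$-columns force $\beta_R \ge \sum_{i\in S} y_i$, so at optimum $\beta_R = \max_{S\in\calF,\,S\subseteq R}\sum_{i\in S} y_i$. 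Collecting terms,
\[
 c_\calF(\vec x) \;=\; \min_{\vec y\ge 0,\ \sum_i x_i y_i = 1} \ \E\Big[\max_{S\subseteq R(\vec x),\,S\in\calF}\ \sum_{i\in S} y_i\Big] \;=\; \min_{\vec y\ge 0,\ \vec y\neq\vec 0}\ \frac{\E\big[\max_{S\subseteq R(\vec x),\,S\in\calF}\sum_{i\in S} y_i\big]}{\sum_{i\in\calG} y_i x_i},
\]
using homogeneity of the ratio in $\vec y$. So, for the dual-optimal $\vec y$, the displayed ratio equals $c_\calF(\vec x)$; combined with the previous paragraph, choosing $\calF, \vec x$ with $c_\calF(\vec x)\le\alpha$ (as $\alpha$ is the infimum of these values) and the corresponding $\vec y$ yields the claim --- except for the requirement $\vec x\in[0,\eps]^{|\calG|}$.

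To enforce small coordinates I would use a cloning reduction. Given any witness $(\calF, \vec x, \vec y)$, replace each element $i$ by $k$ parallel copies and let $\calF^{(k)}$ consist of the sets that contain at most one copy of each element and whose projection onto $\calG$ lies in $\calF$; the standard families ($k$-uniform/partition/general matroids, matchings, knapsack, arbitrary downward-closed families, \dots) are closed under this parallel extension, so $\calF^{(k)}\in\mathcal C$. Set each copy's active probability to $p_i := 1-(1-x_i)^{1/k}$, which is $\le\eps$ once $k$ is large, so that ``some copy of $i$ is active'' still has probability $x_i$; put $y_{i_\ell} := y_i$ on every copy. Then the projection of the active set is distributed exactly as $R(\vec x)$, so the numerator is unchanged, while the denominator becomes $\sum_i k p_i\, y_i \ge \sum_i x_i y_i$ because $k\big(1-(1-x_i)^{1/k}\big)\ge x_i$. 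Hence the cloned instance has ratio no larger than the original and all coordinates $\le\eps$, completing the argument (and confirming, as expected, that tiny-coordinate instances are the extremal ones, matching the correlation-gap viewpoint recalled before the proposition).

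The genuinely delicate points are the two reductions, not the dualization: first, verifying that the best global balance ratio coincides with the pointwise infimum $\inf_{\vec x} c_\calF(\vec x)$ (a benign but non-trivial measurability/selection issue); and second, the cloning step --- checking closure of $\mathcal C$ under parallel extension for the family at hand, and checking that passing to $\calF^{(k)}$ with the above probabilities does not increase the ratio. The LP-duality core is otherwise routine bookkeeping.
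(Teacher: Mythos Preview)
The paper does not supply its own proof of this proposition; it is quoted as a known result from \cite{chekuri2011multi,agrawal2012price} and invoked as a black box in Section~3.3. Your LP-duality derivation of $c_\calF(\vec x)$ as the correlation-gap ratio is exactly the standard argument from those references and is carried out correctly. (Your measurable-selection worry in the first paragraph is unnecessary: by Definition~\ref{def:CRS} a CRS receives $\vec x$ as input, so the ``global'' balance ratio is already $\inf_{\vec x} c_\calF(\vec x)$ by definition.)

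There is, however, a concrete gap in the cloning step. With your choice $p_i = 1-(1-x_i)^{1/k}$ one has $k p_i \to -\ln(1-x_i)$ as $k\to\infty$, which exceeds $1$ whenever $x_i > 1-1/e$; hence $\vec p$ need not lie in $\calP_{\calF^{(k)}}$. For a concrete failure, take the rank-one uniform matroid: the parallel extension is again rank one, its polytope requires $\sum_\ell p_{i_\ell} = k p_i \le 1$, and your choice violates this for large $k$. The clean fix is to set $p_{i_\ell} := x_i/k$ instead. Then the denominator equals $\sum_i x_i y_i$ exactly, and the numerator can only \emph{decrease}: the projection of $R(\vec p)$ now has each element active independently with probability $1-(1-x_i/k)^k \le x_i$, so it is stochastically dominated by $R(\vec x)$, and the inner maximum is monotone in the active set. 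Membership $\vec p\in\calP_{\calF^{(k)}}$ follows by writing $\vec x$ as a convex combination of indicators $\mathbbm 1_F$, $F\in\calF$, and lifting each $F$ to $\calF^{(k)}$ by assigning each $i\in F$ to a uniformly random copy, which yields marginals $x_i/k$ on every copy and places $\vec p$ in the convex hull of $\calF^{(k)}$. (If the relaxation $\calP_\calF$ strictly contains the convex hull, this last lifting step requires the witness $\vec x$ to lie in the hull; one handles this either by defining $\calP_{\calF^{(k)}}$ compatibly with the chosen relaxation, or by noting that the LP-dual argument applies equally well over the tighter polytope.)
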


Finally, we require a CRS for rank-one uniform matroids with \emph{correlated distributions}, which in particular does not require independence across the events $\{i\in R\}_{i\in\calG}$.
\begin{thm}[\cite{feige2006maximizing}]\label{thm:single_choice_CRS}
    Let $\calD$ be a distribution over $2^\calG$ such that for all $S\subseteq \calG$,
    \begin{equation*}
        \Pr_{R\sim \calD}[S\cap R \neq \emptyset ] \geq \sum_{i\in S} \beta_i \cdot \Pr_{R\sim \calD}[i\in R].
    \end{equation*}
    Then there exists a poly-time algorithm that selects a subset $\pi(R)$ of $R\sim \calD$  of size at most one, satisfying 
    $$\Pr[i\in \pi(R)]\geq \beta_i \cdot \Pr[i\in R].$$
\end{thm}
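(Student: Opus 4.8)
The plan is to reduce the claim to the feasibility of a bipartite transportation problem and settle that feasibility by a Hall-type (equivalently, max-flow/min-cut) argument, from which the randomized selection rule is read off directly. Write $q_i:=\Pr_{R\sim\calD}[i\in R]$ and $p_R:=\Pr[\calD=R]$; since $|\calG|$ is finite, $\calD$ has finite support. It suffices to construct, for every $R$ in the support, numbers $z_{R,i}\ge 0$ (for $i\in R$) with $\sum_{i\in R}z_{R,i}\le 1$ such that $\sum_{R\ni i}p_R\,z_{R,i}\ge\beta_i q_i$ for every $i\in\calG$. Given such $z$, the algorithm is immediate: on realization $R$, output $i$ with probability $z_{R,i}$ and the empty set with the remaining probability; then $|\pi(R)|\le 1$ and $\Pr[i\in\pi(R)]=\sum_{R\ni i}p_R\,z_{R,i}\ge\beta_i q_i$, as required.

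To obtain such a family $z$, I would build a flow network with a source $s$, a sink $t$, one node per element $i$ and one node per realization $R$ in the support of $\calD$, with arcs $s\to i$ of capacity $\beta_i q_i$, uncapacitated arcs $i\to R$ for every $i\in R$, and arcs $R\to t$ of capacity $p_R$. A valid $z$ exists iff this network carries an $s$–$t$ flow of value $\sum_{i\in\calG}\beta_i q_i$ (the total capacity leaving $s$): given such a flow, set $z_{R,i}$ equal to the flow on arc $i\to R$ divided by $p_R$, so that the $R\to t$ capacities enforce $\sum_{i\in R}z_{R,i}\le 1$ and saturation of the $s\to i$ arcs gives $\sum_{R\ni i}p_R z_{R,i}=\beta_i q_i$. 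By max-flow/min-cut it suffices that every $s$–$t$ cut has capacity at least $\sum_i\beta_i q_i$. Because the arcs $i\to R$ are uncapacitated, any finite cut must keep, for every element $i$ left on the source side, all realizations $R\ni i$ on the source side too; hence a finite cut is specified by the set $T\subseteq\calG$ of elements on the source side, and its capacity equals $\sum_{i\notin T}\beta_i q_i+\sum_{R:\,R\cap T\ne\emptyset}p_R=\sum_{i\notin T}\beta_i q_i+\Pr_{R\sim\calD}[R\cap T\ne\emptyset]$. This is at least $\sum_{i\in\calG}\beta_i q_i$ exactly when $\Pr_{R\sim\calD}[R\cap T\ne\emptyset]\ge\sum_{i\in T}\beta_i q_i$, which is precisely the hypothesis with $S=T$. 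Thus a saturating flow exists, and the construction goes through.

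For the running time: when the support of $\calD$ has polynomial size the network above is of polynomial size, so a maximum flow — and hence the rule $z$ — is computed in polynomial time, and sampling from the distribution $z_{R,\cdot}$ on a realized $R$ is trivial. In the general case (where $\calD$ is presented only implicitly) one still obtains a polynomial-time implementation by standard convex-optimization machinery: the separation oracle needed to certify the cut condition is a submodular minimization ($T\mapsto\Pr_{R\sim\calD}[R\cap T\ne\emptyset]-\sum_{i\in T}\beta_i q_i$ is submodular), so the associated linear program can be handled via the ellipsoid method; alternatively, one exploits the structure of $\calD$ in the application at hand. I would relegate this to a remark, since it is not needed in the settings where we invoke the theorem.

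The essentially only obstacle is this last, efficiency, point — converting the clean existence proof into a genuinely polynomial-time procedure when $\calD$ is given implicitly; the existence itself is almost free once the min-cut is identified with the theorem's hypothesis. A minor point to get right is the bookkeeping that a finite cut is indeed described by a subset $T$ of \emph{elements} (so that ranging over cuts coincides with ranging over the sets $S$ in the hypothesis), which hinges on making the $i\to R$ arcs uncapacitated.
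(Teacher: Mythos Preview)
The paper does not actually prove this theorem; it is stated with a citation to Feige~\cite{feige2006maximizing} and used as a black box. So there is no ``paper's own proof'' to compare against.

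That said, your argument is correct and is essentially the standard proof of this type of result. The reduction to a transportation problem and the max-flow/min-cut verification are both carried out cleanly: the key observation that a finite cut is determined by a subset $T\subseteq\calG$ of elements (forced by the uncapacitated $i\to R$ arcs), with capacity $\sum_{i\notin T}\beta_i q_i+\Pr[R\cap T\neq\emptyset]$, is exactly right, and the identification of the cut condition with the theorem's hypothesis is immediate. Your treatment of efficiency is also appropriate for how the result is used in this paper: in the application (Section~\ref{sec:multi-good}), the distribution $\calD$ is the stationary distribution of the proposal set $R_j$, for which the hypothesis is verified via Lemma~\ref{lem:stochastic dominance}, and one can indeed separate over the cut constraints via submodular minimization (since $T\mapsto\Pr[R\cap T\neq\emptyset]$ is a coverage-type function). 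Nothing is missing.
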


Finally, we also need the notion of \emph{monotone} CRS for our results for SPI with submodular objectives (see \Cref{sec:submodular} for background on submodularity).
\begin{Def}
    A CRS $\pi$ is \emph{monotone} if for every $S\subseteq T \subseteq \calG$,
$$\Pr[i\in \pi(S) \mid i\in S] \geq \Pr[i\in \pi(T) \mid i\in T].$$
\end{Def}

\subsection{The (ex-ante) relaxations}
\label{sec:prior-LP-benchmarks}

So far, previous works \cite{collina2020dynamic,aouad2020dynamic,kessel2022stationary} proposed LP benchmarks for the multi-good (unit-demand buyers) stationary prophet inequality problem.
We generalize these LPs to buyers with combinatorial demands in this section (and to buyers with submodular valuations in \Cref{sec:submodular}). 

\begin{lemma}
\label{lem:basic-constraints}
Let $x_{ij}$ be the rate at which some algorithm $\calA$ sells items of good $i \in \calG$ to buyers of type $j \in \calB$. Then $\mathbf{x} := (x_{ij})_{i \in \calG, j \in \calB}$ with $\vec x_j := (x_{ij})_{i\in \calG}$ satisfies the following constraints: 
\begin{align}
    \sum_{j \in \calB} x_{ij} & \leq \lambda_i && \forall i \in \calG \label{eqn:flow-constraint-seller} \\
    x_{ij} & \geq 0 && \forall i\in \calG, j\in \calB \label{eqn:positivity} \\
    x_{ij} & \leq \gamma_j \cdot \left(1-\exp\left(-\frac{\lambda_i}{\mu_i}\right)\right) && \forall i \in \calG, j \in \calB \label{eqn:ec-22-constraint} \\
   \frac {\mathbf{x}_{j}}{\gamma_j} & \in \calP_\calF. && \forall j \in \calB \label{eqn:flow-constraint-buyer-comb}
\intertext{If $\calA$ is an \emph{online} algorithm, then $\vec{x}$ also satisfies the following constraint:}
    x_{ij} & \leq \gamma_j \cdot \left( \frac{\lambda_i - \sum_{\ell \in \calB} x_{i\ell}}{\mu_i} \right). && \forall i\in \calG,j\in \calB \label{ec20-constraint}
\end{align}
\end{lemma}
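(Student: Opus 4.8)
The plan is to establish each constraint by elementary long-run rate (flow-conservation) accounting together with the PASTA property (\Cref{pasta}). Fix a good $i\in\calG$ and a time interval $[0,T]$; let $S_i(T),C_{ij}(T),D_i(T)$ be the numbers of items of good $i$ that are, respectively, supplied during $[0,T]$, sold to type-$j$ buyers during $[0,T]$, and perished while still available during $[0,T]$, and let $N_i(t)$ be the number of available items of good $i$ at time $t$. Then $x_{ij}=\lim_{T\to\infty}\E[C_{ij}(T)]/T$ (assuming these long-run averages exist; otherwise pass to a convergent subsequence) and $\E[S_i(T)]/T=\lambda_i$, so \eqref{eqn:positivity} is immediate. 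Classifying each item of good $i$ that is available at time $0$ or supplied during $[0,T]$ as sold during $[0,T]$, perished-while-available during $[0,T]$, or still available at time $T$ gives $N_i(0)+S_i(T)=\sum_{j\in\calB}C_{ij}(T)+D_i(T)+N_i(T)$. Since an available item of good $i$ perishes at rate $\mu_i$, we have $\E[D_i(T)]=\mu_i\int_0^T\E[N_i(s)]\,ds$; and since $N_i(t)$ is dominated by the occupancy of the corresponding $M/M/\infty$ queue, $\E[N_i(t)]\le\lambda_i/\mu_i$, so the boundary terms are $O(1)$ and vanish after dividing by $T$. Taking $T\to\infty$ yields the conservation identity $\lambda_i=\sum_{j\in\calB}x_{ij}+\mu_i\bar N_i$, where $\bar N_i:=\lim_{T\to\infty}\frac{1}{T}\int_0^T\E[N_i(s)]\,ds\ge 0$; in particular \eqref{eqn:flow-constraint-seller} holds and $\bar N_i=(\lambda_i-\sum_{\ell\in\calB}x_{i\ell})/\mu_i$.

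For \eqref{eqn:ec-22-constraint}, note that good $i$ is sold to a type-$j$ buyer only at the arrival of a type-$j$ buyer who finds good $i$ available, hence present. The present-item process of good $i$ is a deterministic function of good $i$'s supply and perish Poisson processes alone, so it is independent of the type-$j$ arrival process; PASTA then gives that the long-run fraction of type-$j$ arrivals observing $i\in P$ equals $\Pr[i\in P]=1-\exp(-\lambda_i/\mu_i)$ (the $M/M/\infty$ stationary probability of nonempty occupancy). As type-$j$ buyers arrive at rate $\gamma_j$ and each sale of good $i$ to such a buyer coincides with one of these arrivals observing $i\in P$, it follows that $x_{ij}\le\gamma_j(1-\exp(-\lambda_i/\mu_i))$. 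For \eqref{eqn:flow-constraint-buyer-comb}, enumerate the type-$j$ arrivals during $[0,T]$ and let $S_a\in\calF_j$ be the set of goods sold to arrival $a$ (feasible since $\calF_j$ is downward-closed and the seller is so constrained); then $\frac{1}{M_j(T)}\sum_a\mathbbm{1}_{S_a}$, where $M_j(T)$ is the number of type-$j$ arrivals during $[0,T]$, is a convex combination of the points $\{\mathbbm{1}_F\mid F\in\calF_j\}$, with $i$-th coordinate $C_{ij}(T)/M_j(T)\to x_{ij}/\gamma_j$ as $T\to\infty$. Since $\operatorname{convexhull}\{\mathbbm{1}_F\mid F\in\calF_j\}$ is closed and contained in the relaxation $\calP_{\calF_j}$, the limit $\vec x_j/\gamma_j$ lies in $\calP_{\calF_j}$.

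Finally, for the online constraint \eqref{ec20-constraint}, good $i$ is sold to a type-$j$ buyer only when at least one item of good $i$ is available at that buyer's arrival. When $\calA$ is online, its allocation decisions up to any time $t$ depend only on the history up to $t$; hence the history of the available-item process of good $i$ prior to time $t$ is independent of the type-$j$ arrivals at and after time $t$, which is exactly the lack-of-anticipation independence needed to invoke PASTA. Therefore the long-run fraction of type-$j$ arrivals seeing at least one available item of good $i$ equals the time-average of $\mathbbm{1}[N_i(s)\ge 1]$, giving $x_{ij}\le\gamma_j\cdot\lim_{T\to\infty}\frac{1}{T}\int_0^T\Pr[N_i(s)\ge 1]\,ds\le\gamma_j\cdot\bar N_i=\gamma_j(\lambda_i-\sum_{\ell\in\calB}x_{i\ell})/\mu_i$, using Markov's inequality and the conservation identity. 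I expect the main subtlety throughout to be the probabilistic bookkeeping: justifying existence of the long-run averages for an arbitrary adaptive algorithm (passing to convergent subsequences of time-averages if necessary), checking that boundary terms are $o(T)$, and isolating precisely the independence required by PASTA. That last point is also what makes \eqref{ec20-constraint} fail for offline algorithms, whose past allocations may depend on future arrivals, whereas \eqref{eqn:flow-constraint-seller}--\eqref{eqn:flow-constraint-buyer-comb} hold for any algorithm.
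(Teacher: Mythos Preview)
Your proof is correct and, for constraints \eqref{eqn:flow-constraint-seller}--\eqref{eqn:flow-constraint-buyer-comb}, essentially matches the paper's argument, just made more explicit (e.g., you spell out the empirical convex-combination argument for \eqref{eqn:flow-constraint-buyer-comb} that the paper compresses into one sentence). The notable difference is constraint \eqref{ec20-constraint}: the paper simply observes that this constraint is known for the single-good unit-demand problem from \cite{aouad2020dynamic} and holds here by restricting attention to the rates $(x_{ij})_{j\in\calB}$ for each fixed $i$. You instead give a self-contained derivation: the flow-conservation identity $\lambda_i=\sum_j x_{ij}+\mu_i\bar N_i$ (which you extract while proving \eqref{eqn:flow-constraint-seller}) pins down $\bar N_i$, then PASTA (with the lack-of-anticipation condition you correctly isolate as the place where ``online'' is used) together with Markov's inequality $\Pr[N_i\geq 1]\leq \E[N_i]$ yields $x_{ij}\leq \gamma_j\bar N_i$. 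This buys a proof that does not rely on an external reference and makes transparent exactly where the offline/online distinction enters; the paper's version is terser and leans on prior work. Your cautions about existence of long-run averages and $o(T)$ boundary terms are appropriate, and your $M/M/\infty$ domination handles them.
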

\begin{proof}
Constraints \eqref{eqn:flow-constraint-seller} and \eqref{eqn:positivity} are immediate, since Algorithm $\calA$ cannot sell a good $i\in \calG$ at a rate higher than $i$'s arrival rate, or at a negative rate. 
Next, as used by \cite{kessel2022stationary}, since a buyer of type $j$ can only be sold an item of good $i$ if $i$ is available, and hence present, Constraint \eqref{eqn:ec-22-constraint} follows from the PASTA property (Proposition~\ref{pasta}) and $\Pr[i\in P]=1-\exp\left(-\lambda_i/\mu_i\right)$, (see \Cref{lem:Pr-available-exact}). 
The (new) Constraint \eqref{eqn:flow-constraint-buyer-comb} follows from each arrival of buyer $j$ being sol a set in $\calF$, and so $\frac{\bf{x}_j}{\gamma_j}$ is a distribution over indicator vectors of sets in $\calF$, and thus belongs to the relaxed polytope $\calP_\calF$.
Finally, Constraint \eqref{ec20-constraint} holds for any online algorithm for unit-demand single-good SPI, as proven by \cite{aouad2020dynamic}, and so if we restrict our attention to rates $(x_{ij})_{j\in \calB}$, we find that this constraint holds for combinatorial-demand SPI as well.
\end{proof}

Thus, for any combinatorial stationary prophet inequality instance $\calI$, the above constraints define two relaxations, that are solvable by the ellipsoid method if $\calP_\calF$ is solvable (as assumed).
\begin{align*}
 \calP_\mathrm{off}(\calI) & := \left\{\mathbf{x}\in \mathbb{R}^{\calB\times \calG} \;\middle\vert\; \mathbf{x} \textrm{ satisfies } \eqref{eqn:flow-constraint-seller}-\eqref{eqn:flow-constraint-buyer-comb}\right\}, \\
 \calP_\mathrm{on}(\calI) & := \left\{\mathbf{x}\in \mathbb{R}^{\calB\times \calG} \;\middle\vert\; \mathbf{x} \textrm{ satisfies } \eqref{eqn:flow-constraint-seller}-\eqref{ec20-constraint}\right\}.
\end{align*}

\begin{cor}
\label{cor:RB-geq-OPT}
For all instances $\calI$ of the combinatorial stationary prophet inequality problem,
\begin{enumerate}
    \item $\max\left\{\sum_{i, j} v_{ij} \cdot x_{ij} \;\middle\vert\; \mathbf{x} \in \calP_{\mathrm{off}}(\calI)\right\} \geq \optoff(\calI)$, and
    \item $\max\left\{\sum_{i, j} v_{ij} \cdot x_{ij} \;\middle\vert\; \mathbf{x} \in \calP_{\mathrm{on}}(\calI)\right\}\geq \opton(\calI)$.
    \end{enumerate}
    Both programs on the LHS of the above are solvable in polynomial time if $\calP_\calF$ is a solvable polytope.
\end{cor}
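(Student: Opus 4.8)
The plan is to observe that \Cref{cor:RB-geq-OPT} is essentially an immediate consequence of \Cref{lem:basic-constraints} together with the fact that the expected average reward of any algorithm decomposes linearly over its selling rates. First I would fix, for Part~1, an offline algorithm $\calA$ whose average reward is within $\eps$ of $\optoff(\calI)$ (if an optimal offline algorithm exists one may take $\eps=0$; otherwise one lets $\eps\to 0$ at the end). For such an $\calA$, let $x_{ij}$ denote its long-run rate of selling items of good $i\in\calG$ to buyers of type $j\in\calB$, made precise as $x_{ij}:=\lim_{k\to\infty}\tfrac{1}{T_k}\,\E[\#\{\text{sales of good }i\text{ to a type-}j\text{ buyer during }[0,T_k]\}]$, where $(T_k)$ is a subsequence of times tending to infinity along which all $nm$ of these quantities converge; such a subsequence exists since the total per-unit-time sale count of good $i$ is bounded in expectation by $\lambda_i$.

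Next I would invoke \Cref{lem:basic-constraints}, which states exactly that the vector $\vec x=(x_{ij})$ of selling rates of an (offline) algorithm satisfies \eqref{eqn:flow-constraint-seller}--\eqref{eqn:flow-constraint-buyer-comb}, hence $\vec x\in\calP_{\mathrm{off}}(\calI)$. It then remains to evaluate the LP objective at $\vec x$. By linearity of expectation, for a linear valuation the total value collected by $\calA$ in $[0,T]$ equals $\sum_{i,j} v_{ij}$ times the number of sales of good $i$ to type-$j$ buyers in $[0,T]$; dividing by $T$ and taking $T=T_k\to\infty$ shows that the average reward of $\calA$ equals $\sum_{i,j} v_{ij}\,x_{ij}$. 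Therefore $\max\{\sum_{i,j} v_{ij} x_{ij}\mid \vec x\in\calP_{\mathrm{off}}(\calI)\}\geq \sum_{i,j} v_{ij}\,x_{ij}\geq \optoff(\calI)-\eps$, and letting $\eps\to 0$ gives Part~1. Part~2 is identical, taking $\calA$ to be a near-optimal \emph{online} algorithm, using the online portion of \Cref{lem:basic-constraints} (constraint \eqref{ec20-constraint}) to conclude $\vec x\in\calP_{\mathrm{on}}(\calI)$, and comparing against $\opton(\calI)$.

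For the final solvability claim: each of $\calP_{\mathrm{off}}(\calI)$ and $\calP_{\mathrm{on}}(\calI)$ is the intersection of polynomially many explicit halfspaces (\eqref{eqn:flow-constraint-seller}--\eqref{ec20-constraint}) with the product $\{\vec x\mid \vec x_j/\gamma_j\in\calP_\calF\ \forall j\in\calB\}$, for which a separation oracle is obtained by running the assumed separation oracle for $\calP_\calF$ on each block $\vec x_j/\gamma_j$; hence these polytopes are solvable, and linear optimization over a solvable polytope runs in polynomial time via the ellipsoid method. The step I expect to require the most care is the first one — defining a legitimate vector of ``selling rates'' for the optimal offline algorithm, which need not be stationary nor have genuine long-run averages, so the compactness/subsequence argument and the matching limiting interpretation of the average reward are where the subtlety lies; everything after that is bookkeeping. (The phrasing of \Cref{lem:basic-constraints} suggests the authors take the existence of such rates as given, in which case this corollary is truly immediate.)
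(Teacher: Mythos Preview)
Your proposal is correct and follows essentially the same approach as the paper's proof, which simply states that the inequalities follow directly from \Cref{lem:basic-constraints} and that solvability follows from $\calP_{\mathrm{off}}$ and $\calP_{\mathrm{on}}$ being intersections of polynomially many linear constraints with copies of $\calP_\calF$. Your version is more explicit about extracting well-defined selling rates via a subsequence argument and linking them to the average reward, whereas the paper treats this as given; but the substance is identical.
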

\begin{proof}
The inequalities follow directly from \Cref{lem:basic-constraints}. The polytime solvability follows from the fact that $\calP_\mathrm{off}$ and $\calP_\mathrm{on}$ consists of intersections of polynomial many linear constraints and constraint sets of copies of $\calP_{\calF}$. 
Hence, if $\calP_\calF$ is solvable then $\calP_\mathrm{off}$ and $\calP_\mathrm{on}$ are also solvable.
\end{proof}

\section{Additive Stationary Prophet Inequalities}\label{sec:combinatorial}


In this section, we introduce and analyze our main combinatorial-demand SPI algorithm under linear valuation functions. For notational simplicity, we assume that all buyers have the same combinatorial constraint over goods, $\calF\subseteq 2^{\calG}$, though our analysis extends seamlessly to the case of different constraints for different buyer types, with the competitiveness guarantees modified appropriately.
In what follows we let $\calP_\calF$ be some fixed relaxation of $\text{convexhull}\{\mathds{1}_F \mid F\in \calF\}$.

\subsection{The main algorithm}\label{sec:analysis OPToff Combinatorial}
Our combinatorial stationary prophet inequality algorithm, \Cref{alg:combinatorial}, works as follows.
It takes as input a vector $\mathbf{x}\in \mathbb{R}^{|\calG|\times |\calB|}$,\footnote{For concreteness, this vector can be $\arg\max\{\mathbf{v}\cdot \mathbf{x} \mid \mathbf{x}\in \calP_\mathrm{off}(\calI)\}$.} 
and some per-good scaling vector $\mathbf{w}\in \mathbb{R}^{\calG}$. It further uses a $c$-balanced CRS, $\pi$.
The algorithm then attempts to (approximately) follow the sell rate prescribed by $\mathbf x$ as follows. 
When a buyer of type $j\in \cal B$ arrives at time $t$, 
we have each present good \emph{propose} to buyer $j$ with probability $q_{ij}:= \frac{x_{ij}}{\gamma_j\cdot w_i}$ independently. Denote the set of proposers to buyer $j\in \calB$ by $R_j^t$. 
Then, to allocate \emph{feasible} subsets, we use a CRS $\pi$ for constraint $\calF$ and polytope $\calP_\calF$, and allocate an item of each available good in the selected subset $\pi(R_j^t)\subseteq R_j^t$ to the buyer.
\begin{algorithm}
	\begin{algorithmic}[1]
		\Require $\mathbf{x}\in \mathbb{R}^{|\calG|\times |\calB|}$, $\mathbf{w} \in \mathbb{R}^{|\calG|}$, a CRS $\pi$ for $\calP_\calF$
		\ForAll{arrivals of buyer of type $j \in \calB$ at time $t$}

		\State Let $R^t_j\subseteq \calG$ contain each present good $i \in \calG$ independently with probability $q_{ij} :=  \frac{x_{ij}}{\gamma_j \cdot w_i}$.\label{line:i-proposes-j-comb}
		
		\State Allocate an item of all available goods $i\in\pi(R^t_j)$ to the current buyer. \label{line:Allocation}
		\EndFor
	\end{algorithmic}
	\caption{The combinatorial stationary prophet inequality algorithm}
	\label{alg:combinatorial}
\end{algorithm}

As we will show, invocations of \Cref{alg:combinatorial} with appropriately-chosen $\mathbf{x}$ and $\mathbf{w}$ obtain constant competitive ratio and approximation of the optimum online algorithm 
for a wide range of ``nicely'' behaved combinatorial constraints---those admitting balanced contention resolution schemes. The following is the main theorem of this section. 
\begin{thm}\label{thm:competitive-paper-body}
    \Cref{alg:combinatorial} run on 
    $\vec x^*\in \arg\max\{\vec v\cdot \vec x \mid \vec x\in \calP_\mathrm{off}(\calI)\}$ 
    and equipped with a $c$-balanced CRS for polytope $\calP_\calF$ is a $(c/2)$-competitive algorithm for stationary prophet inequality with $\calF$-constrained buyers.
\end{thm}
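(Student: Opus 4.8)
The plan is to bound \Cref{alg:combinatorial}'s reward good-by-good and buyer-by-buyer using linearity of expectation, reducing everything to the probability that a given good $i$ is \emph{available} (not just present) when it is selected by the CRS for an arriving buyer $j$. First I would fix the scaling vector: since proposal probabilities $q_{ij}=x_{ij}/(\gamma_j w_i)$ must lie in $[0,1]$ and their ``aggregate'' must keep $\vec q_j$ inside $\calP_\calF$ so the CRS applies, the natural choice is $w_i := \Pr[i\in P] = 1-\exp(-\lambda_i/\mu_i)$. With that choice, Constraint~\eqref{eqn:ec-22-constraint} gives $q_{ij}\le 1$, and since $\vec x_j/\gamma_j\in\calP_\calF$ with $w_i\le 1$, downward-closedness of the polytope (scaling a feasible point coordinatewise down stays feasible) gives $\vec q_j = \vec x_j/(\gamma_j\vec w)\in\calP_\calF$, so $\pi$ is legitimately invoked on a point of its polytope applied to the set $R_j^t$ of \emph{present} proposers.

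Next I would set up the key independence observation: at the arrival time $t$ of a buyer, by PASTA (\Cref{pasta}) the set of present goods $P^t$ is distributed as the stationary $P$, and crucially presence is \emph{independent across goods} (each good is its own $M/M/\infty$-type birth--death process), so conditioned on $i\in P^t$, the set $R_j^t$ restricted to $\calG$ contains each \emph{other} present good independently with the prescribed probabilities, and contains $i$ with probability $q_{ij}$. Hence the CRS sees exactly the product distribution it expects, and its $c$-balancedness yields $\Pr[i\in\pi(R_j^t)\mid i\in R_j^t]\ge c$, so the rate at which good $i$ is \emph{selected} for buyer $j$ is at least $\gamma_j\cdot w_i\cdot q_{ij}\cdot c = c\cdot x_{ij}$. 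The remaining loss is the factor-$\tfrac12$: a selected good $i$ is only sold if it is additionally \emph{available}, i.e. not already committed to an earlier buyer. Here I would invoke the single-good analysis from \cite{kessel2022stationary}: conditioned on $i$ being present and selected, the probability $i$ is also available is at least $\tfrac12$, because the rate at which $i$ is consumed is at most its supply rate (Constraint~\eqref{eqn:flow-constraint-seller} via $\sum_j x_{ij}\le\lambda_i$), and a birth--death / coupling argument (the analogue of \Cref{monotonicity of BD Processes} and the stochastic-dominance lemma referenced in the overview) shows this forces availability-given-presence to be at least one half. Combining, good $i$ is sold to buyer $j$ at rate $\ge (c/2)\,x_{ij}$, so by linearity the algorithm's average reward is at least $(c/2)\sum_{i,j}v_{ij}x_{ij} = (c/2)\max\{\vec v\cdot\vec x\mid \vec x\in\calP_\mathrm{off}(\calI)\}\ge (c/2)\optoff(\calI)$ by \Cref{cor:RB-geq-OPT}.

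The step I expect to be the main obstacle is the last one: showing the conditional availability probability is at least $\tfrac12$. The subtlety is that availability of $i$ at time $t$ depends on the algorithm's entire history of selling $i$, which is correlated with presence of $i$ and with the arrival stream; one cannot just treat $i$'s availability in isolation because whether an earlier buyer was sold $i$ depends on the CRS outcomes, which depend on which \emph{other} goods were present. The right move is to argue that no matter how the other goods behave, the consumption process of good $i$ is stochastically dominated by a ``worst-case'' birth--death process in which $i$ is removed at rate $\sum_j x_{ij}\le\lambda_i$ whenever present, and then verify that in that worst case $\Pr[i\text{ available}\mid i\text{ present}]\ge\tfrac12$ — precisely the single-good bound of \cite{kessel2022stationary}. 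I would state this cleanly as a lemma (``$\Pr[i\in A^t\mid i\in P^t]\ge \tfrac12$ whenever $i$ is consumed at aggregate rate at most $\lambda_i$''), cite the coupling/stochastic-dominance machinery from the preliminaries, and then the theorem follows by assembling the three multiplicative factors: presence ($w_i$), proposal-then-CRS-selection ($c$), and availability-given-selection ($\tfrac12$).
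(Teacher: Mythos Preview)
Your high-level plan matches the paper's, and you correctly flag the central difficulty. But the specific lemma you propose---``$\Pr[i\in A^t\mid i\in P^t]\ge\tfrac12$ whenever $i$ is consumed at aggregate rate at most $\lambda_i$''---does \emph{not} close the gap, because what you actually need for your factorization is $\Pr[i\in A\mid i\in\pi(R_j)]\ge\tfrac12$. These differ: the event $i\in\pi(R_j)$ reveals information about which \emph{other} goods are currently present (through the CRS), and the paper shows explicitly (Appendix~\ref{appendix:challenging_correlation}) that $i\in A$ can be strictly negatively correlated with $i\in\pi(R_j)$. Your stochastic-dominance argument controls the unconditional $\Pr[i\in A\mid i\in P]$, but says nothing once you further condition on the other goods' current configuration, which is exactly what $i\in\pi(R_j)$ carries.

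The paper resolves this not by strengthening the bound on $A$, but by replacing $A$ with a surrogate. It defines a set $S\subseteq A$ of \emph{saved} goods (Definition~\ref{def:saved}): an item of $i$ becomes ``used'' the moment $i$ merely \emph{proposes}, regardless of whether the CRS selects it. This makes $i\in S$ a function solely of good $i$'s own arrivals, departures, and Bernoulli coins---completely decoupled from other goods. That independence yields $\Pr[i\in\pi(R_j)\mid i\in S]=\Pr[i\in\pi(R_j)\mid i\in P]$ (Lemma~\ref{obs:independence across goods}), after which the single-good $\tfrac12$-bound applies cleanly to $\Pr[i\in S\mid i\in P]$ (Lemma~\ref{lem:A|R}). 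Your ``worst-case birth--death process'' is in fact exactly this saved-items process; the missing idea is to use it as a surrogate \emph{event} whose independence from other goods you can exploit, rather than merely as a bound on a marginal probability.

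One smaller slip: you write $\vec q_j=\vec x_j/(\gamma_j\vec w)\in\calP_\calF$ by ``scaling down,'' but since $w_i\le1$ you are scaling \emph{up}, and $\vec q_j$ need not lie in $\calP_\calF$. The CRS is applied to the marginals of $R_j^t$, namely $z^t_{ij}=q_{ij}\cdot\Pr[i\in P^t]\le q_{ij}\cdot\Pr[i\in P]=x_{ij}/\gamma_j$ (Proposition~\ref{monotonicity of BD Processes}); it is this vector, not $\vec q_j$, that lies in $\calP_\calF$ and to which the $c$-balancedness guarantee applies.
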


For our analysis, we require the following notation.

\paragraph{Algorithm-specific notation.} Recall that $A^t
\subseteq \calG$ and $P^t\subseteq \calG$ denote the sets of available and of present goods at time $t\geq 0$. We let $A$ and $P$ denote the sets of available and present goods at the stationary distribution of the stochastic process induced by \Cref{alg:combinatorial}. 
Similarly, we denote by $R_j$ the set of proposals to a buyer of type $j$ if the buyer arrives at time $t\to \infty$.
\color{black}

Before laying the groundwork for analyzing the competitive/approximation ratio of \Cref{alg:combinatorial}, we note that it is well-defined for the choices of $\bf{x}$ and $\bf{w}$ we will use.
\begin{lem}\label{lem:alg-correct}
\Cref{alg:combinatorial} is well-defined and outputs a feasible allocation if run on $\bf{x}\in \calP_\mathrm{off}(\calI)$ and $\bf{w}$ with $w_i = 1-\exp(-\lambda_i/\mu_i)$ for all $i\in \calG$.
\end{lem}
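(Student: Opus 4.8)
The claim has two parts: (i) Algorithm~\ref{alg:combinatorial} is \emph{well-defined}, which amounts to checking that the proposal probabilities $q_{ij} = x_{ij}/(\gamma_j w_i)$ all lie in $[0,1]$; and (ii) the output is a \emph{feasible} allocation, i.e., the set of goods actually sold to each buyer of type $j$ lies in $\calF$. I would handle (ii) first since it is almost immediate, then spend the bulk of the argument on (i).

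\textbf{Feasibility.} By construction, when a buyer of type $j$ arrives at time $t$, the algorithm sells the buyer an item of each good in $\pi(R^t_j) \cap A^t$. Since $\pi$ is a CRS for $\calP_\calF$, by Definition~\ref{def:CRS} its output $\pi(R^t_j)$ is always a feasible set, $\pi(R^t_j) \in \calF \cap 2^{R^t_j}$. As $\calF$ is downward-closed, any subset of $\pi(R^t_j)$ — in particular $\pi(R^t_j)\cap A^t$ — is also in $\calF$. Hence the allocation to each buyer is feasible.

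\textbf{Well-definedness.} I must show $0 \le q_{ij} \le 1$ for all $i\in\calG$, $j\in\calB$. Nonnegativity is immediate from Constraint~\eqref{eqn:positivity} ($x_{ij}\ge 0$) and the fact that $\gamma_j>0$, $w_i = 1-\exp(-\lambda_i/\mu_i) \in (0,1)$ since $\lambda_i,\mu_i>0$. For the upper bound, I would rearrange the target inequality $q_{ij} = x_{ij}/(\gamma_j w_i) \le 1$ into $x_{ij} \le \gamma_j\cdot(1-\exp(-\lambda_i/\mu_i))$, which is exactly Constraint~\eqref{eqn:ec-22-constraint} satisfied by every $\vec x\in\calP_\mathrm{off}(\calI)$ (recall $\calP_\mathrm{off}$ enforces \eqref{eqn:flow-constraint-seller}--\eqref{eqn:flow-constraint-buyer-comb}, and \eqref{eqn:ec-22-constraint} is among these). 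So the choice $w_i = 1-\exp(-\lambda_i/\mu_i)$ is precisely what makes $q_{ij}\le 1$. One remaining subtlety: the CRS $\pi$ takes as input a point of $\calP_\calF$, so I should also verify that the vector of proposal probabilities $(q_{ij})_{i\in\calG}$ fed to $\pi$ when buyer $j$ arrives lies in $\calP_\calF$. This does \emph{not} follow directly from Constraint~\eqref{eqn:flow-constraint-buyer-comb}, which only gives $\vec x_j/\gamma_j \in \calP_\calF$, whereas $q_{ij} = x_{ij}/(\gamma_j w_i) \ge x_{ij}/\gamma_j$ coordinatewise (since $w_i\le 1$). The main obstacle is therefore this point: I would resolve it by noting that the CRS need only be invoked on the point $\vec q_j$ restricted to (or dominated by a point in) $\calP_\calF$ — either because $\calP_\calF$ is a relaxation that in fact contains $\vec q_j$ when the $q_{ij}\le 1$ and downward-closedness of $\calP_\calF$ applies, or because the algorithm should be read as feeding $\pi$ the point $\vec x_j/\gamma_j \in \calP_\calF$ and using the extra scaling by $1/w_i$ only in forming the independent rounding $R_j^t$; since $q_{ij}\ge x_{ij}/\gamma_j$, the set $R^t_j$ stochastically dominates an independent sample from $\vec x_j/\gamma_j$, and feasibility/CRS applicability is preserved by downward-closedness. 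I would state precisely which reading is intended and check that $\vec q_j \in \calP_\calF$ under it (using that $\calP_\calF$ is downward-closed as a relaxation of a downward-closed family, so containment of the larger point $\vec q_j$ with all coordinates $\le 1$ is not automatic — hence the cleaner route is to invoke $\pi$ on $\vec x_j/\gamma_j$ and round the proposal set at the inflated rate $q_{ij}$, which is the genuinely correct interpretation). Assembling these observations completes the proof.
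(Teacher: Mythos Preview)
Your handling of $q_{ij}\in[0,1]$ via Constraint~\eqref{eqn:ec-22-constraint} and of feasibility of the sold set via downward-closedness of $\calF$ are both correct and match the paper. The gap is in the part you flag as a ``subtlety'': you have misidentified the marginal vector that the CRS must accept.

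The CRS is called on the random set $R^t_j$, and the relevant point in $\calP_\calF$ is the vector of marginals $z^t_{ij}:=\Pr[i\in R^t_j]$, not $(q_{ij})_i$. The algorithm (Line~\ref{line:i-proposes-j-comb}) only lets \emph{present} goods propose, so $\Pr[i\in R^t_j]=q_{ij}\cdot\Pr[i\in P^t]$. By Proposition~\ref{monotonicity of BD Processes}, $\Pr[i\in P^t]\le\Pr[i\in P]=w_i$, and hence
\[
z^t_{ij}=\frac{x_{ij}}{\gamma_j w_i}\cdot\Pr[i\in P^t]\;\le\;\frac{x_{ij}}{\gamma_j}.
\]
Since $\vec x_j/\gamma_j\in\calP_\calF$ by Constraint~\eqref{eqn:flow-constraint-buyer-comb} and $\calP_\calF$ is downward-closed (as a relaxation of a downward-closed family), the coordinatewise-smaller vector $\vec z^t_j$ lies in $\calP_\calF$, and the CRS is legitimately applicable. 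This is exactly the paper's argument.

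Your attempted workaround has the dominance direction reversed: you claim $R^t_j$ stochastically \emph{dominates} a sample from $\vec x_j/\gamma_j$ because $q_{ij}\ge x_{ij}/\gamma_j$, but the actual marginal of $R^t_j$ is $q_{ij}\cdot\Pr[i\in P^t]\le x_{ij}/\gamma_j$, so $R^t_j$ is \emph{dominated}. Once you account for the presence filter, the ``obstacle'' disappears and there is nothing left to reinterpret.
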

\begin{proof}
\Cref{line:i-proposes-j-comb} is well-defined, as $q_{ij}\in [0,1]$, by constraints \eqref{eqn:ec-22-constraint}. Next, we note that presence of different goods is independent, as are the $\Ber(q_{ij})$ coins in \Cref{line:i-proposes-j-comb}, and so $R^t_j$ is drawn from a product distribution over $\calG$ for all buyer $j\in \calB$.
Moreover, upon buyer $j$'s arrival, each present good $i\in \calG$ is added to the set of goods $R_j^t$ with probability $q_{ij}$.
\begin{align*}
         q_{ij} = \frac{x_{ij}}{\gamma_j\cdot w_i}\cdot \Pr[i\in P^t] \leq  \frac{x_{ij}}{\gamma_j\cdot w_i}\cdot \Pr[i \in P] =  \frac{x_{ij}}{\gamma_j},
\end{align*}
where the inequality holds by definition of $q_{ij}$ and Proposition\ref{monotonicity of BD Processes}, while the final equality follows by $\Pr[i \in P] = w_i$ due to Proposition \ref{lem:Pr-available-exact}. Furthermore,  \Cref{eqn:flow-constraint-buyer-comb}
ensures that $\mathbf{x}_j \in \gamma_j\cdot \calP_\calF$, which implies that $\mathbf{z}_j^t\in \calP_\calF$, where $\mathbf{z}_{ij}^t = \Pr[i\in R^t]$.
Consequently, by the properties of a CRS (Definition~\ref{def:CRS}), the CRS $\pi$ outputs a feasible set $\pi(R_j^t)\in \calF$ in \Cref{line:Allocation}.
Therefore, by downward closedness of $\calF$, \Cref{line:Allocation} allocates a feasible subset to each buyer $j\in \calB$ upon arrival, $\pi(R_j^t)\subseteq R$, $\pi(R_j^t) \in \calF$.
\end{proof}

Our competitive and approximation ratios will follow from linearity together with a lower bound on the sell rate of good $i\in \calG$ to buyer $j\in \calB$, which are easily expressed using the PASTA property (Lemma~\ref{pasta}):
\begin{lem}\label{lem:lowerbound_sellingrate}
    Let $s_{ij}$ be the selling rate of good $i\in \calG$ to buyer type $j\in \calB$ by \Cref{alg:combinatorial}. Then,
    $$s_{ij} = \gamma_j \cdot \Pr[i\in A \cap \pi(R_j)].$$
\end{lem}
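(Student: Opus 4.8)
The plan is to compute the selling rate $s_{ij}$ directly from the definition of the algorithm together with the PASTA property. The key point is that \Cref{alg:combinatorial} only sells good $i$ to buyer $j$ when a buyer of type $j$ arrives, and at such an arrival the good $i$ is sold if and only if $i$ is both available and selected by the CRS, i.e. $i\in A^t\cap\pi(R_j^t)$. So the first step is to observe that the rate of such events is the product of the arrival rate $\gamma_j$ of type-$j$ buyers with the probability that, conditioned on a type-$j$ arrival at time $t$, the event $i\in A^t\cap\pi(R_j^t)$ occurs.

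Next I would invoke the PASTA property (\Cref{pasta}): since type-$j$ buyer arrivals form a Poisson process independent of the history of the stochastic process driving item presence/availability and the algorithm's prior decisions, an arriving type-$j$ buyer observes the system in its stationary state. Hence the conditional probability that $i\in A^t\cap\pi(R_j^t)$ at such an arrival equals the stationary probability $\Pr[i\in A\cap\pi(R_j)]$, where $A$ and $R_j$ are the stationary analogues defined in the algorithm-specific notation. One needs to be slightly careful here: the set $R_j^t$ of proposals is freshly randomized at the arrival using independent $\Ber(q_{ij})$ coins on top of the stationary present set $P^t$, and $\pi$ is applied to it; PASTA applies to the observed state $(P^t, A^t)$, and then the proposal randomness and CRS randomness are layered on in a way that is identical in distribution to the stationary construction of $R_j$ and $\pi(R_j)$. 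Combining, the rate of selling good $i$ to buyer $j$ is
\[
s_{ij} = \gamma_j\cdot \Pr[i\in A\cap \pi(R_j)],
\]
which is the claimed identity.

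I expect the only real subtlety — and thus the ``main obstacle,'' though it is a mild one — is making the application of PASTA rigorous: verifying that the arrival process of type-$j$ buyers is genuinely independent of the joint history of item arrivals/departures and all of the algorithm's internal coin flips used for earlier buyers, so that Lemma~\ref{pasta} applies to the state ``good $i$ is available'' (and more generally to whatever finite state descriptor is needed to determine the distribution of $i\in A^t\cap\pi(R_j^t)$). Everything else is a one-line accounting identity: expected number of sales of $i$ to $j$ per unit time equals arrival rate times per-arrival success probability, and stationarity lets us replace the time-$t$ quantity with its stationary counterpart. No heavy calculation is involved; the proof is essentially a careful invocation of PASTA plus the definition of the algorithm.
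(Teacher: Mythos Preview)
Your proposal is correct and follows essentially the same approach as the paper: both argue that a sale of $i$ to $j$ occurs precisely when a type-$j$ buyer arrives and $i\in A^t\cap\pi(R_j^t)$, invoke PASTA (noting independence of the Poisson arrivals from the CRS randomness, item presence/availability, and the $\Ber(q_{ij})$ coins) to replace the time-$t$ event by its stationary counterpart, and conclude $s_{ij}=\gamma_j\cdot\Pr[i\in A\cap\pi(R_j)]$. Your discussion of the ``subtlety'' regarding the freshly drawn proposal coins is slightly more explicit than the paper's, but the argument is the same.
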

\begin{proof}
Upon arrival of a buyer of type $j$, \Cref{alg:combinatorial} allocates a set of available goods from $\pi(R_j)\subseteq R_j$. We first observe that the buyer's arrival is independent of the randomness of CRS $\pi$, as well as of the presence and availability of items, and of the $Ber(q_{ij})$ random variables. The PASTA property (Lemma~\ref{pasta}) then ensures that the fraction of time buyer $j\in \calB$ (Poisson arrivals independent of the history due to momemorylessness of Poisson arrivals) observes that good $i\in \calG$ is both available and belongs to set $\pi(R_j)$ is equal to the fraction of time that this latter event occurs. Therefore, the rate $s_{ij}$ at which an item of good $i\in\calG$ is sold to buyer $j\in\calB$ can be expressed as: 
\begin{equation*}
        s_{ij} =\gamma_j \cdot  \Pr[i\in A \cap \pi(R_j)].\qedhere
\end{equation*}
\end{proof}
Motivated by Lemma~\ref{lem:lowerbound_sellingrate}, we turn to lower bounding $\Pr[i\in A\cap  \pi(R_j)]$.

\subsection{Lower bounding $\Pr[i\in A\cap  \pi(R_j)]$}

In this section we prove a lower bound on $\Pr[i\in A\cap  \pi(R_j)]$ for a class of combinatorial constraints with bounded correlation gap and complete the proof of this section's main theorem. Throughout the section, we set $w_i = \Pr[i\in P] = 1-\exp(-\lambda_i/\mu_i)$ (the last equality follows from Proposition \ref{lem:Pr-available-exact}). Naturally, we take $\mathbf{x}$ to be an optimal solution to $\max\left\{\sum_{i,j} v_{ij}\cdot x_{ij} \mid \mathbf{x}\in \calP_\mathrm{off}(\calI)\right\}.$

\paragraph{Challenging correlations.}  We note that availability of good $i$ may be positively correlated with availability (``presence") of other goods, e.g., if $x_{ij}$ and $x_{i'j}$ are low/high for similar $j$.
This may in turn result in positive correlations between availability of good $i$ and presence (and hence proposals) of other goods. 
Therefore, availability of $i$ may be positively correlated with higher contention (larger $R\setminus\{i\}$), and so $i\in A$ may be \emph{negatively} correlated with selection $i\in \pi(R_j)$. See \Cref{appendix:challenging_correlation}.

The above negative correlation poses a challenge when lower bounding $\Pr[i\in A\cap \pi(R_j)]$.
To overcome this, we study a subset $S\subseteq A$ such that $i\in S$ is \emph{independent} of other goods' presence (and hence proposals), from which we obtain that $\Pr[i\in \pi(R_j) \mid i\in S] = \Pr[i\in \pi(R_j) \mid i\in P]$, 
which allows us to then lower bound $\Pr[i\in A\cap \pi(R_j)]\geq \Pr[i\in S\cap \pi(R_j)]$. We turn to defining $S$.

\begin{Def}[Saved goods and items] \label{def:saved} 
We say each item is \emph{saved} (for future buyers) upon its arrival. 
We say good $i\in \calG$ is \emph{saved} if it has a saved item, and we denote the set of saved goods at time $t$ by $S^t\subseteq \calG$.
Next, we associate each proposal of a saved good $i\in \calG$ to buyer $j\in \calB$ at time $t$ (i.e., $i\in S^t\cap R_j^t$) with a saved item of this good, after which this item is \emph{used} (i.e., is no longer saved), and we sell this proposing item of good $i$ at time $t$ if $i\in \pi(R_j^t)$ upon buyer $j\in \calB$'s arrival at time $t$.
\end{Def}

As we shall see later, the definition of saved items allows us to analyze the probability of a saved good $i$ being selected by the CRS by reducing to the single-good problem, analyzed in \cite{kessel2022stationary}. 
The following observation implies that saved goods are also available, and hence this single-good ``reduction'' is useful to lower bound selection of an available good.

\begin{obs}\label{obs:availibility-saved}
$S^t\subseteq A^t$ at every time $t\geq 0$. 
\end{obs}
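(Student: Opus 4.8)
The plan is to prove the stronger statement that every \emph{saved item} is \emph{available} (i.e., present and not yet sold) at every time $t\ge 0$; the claimed containment $S^t\subseteq A^t$ then follows immediately, since a good lies in $S^t$ (resp.\ $A^t$) exactly when it owns a saved (resp.\ available) item. The intuition is simple: by Definition~\ref{def:saved}, an item stops being saved only when it perishes or when it is \emph{used}, and a sale of that item can only accompany the very event in which it becomes used; hence a currently-saved item is still present and has never been sold. I would make this precise by carrying the displayed property as an invariant along the (a.s.\ locally finite, pairwise distinct) sequence of events that change the state: (i) an item arrives, (ii) an item perishes, (iii) a buyer arrives.

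Cases (i) and (ii) are immediate: a newly arrived item is declared saved and is simultaneously present and unsold, so the invariant holds for it; a perished item is no longer a saved item and so cannot witness a violation. The substantive case is the arrival of a buyer of type $j$ at time $t$. For each proposing saved good $i\in S^t\cap R^t_j$, Definition~\ref{def:saved} pairs the proposal with one saved item of $i$ and marks that item \emph{used} (hence no longer saved); if in addition $i\in\pi(R^t_j)$, this is precisely the item we designate as sold. By the inductive hypothesis $i\in S^t\subseteq A^t$, so good $i$ is indeed available, and this bookkeeping is consistent with \Cref{line:Allocation}, which sells an item of each available good in $\pi(R^t_j)$. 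Every saved item not paired off in this way retains its present/unsold status: \Cref{line:Allocation} only sells goods in $\pi(R^t_j)\subseteq R^t_j$, and for each proposing good at most the single paired item leaves the saved set, so no other saved item is affected. Thus the invariant is preserved, giving $S^t\subseteq A^t$ for all $t$.

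The one place needing care — and the main (mild) obstacle — is the apparent circularity between ``good $i$ is available'', which is needed for \Cref{line:Allocation} to actually sell an item of $i$, and ``$i$ saved $\Rightarrow$ $i$ available'', which is what we are proving. This is exactly why the argument should be run as an invariant over the event sequence rather than as a one-shot deduction: each sale of a saved item is charged to the instant that item ceases to be saved, which makes ``saved'' and ``sold'' mutually exclusive by construction. A minor auxiliary point to address: when a proposed, CRS-selected good $i$ happens to be available only through non-saved items (i.e.\ $i\in A^t\setminus S^t$), \Cref{line:Allocation} sells one such non-saved item and no saved item is touched — consistent, since $i\notin S^t$ means $i$ owns no saved item at all.
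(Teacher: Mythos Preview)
Your proof is correct and follows the same idea as the paper's: allocated items are, by Definition~\ref{def:saved}, marked \emph{used} at (or before) the instant they are sold, so no saved item has been sold and hence every saved item is available. The paper compresses this into a single sentence (``Since allocated items are used by the time they are allocated, saved items are available''), whereas you unpack it into an explicit event-by-event invariant argument; the extra care you take with the cases $i\in A^t\setminus S^t$ and the perceived circularity is sound but not strictly necessary for the level of rigor the paper targets.
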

\begin{proof}
    Since allocated items are used by the time they are allocated, saved items are available. 
\end{proof}

 Similarly to our use of $A,P$ for the stationary counterparts of $A^t,P^t$, we use $S$ to denote the saved set $S^t$ as $t\to \infty$.
The preceding observation motivates lower bounding the probability of a proposing good being saved (thus lower bounding the probability of a proposing good being available, by Observation~\ref{obs:availibility-saved}), 
as in the following.

\begin{restatable}{lem}{savedgivenproposed}\label{lem:A|R}
 $\Pr[i\in S \mid i\in R_j]\geq \frac{1}{2}$ for all good $i\in \calG$.
\end{restatable}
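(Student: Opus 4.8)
The plan is to reduce the analysis of the saved set $S$ to a single-good birth-death process and then apply the single-good analysis of \cite{kessel2022stationary}. First I would fix a good $i \in \calG$ and track only the saved items of good $i$, ignoring all other goods. The key observation is that a saved item of good $i$ is created at exactly the rate $\lambda_i$ (every arriving item is saved), and it leaves the ``saved'' pool either by perishing (rate $\mu_i$ per item) or by being used, which happens when some buyer $j \in \calB$ arrives and good $i$ proposes to that buyer, i.e., when $i \in R_j^t$. Crucially, by construction of \Cref{alg:combinatorial}, whether $i$ proposes to an arriving buyer of type $j$ depends only on whether $i$ is \emph{present} (times an independent $\Ber(q_{ij})$ coin), and presence of good $i$ is independent of everything concerning other goods. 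Hence the number of saved items of good $i$ evolves as an autonomous birth-death-type process that does not depend on the states of other goods, and the event $\{i \in S\}$ is exactly the event that this process is in a nonzero state at stationarity.

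Next I would quantify the ``usage'' rate. Conditioned on good $i$ being present and saved, the rate at which a saved item of good $i$ gets used is $\sum_{j \in \calB} \gamma_j \cdot q_{ij}$, since buyers of type $j$ arrive at rate $\gamma_j$ and each makes $i$ propose with probability $q_{ij}$ (given $i$ present). Using $q_{ij} = \frac{x_{ij}}{\gamma_j w_i}$ with $w_i = 1 - \exp(-\lambda_i/\mu_i) = \Pr[i \in P]$, this usage rate is $\frac{1}{w_i}\sum_j x_{ij} \leq \frac{\lambda_i}{w_i}$ by Constraint \eqref{eqn:flow-constraint-seller}. The point is that this usage rate is at most $\frac{\lambda_i}{w_i}$, i.e., bounded by the supply rate divided by the stationary presence probability — which is precisely the regime in which the single-good analysis of \cite{kessel2022stationary} shows a saved item is available (here, the saved pool is nonempty) with probability at least $\frac{1}{2}$ conditioned on a proposal. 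Then, since $\Pr[i \in S \mid i \in R_j] = \Pr[i \in S \mid i \in R_j]$ where conditioning on $i \in R_j$ forces $i \in P$ and adds an independent coin, the PASTA property (\Cref{pasta}) lets me identify this conditional probability with the time-average fraction of time the saved pool is nonempty among the sub-epochs where a proposal to $j$ occurs — so I invoke the single-good bound directly.

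The main obstacle I anticipate is making the reduction to \cite{kessel2022stationary} fully rigorous: I must argue that the saved-item process for good $i$ is genuinely coupled to (indeed identical in law to) the single-good process analyzed there, despite $i$ proposing to \emph{multiple} buyer types with \emph{different} coin probabilities $q_{ij}$. The resolution is that the superposition of the Poisson buyer-arrival processes thinned by the $\Ber(q_{ij})$ coins is itself a Poisson process (of rate $\sum_j \gamma_j q_{ij}$ when $i$ is present, $0$ otherwise), so from good $i$'s perspective there is effectively a single ``demand'' Poisson clock whose rate is state-dependent only through presence of $i$ — exactly the single-good setting. A secondary subtlety is that in \cite{kessel2022stationary} the comparison is ``available given proposed'' whereas here I want ``saved given proposed''; but by \Cref{obs:availibility-saved} and the fact that in the single-good problem every allocated item is used immediately, the ``saved'' pool here plays exactly the role of the ``available'' pool there, so the bound transfers verbatim. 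I would also need to double-check the independence claim that conditioning on $i \in R_j$ (a proposal to the specific type $j$) does not bias the state of the saved pool beyond forcing $i \in P$ — this follows because the $\Ber(q_{ij})$ coin for the tagged arrival is fresh and independent of the pool's history, so $\Pr[i \in S \mid i \in R_j] = \Pr[i \in S \mid i \in P, \text{buyer } j \text{ arrives}]$, and the latter is handled by PASTA plus the single-good stationary bound.
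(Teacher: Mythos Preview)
Your proposal is correct and follows essentially the same route as the paper: reduce $\Pr[i\in S\mid i\in R_j]$ to $\Pr[i\in S\mid i\in P]$ via the independent $\Ber(q_{ij})$ coin, observe that the saved-item count of good $i$ is an autonomous birth-death process with sale rate $\gamma^*=\sum_j\gamma_j q_{ij}\le \lambda_i/w_i$ (by Constraint~\eqref{eqn:flow-constraint-seller}), and then invoke Proposition~\ref{lem:Pr-available-exact} together with \cite[Claim~4.3]{kessel2022stationary} to get the $\tfrac12$ bound. (Minor note: your displayed equation ``$\Pr[i\in S\mid i\in R_j]=\Pr[i\in S\mid i\in R_j]$'' is a typo---you clearly intend the right-hand side to be $\Pr[i\in S\mid i\in P]$, as your surrounding explanation makes clear.)
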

The proof, which follows by a standard exercise in analysis of birth-death processes, is deferred to \Cref{appendix:combinatorial}. We present a brief sketch below.
\begin{proof}
First, since $(i\in R_j) = (i\in P) \land (X_{ij}=1)$, for $X_{ij}\sim\Ber(q_{ij})$ independent of $i\in P$, together with Bayes' Law, we have that
$\Pr[i\in S \mid i\in R_j] = \Pr[i\in S \mid i\in P]$. So, we wish to lower bound $\Pr[i\in S \mid i\in P]$.
Since $S\subseteq P$, this requires bounding $\Pr[i\in S]$ and $\Pr[i\in P]$. Bounding both terms is a simple exercise in the analysis of birth-death processes' stationary distributions, with the ratio of the two lower bounded by $\frac{1}{2}$ in \cite{kessel2022stationary}. 
\end{proof}

Now, we recall that good $i$ being available (or saved) and  proposing to buyer $j$ is not a sufficient condition for $i$ to be allocated. 
Instead, the good must be available, it must propose, and it must be selected by the CRS $\pi$, i.e., we must have $i\in \pi(R_j)$.
Therefore, we need to bound the probability of $i$ being available (again, we will focus on $i$ being saved) and having a particular competing set of goods $R_j\setminus\{i\}$.
The following lemma gives us such a bound, relying on independence of different goods' presence and of the Bernoulli coin tosses in \Cref{line:i-proposes-j}.

\begin{lem}\label{obs:independence across goods}
    For any good $i\in \calG$ and buyer $j\in \calB$,
    $$\Pr[i\in \pi(R_j) \mid i\in S] =  \Pr[i\in \pi(R_j) \mid i\in P].$$
\end{lem}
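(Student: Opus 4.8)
The plan is to establish the identity by showing that conditioning on $i \in S$ (good $i$ is saved) versus conditioning on $i \in P$ (good $i$ is present) induces exactly the same conditional distribution over the proposal set $R_j \setminus \{i\}$ of the \emph{other} goods, and that whether $i \in \pi(R_j)$ depends only on $R_j$. Since $\pi$ is a fixed (randomized) function of its input set, and the CRS randomness is independent of everything else, it suffices to argue that the distribution of $R_j \setminus \{i\}$ is identical under the two conditionings.

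First I would recall the structure of $R_j$: upon the buyer's arrival (at $t \to \infty$), each present good $i' \in P$ is independently included in $R_j$ with probability $q_{i'j}$ via the $\Ber(q_{i'j})$ coins, which are independent across goods and independent of the arrival process and of all presence/availability/saved indicators. The key structural fact is that the event $\{i \in S\}$ — good $i$ being saved — depends only on the history of good $i$'s own items (their arrivals, perishings, and whether they have been "used" by past proposals to past buyers) together with past coins and CRS outcomes involving good $i$; crucially, it does \emph{not} depend on whether any \emph{other} good $i' \neq i$ is currently present, nor on the current buyer's coins for those other goods. This is because, in \Cref{alg:combinatorial}, the supply/perish process of good $i'$ is independent of that of good $i$, and the "using up" of saved items of good $i$ is triggered only by proposals of good $i$ itself (Definition~\ref{def:saved}). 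Meanwhile $\{i \in P\}$ trivially depends only on good $i$'s supply/perish process.

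The main step is then: condition on $\{i \in S\}$. Since $S \subseteq P$ (Observation~\ref{obs:availibility-saved}, or rather $S^t \subseteq A^t \subseteq P^t$), this event entails $i \in P$. Conditioned on $\{i \in S\}$, the presence indicators $\{i' \in P\}_{i' \neq i}$ retain their original (product) distribution — they are independent of good $i$'s entire trajectory — and the current buyer's coins $\{X_{i'j}\}_{i' \neq i}$ likewise retain their original distribution, being independent of everything in good $i$'s history. Hence $R_j \setminus \{i\} = \{i' \neq i : i' \in P,\ X_{i'j} = 1\}$ has the same distribution conditioned on $\{i \in S\}$ as it does conditioned on $\{i \in P\}$ (under which the same independence holds). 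Finally, $i \in \pi(R_j)$ is determined by $R_j = (R_j \setminus \{i\}) \cup \{i\}$ — note $i \in R_j$ holds under either conditioning since $i \in P$ and we may further condition on the (independent) coin $X_{ij} = 1$, or argue directly on $R_j$ — together with independent CRS randomness. Therefore $\Pr[i \in \pi(R_j) \mid i \in S] = \Pr[i \in \pi(R_j) \mid i \in P]$.

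The main obstacle I anticipate is making the claim "$\{i \in S\}$ is independent of the other goods' presence and of the current buyer's other coins" fully rigorous, since $S$ is defined through the stationary distribution of a process whose dynamics couple all goods \emph{through the shared buyer arrivals and the CRS}. The resolution is that the coupling only flows one way for our purposes: conditioning on good $i$'s status and the past does not distort the \emph{forward-looking, per-arrival} freshness of the other goods' presence indicators and of the current buyer's independent coins — these are drawn fresh at the arrival instant and are by construction independent of the entire past and of good $i$'s trajectory. I would phrase this carefully using the product-form / independence structure already invoked in Lemma~\ref{lem:alg-correct} and Lemma~\ref{lem:lowerbound_sellingrate} (independence of presence across goods, independence of the $\Ber(q_{ij})$ coins, and independence of the buyer arrival from the history), and possibly by conditioning additionally on the full trajectory of good $i$ (which determines $\{i \in S\}$) and observing that the conditional law of $(R_j \setminus \{i\}, \text{CRS coins})$ is unchanged.
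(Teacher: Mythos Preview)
Your proposal is correct and takes essentially the same approach as the paper: both argue that $\{i\in S\}$ is determined solely by good $i$'s own arrival/departure history, past buyer arrivals, and past $\Ber(q_{ij'})$ coins for good $i$, all of which are independent of the other goods' presence indicators, the current buyer's coins for other goods, and the CRS randomness---hence the conditional law of $R_j$ (and thus of $\pi(R_j)$) given $\{i\in S\}$ coincides with that given $\{i\in P\}$. The paper packages this as the factorization $\Pr[(R_j=T)\wedge(i\in S)]=\Pr[R_j=T]\cdot\Pr[i\in S\mid i\in P]$ and then sums over $T$ and over deterministic realizations of $\pi$, whereas you phrase it in terms of equality of conditional distributions of $R_j\setminus\{i\}$; these are the same argument.
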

\begin{proof}
The event that $i$ is saved ($i\in S$) is determined by previous arrivals and departures of good $i$ and buyers, and outcomes of $\Ber(q_{ij'})$ random variables, all of which are independent of arrivals and departures as well as outcomes of the $\Ber(q_{i'j})$ random variables of all goods in $i'\in \calG\setminus\{i\}$. Therefore $i\in S$ is independent of the proposals of any subset of other goods at time $t$. On the other hand, $S$ is independent from the $\Ber(q_{ij})$ that determines if present good $i\in P$ also proposes, i.e., if $i\in R_j$. 
Thus, 
\begin{align*}
\Pr[(R=T) \land (i \in S)] = 
\Pr[R=T]\cdot \Pr[i\in S \mid i\in R_j] = \Pr[R_j=T]\cdot \Pr[i\in S \mid i\in P]. 
\end{align*}
So, by total probability over deterministic CRS $\tilde \pi$ drawn from the randomized CRS $\pi$, we have
\begin{align*}
   \Pr[i \in \pi(R_j) \mid i\in S] &=  \frac{1}{\Pr[i\in S]} \cdot \sum_{\tilde \pi}\Pr[\pi = \tilde \pi]\cdot \sum_{T: \tilde \pi(T)\ni i} \Pr[(R_j=T) \land (i \in S)] \\
   &=  \frac{1}{\Pr[i\in P]} \cdot \sum_{\tilde \pi}\Pr[\pi = \tilde \pi]\cdot \sum_{T: \tilde \pi(T)\ni i} \Pr[R_j=T]  
   \\
   &= \Pr[i\in \pi(R_j) \mid i \in P].\qedhere 
\end{align*}
\end{proof}

We are now ready to lower bound $\Pr[i\in A\cap \pi(R_j)]$.

\begin{lem}\label{lem:available-selected}
For any good $i\in \calG$ and buyer $j\in \calB$,
\begin{equation*}
    \Pr[i\in A\cap \pi(R_j)] \geq \frac{c}{2} \cdot \frac{x_{ij}}{\gamma_j}.
\end{equation*}
\end{lem}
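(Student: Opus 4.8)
The plan is to chain together the lemmas already established in the preceding subsection. The target quantity is $\Pr[i\in A\cap \pi(R_j)]$, and the key idea is that availability is hard to control directly (because of the positive correlations discussed in the ``Challenging correlations'' paragraph), so we replace $A$ by the smaller but better-behaved set $S$ of saved goods. Since $S\subseteq A$ at every time $t$, and hence $S\subseteq A$ in the stationary distribution by Observation~\ref{obs:availibility-saved}, we immediately get
\[
\Pr[i\in A\cap \pi(R_j)] \;\geq\; \Pr[i\in S\cap \pi(R_j)].
\]

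Next I would expand the right-hand side by conditioning on $i\in S$:
\[
\Pr[i\in S\cap \pi(R_j)] \;=\; \Pr[i\in S]\cdot \Pr[i\in \pi(R_j)\mid i\in S].
\]
Now apply Lemma~\ref{obs:independence across goods} to rewrite $\Pr[i\in \pi(R_j)\mid i\in S] = \Pr[i\in \pi(R_j)\mid i\in P]$. The point of this move is that, conditioned on $i\in P$, the event $i\in R_j$ is exactly the independent coin $X_{ij}\sim\Ber(q_{ij})$ firing, which is independent of everything that determines whether the CRS selects $i$ given the proposal set; so we can bring in the balance guarantee of the CRS. Concretely, conditioned on $i\in P$, the proposal set $R_j$ is distributed as $R^t_j$ with marginals $z_{ij}^t = q_{ij}\Pr[i\in P] = x_{ij}/\gamma_j$ when $i$ itself is forced present, i.e.\ the conditional law of $R_j$ given $i\in P$ is the ``active set'' of a point in $\calP_\calF$ with $i$ conditioned to be active. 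The $c$-balancedness of $\pi$ (Definition~\ref{def:CRS}) then gives $\Pr[i\in \pi(R_j)\mid i\in R_j] \geq c$, and since given $i\in P$ the event $i\in R_j$ fires with probability $q_{ij}$ independently of the CRS outcome, we obtain
\[
\Pr[i\in \pi(R_j)\mid i\in P] \;\geq\; c\cdot q_{ij}.
\]

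Finally, combine this with Lemma~\ref{lem:A|R}, which gives $\Pr[i\in S\mid i\in R_j]\geq \tfrac12$, equivalently (since $\Pr[i\in S\mid i\in R_j] = \Pr[i\in S\mid i\in P]$ as noted in that lemma's proof) $\Pr[i\in S] \geq \tfrac12\Pr[i\in P] = \tfrac12 w_i$. Putting the pieces together:
\[
\Pr[i\in A\cap \pi(R_j)] \;\geq\; \Pr[i\in S]\cdot \Pr[i\in\pi(R_j)\mid i\in P] \;\geq\; \tfrac12 w_i \cdot c\cdot q_{ij} \;=\; \tfrac{c}{2}\cdot w_i\cdot \frac{x_{ij}}{\gamma_j w_i} \;=\; \tfrac{c}{2}\cdot\frac{x_{ij}}{\gamma_j},
\]
which is exactly the claimed bound. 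The main obstacle — and the step that needs the most care — is the middle one: correctly arguing that conditioned on $i\in P$ the proposal set has marginals $x_{ij}/\gamma_j \in$ (a feasible scaling of) $\calP_\calF$ so that the CRS balance guarantee applies, and that the $\Ber(q_{ij})$ coin for $i$ is genuinely independent of the CRS's selection rule applied to the rest of $R_j$. This independence is precisely what Lemma~\ref{obs:independence across goods} and its proof (the factorization $\Pr[(R_j=T)\wedge(i\in S)] = \Pr[R_j=T]\cdot\Pr[i\in S\mid i\in P]$) were set up to deliver, so the proof is mostly a matter of assembling these ingredients in the right order; the subtlety is just to invoke them against the conditional-on-$P$ distribution rather than the unconditional one.
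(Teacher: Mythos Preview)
Your proof is correct and follows essentially the same chain as the paper's: pass from $A$ to $S$ via Observation~\ref{obs:availibility-saved}, swap $\Pr[\cdot\mid i\in S]$ for $\Pr[\cdot\mid i\in P]$ via Lemma~\ref{obs:independence across goods}, use $\Pr[i\in S\mid i\in P]\geq\tfrac12$ from Lemma~\ref{lem:A|R}, and invoke $c$-balancedness together with $\Pr[i\in R_j]=x_{ij}/\gamma_j$. The only cosmetic difference is that the paper collapses $\Pr[i\in\pi(R_j)\mid i\in P]\cdot\Pr[i\in P]$ back to the unconditional $\Pr[i\in\pi(R_j)]$ (using $\pi(R_j)\subseteq P$) before applying balancedness, whereas you apply balancedness conditionally to get $\Pr[i\in\pi(R_j)\mid i\in P]\geq c\,q_{ij}$; these are the same computation with the factors grouped differently. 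One phrasing nit: the sentence about the conditional marginals being $x_{ij}/\gamma_j$ ``when $i$ itself is forced present'' is muddled (conditioned on $i\in P$, the marginal of $i$ in $R_j$ is $q_{ij}$, not $x_{ij}/\gamma_j$), and ``independently of the CRS outcome'' is not the right justification --- what you actually use is just $\pi(R_j)\subseteq R_j$, so $\Pr[i\in\pi(R_j)\mid i\in P]=\Pr[i\in R_j\mid i\in P]\cdot\Pr[i\in\pi(R_j)\mid i\in R_j]$ by Bayes.
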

\begin{proof}
Let $\tilde \pi$ be a deterministic CRS drawn from the randomized CRS $\pi$. 
\begin{align*}
   \Pr[i\in A\cap \pi(R_j)] 
   &\geq \Pr[i\in S\cap \pi(R_j)] && \text{(Observation~\ref{obs:availibility-saved})} \\
   &= \Pr[i\in \pi(R_j)\mid i\in S] \cdot \Pr[i\in S \mid i \in P] \cdot \Pr[i\in P] && \text{(Bayes)}\\
   &= \Pr[i\in \pi(R_j)\mid i\in P] \cdot \Pr[i\in S \mid i \in P] \cdot \Pr[i\in P] && (\text{Lemma~\ref{obs:independence across goods}})\\
   & \geq \frac 1 2 \cdot  \Pr[i\in \pi(R_j)\mid i\in P] \cdot \Pr[i \in P] &&(\text{Lemma~\ref{lem:A|R}})\\
   & \geq \frac 1 2 \cdot  \Pr[i\in \pi(R_j)] &&\text{($\pi(R_j)\subseteq P$)}\\
   & \geq  \frac c 2 \cdot \Pr[i\in R_j] && \text{($\pi$ is $c$-balanced)}\\
   &= \frac c 2 \cdot \frac{x_{ij}}{\gamma_j}. &&  \left(\Pr[i\in R_j]=\frac{x_{ij}}{\gamma_j}\right) && \qedhere
\end{align*}
\end{proof}

With the above we are now ready to prove this section's main result: competitive stationary prophet inequalities from balanced CRS for the same constraints, as stated in Theorem~\ref{thm:competitive-paper-body}.

\begin{proof}[Proof of Theorem~\ref{thm:competitive-paper-body}]
    Let $\vec x^* = \{x_{ij}^*\}_{i,j}$ be the optimal solution to the LP $$\max\left\{\sum_{i, j} v_{ij} \cdot x_{ij} \;\middle\vert\; \mathbf{x} \in \calP_{\mathrm{off}}(\calI)\right\}.$$ Corollary~\ref{cor:RB-geq-OPT} implies that $\optoff(\calI) \leq \sum_{i,j} v_{ij} \cdot x_{ij}^* $. Now combining Lemmas~\ref{lem:alg-correct},  \ref{lem:lowerbound_sellingrate} and \ref{lem:available-selected}, we conclude that running \Cref{alg:combinatorial} with inputs $x^*$, $w_i = 1 - \exp(-\lambda_i / \mu_i)$ and $c$-balanced CRS for $\calF$, the algorithm allocates good $i$ to buyer $j$ at rate $s_{ij} \geq \frac{c}{2} \cdot x^*_{ij}$. The theorem then follows by linearity of expectation and \Cref{cor:RB-geq-OPT}. 
\end{proof}
\begin{rem}
    \Cref{alg:combinatorial} runs in polytime as long as the CRS $\pi$ and selling rates $\vec x^*$ are computable in polynomial time, e.g., if $\calP_\calF$ is solvable \footnote{Above, by a \emph{solvable polytope} we mean a polytope which admits an efficient separation oracle, and in particular any linear objective can be optimized over such, using the ellipsoid method. This includes a polytope expressed via polynomially many linear constraints, matching polytopes, matroid polytopes, the intersections of the previous, etc. } and the CRS $\pi$ is polytime, as known for several constraints including Matroids, Matchings, Knapsack and their intersections \cite{calinescu2011maximizing}. 
\end{rem}

\subsection{Tight connection to offline CRS}
In this section, we show that constant competitive ex-post stationary prophet inequality for constraints $\mathcal F$ implies constant balanced contention resolution schemes for the constraint polytope $\mathcal P_\mathcal F$.
We define ex-ante stationary prophet inequality in the following definition. 
\begin{Def}
An \emph{$\alpha$-approximate ex-ante stationary prophet inequality (ex-ante SPI)} for combinatorial constraints $\calF$ is an SPI algorithm that on SPI instance $\calI$ for $\calF$-constrained buyers has expected value at least
\begin{equation*}
    \alpha \cdot  \max\left\{\sum_{i, j} v_{ij} \cdot x_{ij} \;\middle\vert \; \mathbf{x} \in \calP_{off}(\calI)\right\}.
\end{equation*}
\end{Def}

\begin{thm}
    If class of constraints $\mathcal C$ does not admit $c$-balanced CRS, then there does not exist $c$-competitive SPI for the class of constraints $\mathcal C$.  
\end{thm}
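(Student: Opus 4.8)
The plan is to establish the contrapositive as stated: from a $c$-competitive ex-ante SPI for a class of constraints $\mathcal C$, build a $c$-balanced CRS for every $\mathcal F \in \mathcal C$ and every solvable relaxation $\mathcal P_{\mathcal F}$. Following the technical overview, the key idea is that a CRS instance $(\vec x, R(\vec x))$ can be ``simulated'' by an SPI instance in which the active set $R(\vec x)$ is encoded via the \emph{presence} of goods, so that the SPI algorithm's allocation to a single buyer yields the desired feasible subset of the active elements.

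\textbf{Step 1: reduce to the worst-case CRS instance.} Fix $\mathcal F \in \mathcal C$ over ground set $\mathcal G$ and a solvable relaxation $\mathcal P_{\mathcal F}$. If no $c$-balanced CRS exists, then for some $\eps>0$ there is a point $\vec x \in \mathcal P_{\mathcal F}$ and an element $i^\star$ such that every CRS fails to guarantee $\Pr[i^\star \in \pi_{\vec x}(R(\vec x)) \mid i^\star \in R(\vec x)] \ge c - \eps$; more precisely, I would invoke the CRS/correlation-gap duality (Proposition~\ref{thm:CRS-duality}) to pin down a single bad pair $(\vec x, \vec y)$ with $\vec x \in \mathcal P_{\mathcal F} \cap [0,\eps]^{|\mathcal G|}$. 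Scaling $\vec x$ to lie in $[0,\eps]^{|\mathcal G|}$ is important so that, in the SPI instance I build, the per-good LP constraints~\eqref{eqn:ec-22-constraint} and the polytope constraint~\eqref{eqn:flow-constraint-buyer-comb} are simultaneously satisfiable with the intended sell rates.

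\textbf{Step 2: construct the SPI instance.} I would take one buyer type with some arrival rate $\gamma$, constrained by $\mathcal F$, with linear values $v_i = y_i$. For each good $i$, choose supply and perish rates $\lambda_i, \mu_i$ so that $\Pr[i \in P] = 1 - \exp(-\lambda_i/\mu_i) = x_i$ (this is where $x_i$ small helps — we can always realize any target presence probability in $(0,1)$, and take $\mu_i$ large so goods perish quickly and availability tracks presence tightly). Now feed the algorithm LP solution $x_{ij} := \gamma \cdot x_i$; by the choice of $\Pr[i\in P]$ this is a feasible point of $\mathcal P_{\mathrm{off}}(\calI)$ (constraint~\eqref{eqn:ec-22-constraint} is tight, \eqref{eqn:flow-constraint-buyer-comb} holds since $\vec x/1 = \vec x \in \mathcal P_{\mathcal F}$, \eqref{eqn:flow-constraint-seller} holds with room). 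The LP optimum is then at least $\gamma \sum_i y_i x_i$. A $c$-competitive ex-ante SPI must therefore achieve average reward $\ge c \cdot \gamma \sum_i y_i x_i$, i.e. sell rate $s_i \ge$ something summing to $c\gamma\sum_i y_i x_i$ against weights $y_i$.

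\textbf{Step 3: extract the CRS and handle the gap.} When a buyer arrives, the set of present goods is, by PASTA (Lemma~\ref{pasta}), exactly a product sample $R(\vec x)$; the algorithm's (online) allocation to this single buyer is a feasible subset of the \emph{available} goods, which is a subset of $R(\vec x)$ — this is precisely a (randomized, possibly adaptive, but that is fine) CRS map applied to $R(\vec x)$. Its balance ratio in the weighted sense $\sum_i y_i \Pr[i \in \pi(R(\vec x))] \big/ \sum_i y_i x_i \ge c$ would contradict Proposition~\ref{thm:CRS-duality} once the two gaps are controlled: (a) the discrepancy between ``available'' and ``present'', which I make negligible by sending $\mu_i \to \infty$ (fast perishing $\Rightarrow$ present $\approx$ available, so the allocation is w.h.p. a subset of $R(\vec x)$ up to $o(1)$), and (b) the slack between a per-element $c$-balance guarantee and the weighted/worst-case formulation of Proposition~\ref{thm:CRS-duality}, which the $\eps$ in that proposition absorbs. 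Taking $\eps \to 0$ and $\mu_i \to \infty$ yields that a $c$-competitive SPI would give, in the limit, a $c$-balanced CRS — contradiction.

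\textbf{Main obstacle.} The delicate point is that the SPI algorithm's allocation need \emph{not} be a clean product-CRS: it sees \emph{available} goods, not present ones, and availability depends on the algorithm's own past sales, which can correlate across goods and with the current present set. The fix — making goods perish arbitrarily fast so that only the single buyer in question is ever relevant per "epoch" and present $=$ available with probability $1-o(1)$ — must be argued carefully so that (i) the limiting object is genuinely a CRS for the product distribution $R(\vec x)$, and (ii) the competitive guarantee, which is only an \emph{average}-reward statement, localizes to the per-buyer allocation (again via PASTA / renewal-reward, so that average sell rate equals $\gamma$ times the per-arrival selection probability). Getting this limiting argument rigorous — rather than the one-line simulation it morally is — is the crux.
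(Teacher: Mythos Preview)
Your construction and overall strategy match the paper's: pick the bad pair $(\vec x,\vec y)$ from Proposition~\ref{thm:CRS-duality}, build a single-buyer SPI instance whose presence probabilities realise $\vec x$, lower bound the LP by $\sum_i y_i x_i$, and then argue that any algorithm's per-buyer reward is at most $\E[\max_{S\subseteq R(\vec x),\,S\in\calF} y(S)]\le \alpha\sum_i y_i x_i$ via the correlation-gap side of Proposition~\ref{thm:CRS-duality}. So the plan is sound.

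Where you diverge is in how you handle your ``main obstacle.'' You propose sending $\mu_i\to\infty$ so that available equals present with probability $1-o(1)$, and then extracting a bona fide CRS from the algorithm's allocation in the limit. This works, but it is unnecessary. The paper's proof bypasses the whole issue with a one-line relaxation: since $A\subseteq P$ always, any (even offline) algorithm's per-buyer gain is at most $\max_{S\subseteq P,\,S\in\calF} y(S)$, and allocating from the set of \emph{present} goods does not affect future presence, so the relaxed optimum is exactly $\E_{P}[\max_{S\subseteq P,\,S\in\calF} y(S)]$. No limiting argument, no need to argue that the extracted map is a valid CRS, and no need to localise the average-reward guarantee to a single arrival beyond one invocation of PASTA. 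In short, you only need $A\subseteq P$, not $A\approx P$, and that containment is free. Your route would require carefully justifying the $\mu_i\to\infty$ limit (uniformly over algorithms) and arguing that the stationary allocation probabilities converge; the paper's relaxation avoids all of this and gives the bound directly for a fixed instance with, say, $\mu_i=1$.
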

\begin{proof}
    First, for any $\epsilon >0$, by \Cref{thm:CRS-duality}, if $\mathcal C$ does not admit $c$-balanced CRS, then there exist constraints $\mathcal F \in \mathcal C$, vector $\vec z \in P_{\mathcal F}$ with $z_i \leq \epsilon$ for all $i$ and positive weights $\{v_i\}_{ i \in \calG}$ such that \begin{equation}\label{eqn:correlation-gap}
    \frac{\mathbb{E}[\max_{ S \subseteq R(z),S \in \mathcal{F}} v(S)]}{\sum_{i=1}^n v_i \cdot z_i} \leq c.
    \end{equation}
    Consider an SPI instance with $n$ goods  and a single buyer constrained by $\mathcal F$. This buyer arrives at rate $\gamma=1$, and has value $v_i$ for each good $i\in \calG$. Let each good $i$ arrive at rate $\lambda_i = z_i$. Now, 
    we consider the following solution to the LP:
    $$x_i = 1- \exp(-\lambda_i) = 1- \exp(-z_i) \geq z_i-z_i^2 \geq (1-\epsilon)\cdot z_i,$$ where the last inequality holds because $z_i\leq \epsilon$. Moreover, $x_i = 1 - \exp(-z_i) \leq z_i$ and $\vec z \in \calP_\mathcal F$, hence $\vec x \in \mathcal \calP_{\mathcal F}$.
    By linearity of expectation, this solution yields the following lower bound on the LP's optimal solution.
    \begin{align}\label{eqn:LP-lower-bound}
    \max\left\{\sum_{i, j} v_{ij} \cdot x_{ij} \;\middle\vert \; \mathbf{x} \in \calP_{off}(\calI)\right\} \geq \left( 1- \eps \right) \cdot \sum_{i\in G} v_i \cdot z_i.
    \end{align}

Now, we upper bound the performance of any algorithm as follows: we first observe that any algorithm allocates a subset of available goods to the arriving buyer. To relax the problem, we allow algorithms to allocate a subset of \emph{present} goods. For this relaxed problem, an optimal algorithm can be easily characterized as follows: upon arrival of the buyer, the algorithm allocates a subset of feasible present goods $S$ that maximizes $y(S)$, since allocating goods does not impact the presence of goods in the future. Now, the probability of each good $i\in \calG$ being present is $\Pr[i\in P] = 1- \exp(-z_i) \leq z_i$, independently of the presence of any other goods. Therefore, by $P\sim \calG(z)$ and equations \ref{eqn:correlation-gap} and \ref{eqn:LP-lower-bound}, we find that indeed
\begin{align*}
    \mathbb{E}\left[\max_{ S \subseteq P,S \in \mathcal{F}} v(S)\right] = \E\left[\max_{S\subseteq \calG(\vec z), S\in \calF} v(S) \right] & \leq c \cdot \sum_{i\in \calG} v_i \cdot z_i \leq \frac{c}{(1-\epsilon)}\cdot  \operatorname{OPT(LP)}. \qedhere
\end{align*}
\end{proof}

\section{Submodular Stationary Prophet Inequalities}\label{sec:submodular}
In this section, we prove that \Cref{alg:combinatorial} leads to constant competitive SPI for a wide class of constraints when buyer's utilities are submodular functions. 

Here, we recall that a set function $f:2^{\calU}\to \mathbb{R}_{\geq 0}$ is submodular if it captures the notion of diminishing returns. More formally, denoting by $f(i\mid A):=f(A\cup \{i\}) - f(A)$ the marginal contribution of $i$ to set $A\subseteq \calU$, we have the following.

\begin{Def}
A set function $f:2^{\calU}\to \mathbb{R}$ is \emph{submodular} if for every $i\in \calU$ and $A\subseteq B\subseteq \calU$,
$$f(i\mid A) \geq f(i\mid B).$$
\end{Def}

To generalize our results to submodular valuations, we first need to generalize our relaxations of the optimal algorithms' average reward. For linear valuations, this is trivial (given Lemma~\ref{lem:basic-constraints}), using linearity. 
For more general valuation functions $f_j:\calG\to \mathbb{R}_{\geq0}$, we need some way to capture the value for fractional rates $x_{ij}/\gamma_j\in [0,1]^{\calG}$, and agreeing with the value if 
$x_{ij}/\gamma_j\in \{0,1\}^{\calG}$. 
That is, we need some \emph{extension} of the set function to real vectors. For this, our starting point will be the maximum concave extension of $f$, also known as its \emph{concave closure}.

\begin{Def}\label{def:concave_closure}
    The \emph{concave closure} $f^+:[0,1]^{\calU}\to \mathbb{R}_{\geq 0}$ of set function $f:2^{\mathcal U} \rightarrow \mathbb R_{\geq 0}$ is given by
    \begin{equation*}
        f^+(\vec x) = \max \left\{ \sum_{A\subseteq \calG }\alpha_A \cdot f(A)\; \;\middle\vert\; \sum_A \alpha_A=1, \alpha_S \geq 0  \text{ and }   \sum_{A:j\in A} \alpha_A \leq x_j\;\;\forall j \right\}. 
    \end{equation*}
\end{Def}
\begin{prop}[\cite{feige2006maximizing}]
    The concave closure is (as its name suggests) concave.
\end{prop}
Unfortunately, it is NP-Hard to optimize $f^+(\vec x)$ over (even uniform) matroid polytopes \cite{vondrak2008optimal}. 
A more easily approximable extension is the \emph{multilinear extension}, which is the value obtained from a random set drawn from a product distribution with marginals $\vec x$.

\begin{Def}
   The \emph{multilinear extension} $F:[0,1]^{\calU}\to \mathbb{R}_{\geq 0}$ of set function $f:2^{\mathcal U} \rightarrow \mathbb R_{\geq 0}$~is~given~by $$F(\vec{x}):=\E[f(\calU(\vec{x})] = \sum_{S\subseteq \calU} f(S)  \prod_{i\in S}  x_i \prod_{i\not\in S}(1-x_i).$$
\end{Def}

Our interest in the multilinear extension is that it is efficiently approximable subject to solvable polytopes \cite{calinescu2011maximizing,vondrak2008optimal,vondrak2013symmetry} and that moreover it approximately upper bounds the concave closure for submodular functions \cite{vondrak2007submodularity,rubinstein2017combinatorial}. (We make these statements precise later, as we use them.) This will allow us to analyze \Cref{alg:combinatorial} for submodular utilities. 

We start by using the concave closure to upper bound the optimal average reward, as follows. We emphasize that our upper bound relies on the concavity of $f^+$.

 \begin{restatable}{lem}{supermoduar}
     Let $\mathcal A$ be a stationary prophet inequality algorithm with selling rates $\vec x:=(x_{ij})_{i\in \calG,j\in \calB}$. Then the expected average gain of $\mathcal A$ is at most 
     \begin{equation*}
         \sum_{j\in \calB} \gamma_j \cdot f^+_j \left(\frac{1}{\gamma_j} \cdot \mathbf x_j \right).
     \end{equation*}
 \end{restatable}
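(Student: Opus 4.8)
The plan is to consider, for each buyer type $j\in\calB$, the distribution over sets that $\calA$ actually sells to buyers of type $j$ over an infinitely long run, and to argue that this empirical distribution certifies a feasible point in the maximization defining $f_j^+\bigl(\tfrac{1}{\gamma_j}\vec x_j\bigr)$. Concretely, fix a long time horizon $[0,T]$. Each buyer of type $j$ that arrives is sold some (random, possibly empty) feasible set $A\subseteq\calG$; let $N_j(A,T)$ denote the number of type-$j$ arrivals sold exactly the set $A$ up to time $T$, and let $N_j(T)=\sum_A N_j(A,T)$ be the total number of type-$j$ arrivals. Define the empirical weights $\alpha^{(j)}_A(T):=N_j(A,T)/N_j(T)$; these are nonnegative and sum to $1$ by construction. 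I would first take $T\to\infty$: by the definition of the selling rate $x_{ij}$ (the long-run rate at which items of good $i$ are sold to type-$j$ buyers) together with the fact that type-$j$ buyers arrive at rate $\gamma_j$, the long-run fraction of type-$j$ buyers who are sold a set containing $i$ equals $x_{ij}/\gamma_j$. Hence, passing to the limit, the (expected) limiting weights $\alpha^{(j)}_A$ satisfy $\sum_{A:i\in A}\alpha^{(j)}_A = x_{ij}/\gamma_j$ for every $i$, so in particular $\sum_{A:i\in A}\alpha^{(j)}_A\le x_{ij}/\gamma_j$, which is exactly the constraint in Definition~\ref{def:concave_closure} evaluated at $\vec x_j/\gamma_j$.

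Next, I would relate the average gain of $\calA$ to these weights. The long-run average gain of $\calA$ is the sum over buyer types $j$ of the rate $\gamma_j$ at which type-$j$ buyers arrive, times the expected value $\sum_A \alpha^{(j)}_A\, f_j(A)$ extracted per such buyer. Since the empirical weights $\bigl(\alpha^{(j)}_A(T)\bigr)_A$ form a feasible point for the program defining $f_j^+$ at the empirical marginals — and the empirical marginals are componentwise at most (in the limit, equal to) $\vec x_j/\gamma_j$, while $f_j^+$ is monotone in the feasible region in the sense that enlarging the marginal budget only enlarges the feasible set — we get $\sum_A \alpha^{(j)}_A(T)\,f_j(A)\le f_j^+\bigl(\text{empirical marginals}\bigr)$. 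Taking $T\to\infty$ and using continuity of $f_j^+$ (it is concave, hence continuous on $[0,1]^{\calG}$) yields $\sum_A \alpha^{(j)}_A\,f_j(A)\le f_j^+\bigl(\tfrac{1}{\gamma_j}\vec x_j\bigr)$. Summing over $j$ with weights $\gamma_j$ gives the claimed bound
$$
\text{(avg.\ gain of }\calA\text{)} \;=\; \sum_{j\in\calB}\gamma_j\sum_A \alpha^{(j)}_A\,f_j(A) \;\le\; \sum_{j\in\calB}\gamma_j\cdot f_j^+\!\left(\tfrac{1}{\gamma_j}\,\mathbf x_j\right).
$$

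The main obstacle I anticipate is the interchange of limits and expectations needed to make "the long-run empirical distribution over sold sets has marginals $x_{ij}/\gamma_j$" fully rigorous — this is a renewal-reward / ergodic-averaging statement about the stationary process induced by $\calA$, and one must be careful that $N_j(T)\to\infty$ a.s., that the per-arrival value has finite expectation (so that the average gain is well-defined), and that the empirical marginals converge to the claimed selling rates. Once the averaging is set up, everything else is soft: the feasibility of the empirical weights for the concave-closure program is immediate from their definition, and the inequality $\sum_A\alpha^{(j)}_A f_j(A)\le f_j^+(\vec x_j/\gamma_j)$ then follows directly from the definition of $f_j^+$ as a maximum over exactly such weight vectors, together with the monotonicity of the feasible region and continuity of $f_j^+$. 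Note that concavity of $f_j^+$ (Proposition after Definition~\ref{def:concave_closure}) is what guarantees continuity on the closed box and hence legitimizes the final limit; this is the role of concavity flagged in the statement's preamble.
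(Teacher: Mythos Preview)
Your approach is correct and is genuinely different from the paper's. The paper does not use the feasibility of the empirical allocation distribution in the defining program of $f_j^+$; instead it writes $f_j(T^j_t)=f_j^+(\mathbbm 1_{T^j_t})$, applies Jensen's inequality (using concavity of $f_j^+$) to the average $\frac{1}{n_j(T)}\sum_t f_j^+(\mathbbm 1_{T^j_t})$, and then applies Jensen a second time to push the expectation inside $f_j^+$, after conditioning on a high-probability concentration event for $n_j(T)$. Your route is more direct: since the (limiting) distribution $(\alpha_A^{(j)})_A$ over sets sold to a type-$j$ buyer is itself a feasible point of the program defining $f_j^+(\vec x_j/\gamma_j)$, the bound $\sum_A \alpha_A^{(j)} f_j(A)\le f_j^+(\vec x_j/\gamma_j)$ is immediate from the definition of the concave closure, with concavity entering only through continuity when passing to the limit. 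In fact, if you phrase things via the stationary distribution and PASTA (letting $p_A^{(j)}$ be the stationary probability that a type-$j$ arrival is sold set $A$, so that $\sum_{A\ni i} p_A^{(j)}=x_{ij}/\gamma_j$ by definition of the selling rate), the inequality follows without invoking concavity at all. The paper's approach makes the role of concavity structurally explicit via two Jensen steps; yours is shorter and shows that, at heart, the lemma is just the observation that any algorithm's per-buyer allocation distribution is a candidate in the maximization defining $f_j^+$. The limit-interchange issue you flag is real and is exactly the same technicality the paper handles via the Poisson concentration event $\mathcal E_j$.
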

 \begin{proof}
Let $\tau_j(T)$ be the set of arrival points of buyer $j\in \calB$ by time $T$, and let $n_j(T) = |\tau_j(T)|$. For any $t\in \tau_j(T)$, let $T^j_t$ be the set of goods allocated by Algorithm $\cal A$ to the buyer of type $j\in \calB$ arriving at time $t\in \tau_j(T)$.

Fix a buyer $j\in \calB$. We observe that for $T>0$, $n_j(T) \sim \operatorname{Poi}(\gamma_j T)$. By standard concentration bounds for Poisson random variables \cite[Proposition 11.15]{goldreich2017introduction}, $\Pr[|n_j(T) -\gamma_jT | \leq \gamma_j T^{\frac{2}{3}}] \geq 1- O\left(\exp(-\gamma_j\cdot  T^{1/3})\right)$.  We define the event $\mathcal E_j: = \mathds{1}[|n_j(T) -\gamma_j | \leq \gamma_j T^{\frac{2}{3}}].$ 
As $f^* = \max_{S\subseteq \calG} f(S) <\infty$, and $\Pr[\mathcal E_j^c]$ is exponentially small in $T$, taking $T\rightarrow \infty$, we can focus on the event when $\mathcal{E}_j$ occurs, as follows.  
    \begin{align*}
      \liminf_{T\rightarrow \infty}   \E \left[\frac{1}{T}  \cdot\sum_{t \in \tau_j(T)} f_j(T^j_t)  \right]
      &=\liminf_{T\rightarrow \infty} \frac 1 T\E \left[ \sum_{t \in \tau_j(T)} f^+_j(\mathbbm 1_{T^j_t}) \;\middle\vert\; \mathcal E_j \right] && (\liminf_{T \rightarrow \infty } \Pr[\mathcal E_j] = 1.)\\
        & \leq \liminf_{T\rightarrow \infty} \frac 1 T\E \left[ n_j(T) \cdot  f^+_j\left(\frac{1}{n_j(T)}\sum_{t \in \tau_j(T)} \mathbbm 1_{T^j_t} \right) \;\middle\vert\; \mathcal E_j \right] && (f_j^+ \text{ is concave})\\
        &= \liminf_{T\rightarrow \infty}\E \left[ \gamma_j \cdot  f^+_j\left(\frac{\sum_{t \in \tau_j(T)} \mathbbm 1_{T^j_t}}{\gamma_j \cdot T} \right) \;\middle\vert\; \mathcal E_j \right]\\
        & \leq \liminf_{T\rightarrow \infty} \gamma_j \cdot  f^+_j\left(\frac 1 {\gamma_j}\E \left[\frac{ \sum_{t \in \tau_j(T)} \mathbbm 1_{T^j_t}}{T} \;\middle\vert\; \mathcal E_j  \right] \right). && (f_j^+ \text{ is concave})\\
        &= \gamma_j \cdot f_j^+ \left( \frac{1}{\gamma_j} \cdot \vec x \right).
    \end{align*}
    Above, the last equality holds because $\liminf_{T\rightarrow \infty}\E \left[\frac{ \sum_{t \in \tau_j(T)} \mathbbm 1_{T^j_t}}{T} \;\middle\vert\; \mathcal E_j  \right] = \frac {\vec x} {\Pr[\mathcal E_j]} $.
 \end{proof}

Using the above lemma combined with constraints from Lemma~\ref{lem:basic-constraints}, we define natural relaxations of the combinatorial stationary prophet inequalities with submodular valuations. 

\begin{lem}\label{lem:submodular-relaxation}
    For all instances $\calI$ of the combinatorial stationary prophet inequality problem,
    \begin{enumerate}
        \item $\max \left\{ \sum_{j \in \calB} \gamma_j \cdot f_j^+\left(\frac 1 {\gamma_j} \mathbf x_j \right) \;\middle\vert\; \mathbf{x} \in \calP_\mathrm{off}(\calI)\right\} \geq \optoff(\calI)$, and
    \item $\max\left\{\sum_{j \in \calB} \gamma_j \cdot f_j^+\left(\frac 1 {\gamma_j} \mathbf x_j \right) \;\middle\vert\; \mathbf{x} \in \calP_{on}(\calI)\right\}\geq \opton(\calI)$.
    \end{enumerate}
\end{lem}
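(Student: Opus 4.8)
The plan is to combine the previous lemma (which upper bounds any algorithm's average gain by $\sum_{j} \gamma_j \cdot f_j^+(\mathbf x_j/\gamma_j)$ in terms of its selling rates $\mathbf x$) with Lemma~\ref{lem:basic-constraints} (which shows that any algorithm's selling rates are feasible for the appropriate relaxation polytope). Concretely, for part~(1), I would let $\mathcal A$ be the optimal offline algorithm; it achieves average gain $\optoff(\calI)$, and it induces some vector of selling rates $\mathbf x := (x_{ij})_{i\in\calG,j\in\calB}$ where $x_{ij}$ is the rate at which $\mathcal A$ sells good $i$ to buyers of type $j$. By the preceding lemma (the restatable \texttt{supermoduar} lemma), the average gain of $\mathcal A$ is at most $\sum_{j\in\calB}\gamma_j\cdot f_j^+(\mathbf x_j/\gamma_j)$, so $\optoff(\calI) \le \sum_{j\in\calB}\gamma_j\cdot f_j^+(\mathbf x_j/\gamma_j)$. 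By Lemma~\ref{lem:basic-constraints}, this $\mathbf x$ satisfies constraints \eqref{eqn:flow-constraint-seller}--\eqref{eqn:flow-constraint-buyer-comb}, i.e.\ $\mathbf x \in \calP_\mathrm{off}(\calI)$. Hence $\mathbf x$ is a feasible point of the maximization on the left-hand side, so the maximum is at least $\sum_{j}\gamma_j\cdot f_j^+(\mathbf x_j/\gamma_j) \ge \optoff(\calI)$, which is exactly the claimed inequality.

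For part~(2), the argument is identical except that I take $\mathcal A$ to be the optimal \emph{online} algorithm, with average gain $\opton(\calI)$. The \texttt{supermoduar} lemma applies to any stationary prophet inequality algorithm, so again $\opton(\calI) \le \sum_{j}\gamma_j\cdot f_j^+(\mathbf x_j/\gamma_j)$ for the selling rates $\mathbf x$ of $\mathcal A$. Since $\mathcal A$ is online, Lemma~\ref{lem:basic-constraints} tells us $\mathbf x$ additionally satisfies constraint \eqref{ec20-constraint}, so $\mathbf x \in \calP_\mathrm{on}(\calI)$, and the same feasibility-of-a-witness argument gives $\max\{\,\sum_j \gamma_j f_j^+(\mathbf x_j/\gamma_j) \mid \mathbf x \in \calP_\mathrm{on}(\calI)\,\} \ge \opton(\calI)$.

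One small point worth spelling out is why the ``selling rates'' of the optimal offline/online algorithm are well-defined: the offline/online optimum is a stationary object (the long-run average reward of the best policy), and by ergodicity of the underlying Poisson-driven process the long-run rate at which good $i$ is sold to type-$j$ buyers converges to a well-defined limit $x_{ij}$; this is precisely the object to which Lemma~\ref{lem:basic-constraints} and the \texttt{supermoduar} lemma apply (the latter is even phrased in terms of a $\liminf$ of time-averaged rewards, matching this). I do not expect any real obstacle here --- the statement is a direct corollary of the two preceding lemmas, and the only care needed is to note that for linear $f_j$ the concave closure satisfies $f_j^+(\mathbf x_j/\gamma_j) = \sum_i v_{ij} x_{ij}/\gamma_j$, so this lemma genuinely generalizes Corollary~\ref{cor:RB-geq-OPT}.
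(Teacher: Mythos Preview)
Your proposal is correct and takes essentially the same approach as the paper, which does not write out a proof but simply presents the lemma as following from ``the above lemma combined with constraints from Lemma~\ref{lem:basic-constraints}.'' Your argument spells out precisely this combination: apply the preceding \texttt{supermoduar} lemma to the optimal offline (resp.\ online) algorithm to upper bound its average gain, then invoke Lemma~\ref{lem:basic-constraints} to place its selling rates in $\calP_\mathrm{off}(\calI)$ (resp.\ $\calP_\mathrm{on}(\calI)$), yielding a feasible witness for the maximization.
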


We now turn to analyzing \Cref{alg:combinatorial} for submodular SPI instances.
We start by relating the gain of an average buyer of type $j\in \calB$ with the buyer would obtain if they could obtain value from all proposing buyers, which will later tie back to the multilinear extension.

Next, we prove the following key lemma which lower bounds the gain of any buyer in terms of the gain the buyer would obtain if they could accept all proposing items. 

\begin{lem}\label{lem:subd lowerbound at stationary dist}
    For any buyer $j\in \calB$ with monotone submodular utility $f_j$,
    \begin{equation*}
        \mathbb E[f_j(A\cap \pi(R_j))] \geq  \frac{c}{2} \cdot \mathbb E[f_j(R_j)]. 
    \end{equation*}
\end{lem}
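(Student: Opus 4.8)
The plan is to bound the buyer-$j$ value $\E[f_j(A\cap \pi(R_j))]$ by accounting \emph{separately} for the two ways this set shrinks relative to $R_j$: the contention-resolution step $\pi(R_j)\subseteq R_j$, which should cost a factor $c$, and the availability step $A\cap\pi(R_j)\subseteq\pi(R_j)$, which should cost a factor $1/2$. Concretely, I would prove the two inequalities $\E[f_j(A\cap\pi(R_j))]\ge \tfrac12\,\E[f_j(\pi(R_j))]$ and $\E[f_j(\pi(R_j))]\ge c\cdot\E[f_j(R_j)]$ and multiply them. Exactly as in the additive analysis, the availability factor is obtained by passing from $A$ to the smaller but easier-to-analyze set of saved goods: by \Cref{obs:availibility-saved}, $S\subseteq A$, hence $S\cap\pi(R_j)\subseteq A\cap\pi(R_j)$ and $f_j(A\cap\pi(R_j))\ge f_j(S\cap\pi(R_j))$ by monotonicity of $f_j$.

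For the availability step I would fix an arbitrary order $i_1,\dots,i_n$ on $\calG$ and telescope: writing $(\cdot)_{<k}$ for the restriction of a set to $\{i_1,\dots,i_{k-1}\}$, we have $f_j(S\cap\pi(R_j))=\sum_k \mathbbm{1}[i_k\in S\cap\pi(R_j)]\cdot f_j(i_k\mid (S\cap\pi(R_j))_{<k})$. Since $(S\cap\pi(R_j))_{<k}\subseteq \pi(R_j)_{<k}$, submodularity and monotonicity give $f_j(i_k\mid(S\cap\pi(R_j))_{<k})\ge f_j(i_k\mid\pi(R_j)_{<k})\ge 0$, so $\E[f_j(S\cap\pi(R_j))]\ge\sum_k\E[\mathbbm{1}[i_k\in S]\,\mathbbm{1}[i_k\in\pi(R_j)]\,f_j(i_k\mid\pi(R_j)_{<k})]$. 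Conditioning on the pair $(R_j,\pi)$ makes $\mathbbm{1}[i_k\in\pi(R_j)]$ and $f_j(i_k\mid\pi(R_j)_{<k})$ measurable, and --- by the same reasoning as in the proof of \Cref{obs:independence across goods} --- whether $i_k$ is saved depends only on $i_k$'s own birth--death history, the buyer-arrival process, and the $\Ber(q_{i_kj'})$ coins of $i_k$, hence is independent of $R_j\setminus\{i_k\}$ and of the CRS's randomness given the event $\{i_k\in R_j\}$ (which holds on every term with $\mathbbm{1}[i_k\in\pi(R_j)]=1$). Thus the conditional expectation of $\mathbbm{1}[i_k\in S]$ on each such term equals $\Pr[i_k\in S\mid i_k\in R_j]=\Pr[i_k\in S\mid i_k\in P]\ge \tfrac12$ by \Cref{lem:A|R}, and summing over $k$ yields $\E[f_j(A\cap\pi(R_j))]\ge\tfrac12\,\E[f_j(\pi(R_j))]$.

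For the contention-resolution step I would again telescope, now along $\pi(R_j)$, lower-bound each marginal by $f_j(i_k\mid (R_j)_{<k})$ using submodularity and $\pi(R_j)\subseteq R_j$, and write $\E[f_j(R_j)]=\sum_k\E[\mathbbm{1}[i_k\in R_j]\,f_j(i_k\mid(R_j)_{<k})]$; it then suffices to show, for each $k$, that $\E[\mathbbm{1}[i_k\in\pi(R_j)]\,f_j(i_k\mid(R_j)_{<k})]\ge c\cdot\E[\mathbbm{1}[i_k\in R_j]\,f_j(i_k\mid(R_j)_{<k})]$. Conditioning on $B:=R_j\setminus\{i_k\}$ (whose law is a product, so that $\Pr[i_k\in R_j\mid B]=x_{i_kj}/\gamma_j$ does not depend on $B$), the right-hand side equals $(x_{i_kj}/\gamma_j)\cdot\E_B[f_j(i_k\mid B_{<k})]$, while the left-hand side equals $(x_{i_kj}/\gamma_j)\cdot\E_B[g(B)\cdot f_j(i_k\mid B_{<k})]$ with $g(B):=\Pr[i_k\in\pi(B\cup\{i_k\})]$. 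Here $g$ is non-increasing in $B$ \emph{because $\pi$ is monotone}, and $B\mapsto f_j(i_k\mid B_{<k})$ is non-increasing in $B$ by submodularity; these two non-negative monotone functions of the product-distributed set $B$ are positively correlated by the Harris/FKG inequality, so $\E_B[g(B)f_j(i_k\mid B_{<k})]\ge \E_B[g(B)]\cdot\E_B[f_j(i_k\mid B_{<k})]\ge c\cdot\E_B[f_j(i_k\mid B_{<k})]$, the last inequality by $c$-balancedness of $\pi$ at $\vec x_j/\gamma_j\in\calP_\calF$ (this is exactly the standard composition property of monotone CRS with monotone submodular functions, cf.~\cite{chekuri2011multi}). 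Multiplying the two established bounds gives the claim.

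I expect the contention-resolution step to be the main obstacle: the marginal contribution $f_j(i_k\mid B_{<k})$ is large precisely when contention is small, whereas $c$-balancedness only controls the CRS selection probability \emph{averaged} over $B$, so the terms $g(B)$ and $f_j(i_k\mid B_{<k})$ could a priori be badly anti-correlated; the FKG/Harris inequality, which is available only because $R_j$ is a product distribution and $\pi$ is a monotone CRS, is exactly what rescues the bound. A secondary point requiring care is the independence bookkeeping in the availability step --- verifying that a good's saved status is independent of the other goods' proposals and of the CRS's internal randomness (conditioned on the good being proposed) --- but this is the direct submodular analogue of \Cref{obs:independence across goods} and follows from the same observation that, under \Cref{alg:combinatorial}, each good's presence and saved status evolve independently of the other goods' birth--death processes and proposal coins.
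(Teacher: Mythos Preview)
Your proof is correct and uses the same ingredients as the paper's: pass from $A$ to the saved set $S$, telescope along an order of $\calG$, use the independence of $i\in S$ from $R_j\setminus\{i\}$ and the CRS randomness (Lemma~\ref{obs:independence across goods}) together with Lemma~\ref{lem:A|R} for the $\tfrac12$, and use monotonicity of $\pi$ plus submodularity plus Harris for the factor $c$. The only difference is organizational: you prove the two intermediate inequalities $\E[f_j(A\cap\pi(R_j))]\ge\tfrac12\,\E[f_j(\pi(R_j))]$ and $\E[f_j(\pi(R_j))]\ge c\cdot\E[f_j(R_j)]$ separately and multiply, whereas the paper carries out a single telescoping and bounds each marginal term directly against the corresponding marginal in $R_j$, extracting both factors inside the same chain of inequalities. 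Your decomposition has the advantage of cleanly isolating the second inequality as the standard monotone-CRS composition property (so only the availability step is genuinely new), while the paper's single-pass argument avoids introducing the intermediate quantity $\E[f_j(\pi(R_j))]$.
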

\begin{proof} 
We will prove the following, which implies the lemma by summation over $i$ and telescoping.
$$\E[f_j(A\cap \pi(R_j) \cap [i]) - f_j(A\cap \pi(R_j) \cap [i-1])] \geq \frac c 2 \cdot \left(\E[f_j(R_j \cap [i]) - f_j(R_j \cap [i-1])]\right).$$
And indeed, we can lower bound the  LHS, which can be restated as follows, by using monotonicity, and hence positivity of marginals of $f_j$.\footnote{As we show below, a standard pruning process allows us to extend the following to non-monotnoe functions.}

\begin{align*}
    & \E[\mathbbm 1_{i\in A \cap \pi(R_j)} \cdot f_j(i\mid A \cap \pi(R_j) \cap [i-1])]\\
    \geq & \E[\mathbbm 1_{i\in S \cap \pi(R_j)} \cdot f_j(i\mid A \cap \pi(R_j) \cap [i-1])] & A\supseteq S \\
    \geq & \E[\mathbbm 1_{i\in S \cap \pi(R_j)} \cdot f_j(i\mid R_j \cap [i-1])] & f \textrm{ submod., } R_j\supseteq A\cap \pi(R_j) \\
    =&\E [\mathbbm 1_{i\in S} \cdot \mathbbm 1_{i\in \pi(R_j)} \cdot f_j(i\mid R_j \cap [i-1]) \mid i\in R_j ] \cdot \Pr[i\in R_j] & \textrm{Bayes} \\
    =& \E[\mathbbm 1_{i\in S} \mid i\in R_j ] \cdot \E [\mathbbm 1_{i\in \pi(R_j)} \cdot f_j(i\mid R_j \cap [i-1]) \mid i\in R_j ] \cdot \Pr[i\in R_j] & (i\in R_j), (i\in S) \;\bot \; (R_j\setminus \{i\})\\
    \geq& \frac{1}{2}  \cdot \E [\mathbbm 1_{i\in \pi(R_j)} \cdot f_j(i\mid R_j \cap [i-1]) \mid i\in R_j ] \cdot \Pr[i\in R_j] & \textrm{Lemma~\ref{lem:A|R}}\\
    \geq& \frac{1}{2}  \cdot \E [\mathbbm 1_{i\in \pi(R_j)} \mid i\in R_j ] \cdot \E[ f_j(i\mid R_j \cap [i-1]) \mid i\in R_j ] \cdot \Pr[i\in R_j] & \pi \textrm{ is monotone}, f \textrm{ is submod.} \\
    \geq & \frac{c}{2}\cdot  \left(\E[f(R_j \cap [i]) - f(R_j \cap [i-1])]\right).  & \pi \textrm{ is $c$-balanced}
\end{align*}
Essentially all steps of the above derivation are proven inline, but we elaborate on two of these:
The second equality follows because the events $\{i\in R_j\}, \{i\in S\}$ and the random set $R_j\cap [i-1]$ are independent. Notice that the positive correlation of $i\in \pi(R_j)$ and $f_j(i\mid R_j\cap [i-1])$ (conditioned on $i$), as these are decreasing events in the (independent) indicators $[i'\in R_j \mid i\in R_j]$, by monotonicity of the CRS $\pi$ and submodularity of $f$. 
Therefore,  the second-to-last inequality follows by Harris' inequality.\footnote{Harris' inequality states that for any two increasing functions $f,g$ on a partially ordered product probability space (i.e., independence of each element) $\Omega$, such as in the space $\Omega=[R_j \mid i\in R_j]$, we have $$\mathbb E[f\cdot g] \geq \E [f]\cdot \E[g].$$} 
\end{proof}

\begin{rem}
    For non-monotone utilities, we can run a pruning process $\eta$ on the allocated set $\pi(R)\cap A$ similar to \cite[Theorem 1.3] {vondrak2011submodular} such that for the ordered set of goods $\calG=[n]$ satisfies $f_j(i\mid \eta(\pi(R)\cap A) \cap [i-1])\geq 0$. 
    We further modify \Cref{alg:combinatorial} for non-monotone submodular utilities by allocating the set $\eta(\pi(R)\cap A)$ to the arrived buyer. Following the arguments from Lemma~\ref{lem:subd lowerbound at stationary dist}, we obtain
    \begin{equation*}
        \E[f_j(\eta(A\cap \pi(R_j)) \cap [i]) - f_j(\eta(A\cap \pi(R_j)) \cap [i-1])] \geq \frac c 2 \cdot \left(\E[f_j(R_j \cap [i]) - f_j(R_j \cap [i-1])]\right).
    \end{equation*}
    This implies that for non-monotone submodular utilities, $\mathbb E[f_j(\eta(A\cap \pi(R)))] \geq \frac c 2 \cdot \mathbb E[f_j(R)].$
\end{rem}

Next, we relating the gain from proposing items (if one could accept them all) to the ex-ante relation, we can use the preceding lemma to obtain the following lower bound on the expected reward gained from different buyer types.
\begin{lem}\label{lem:buyer's submodular utility}
   For any buyer $j\in \calB$ and $T>0$:
   \begin{equation*} 
       \liminf_{T\rightarrow \infty} \frac 1 T \cdot  \mathbb E \left[ \sum_{t \in \tau_j(T)} f_j( A^t \cap R_j^t)\right] \geq \frac{c}{2}\cdot \gamma_j\cdot F_j\left(\frac{1}{\gamma_j}\vec{x}_j\right). 
    \end{equation*}
\end{lem}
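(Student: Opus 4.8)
The goal is to show that the average reward collected from buyer $j$ across all arrivals up to time $T$, normalized by $T$, is at least $\frac{c}{2}\gamma_j F_j(\vec x_j/\gamma_j)$ in the limit. The natural strategy is to reduce this long-run average to a single ``typical'' arrival and then invoke Lemma~\ref{lem:subd lowerbound at stationary dist} together with the PASTA property. First I would fix $T$ and condition on the concentration event $\mathcal E_j = \mathds 1[|n_j(T)-\gamma_j T|\le \gamma_j T^{2/3}]$, exactly as in the proof of the supermodular upper bound lemma; since $\Pr[\mathcal E_j^c]$ is exponentially small in $T$ and $f_j$ is bounded by $f^*=\max_S f_j(S)$, this event can be assumed to occur without affecting the $\liminf$.

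Next, the key observation is that at each arrival time $t\in\tau_j(T)$, by the PASTA property (Lemma~\ref{pasta}) the buyer sees the system in its stationary state, so the pair $(A^t,R_j^t)$ is distributed as the stationary pair $(A,R_j)$, independently of the arrival being at time $t$ (this uses the memorylessness/independence hypothesis in PASTA, just as in Lemma~\ref{lem:lowerbound_sellingrate}). Hence $\E[f_j(A^t\cap R_j^t)] = \E[f_j(A\cap R_j)]$ for every $t$. Summing over the $n_j(T)\approx \gamma_j T$ arrivals and dividing by $T$ gives
\begin{equation*}
\liminf_{T\to\infty}\frac 1T\,\E\!\left[\sum_{t\in\tau_j(T)} f_j(A^t\cap R_j^t)\right] = \gamma_j\cdot \E[f_j(A\cap R_j)].
\end{equation*}
Since allocation is a subset of available proposers, $A\cap R_j \supseteq A\cap \pi(R_j)$, and by monotonicity of $f_j$ we get $\E[f_j(A\cap R_j)]\ge \E[f_j(A\cap \pi(R_j))]$; wait---that inequality points the wrong way, so instead I would note directly that the stationary reward the buyer actually gets is $\E[f_j(A\cap\pi(R_j))]$, but here we are lower-bounding the quantity $\E[f_j(A^t\cap R_j^t)]$, which by monotonicity \emph{dominates} $\E[f_j(A\cap\pi(R_j))]$. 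The correct chain is therefore to lower bound $\E[f_j(A\cap R_j)] \ge \E[f_j(A\cap \pi(R_j))] \ge \frac c2 \E[f_j(R_j)]$ using Lemma~\ref{lem:subd lowerbound at stationary dist}, and finally $\E[f_j(R_j)] = \E[f_j(\calG(\vec z_j))] = F_j(\vec z_j)$ where $z_{ij} = \Pr[i\in R_j] = x_{ij}/\gamma_j$ (Lemma~\ref{lem:alg-correct}), so $\E[f_j(R_j)] = F_j(\vec x_j/\gamma_j)$. Combining yields the claim.

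**Main obstacle.** The delicate point is the interchange of limit, expectation, and the conditioning on $\mathcal E_j$ when passing from the finite-horizon sum to $\gamma_j\E[f_j(A\cap R_j)]$: one must argue that conditioning on $\mathcal E_j$ does not distort the per-arrival distribution $(A^t,R_j^t)$ (it does not, since $\mathcal E_j$ is an event on the \emph{number} of buyer-$j$ arrivals, which by Poisson thinning is independent of the system state seen at any individual arrival), and that the contribution of the at most $\gamma_j T^{2/3}$ ``extra'' arrivals vanishes after dividing by $T$ (it does, since each contributes at most $f^*$). Handling the $\liminf$ bookkeeping carefully---mirroring the structure already used in the proof of the $f^+$ upper bound lemma---is the only real work; everything else is PASTA plus Lemma~\ref{lem:subd lowerbound at stationary dist}.
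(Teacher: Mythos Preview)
Your proposal is correct and follows essentially the same approach as the paper: reduce the time-average to a single stationary arrival via PASTA, then apply Lemma~\ref{lem:subd lowerbound at stationary dist}, then identify $\E[f_j(R_j)]$ with $F_j(\vec x_j/\gamma_j)$ using that $R_j$ is a product distribution with marginals $x_{ij}/\gamma_j$.

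Two minor remarks. First, the concentration event $\mathcal E_j$ is unnecessary here: the paper simply invokes PASTA in one line to obtain $\liminf_{T\to\infty}\frac 1T\,\E\bigl[\sum_{t\in\tau_j(T)} f_j(A^t\cap\pi(R_j^t))\bigr]=\gamma_j\cdot\E[f_j(A\cap\pi(R_j))]$, without any conditioning detour. Your ``main obstacle'' is therefore not really an obstacle. (Relatedly, your claim that $\E[f_j(A^t\cap R_j^t)]=\E[f_j(A\cap R_j)]$ for \emph{every} $t$ is not literally true before reaching stationarity, but this is harmless in the limit.) Second, you correctly noticed and patched the discrepancy between $A^t\cap R_j^t$ in the statement and the actually allocated set $A^t\cap\pi(R_j^t)$: the paper's own proof tacitly works with $\pi(R_j^t)$ throughout, whereas you insert the monotonicity step $f_j(A\cap R_j)\ge f_j(A\cap\pi(R_j))$ to bridge the gap. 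That is a fine way to handle what appears to be a typo in the lemma statement.
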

\begin{proof}
For any buyer $j\in \calB$,
    \begin{align*}
        \liminf_{T\rightarrow \infty} \frac 1 T \cdot   \mathbb E \left[ \sum_{t\in \tau_j(T)}f_j(A^t \cap \pi(R_j^t)) \right] &= \gamma_j \cdot \E \left[ f_j(A\cap \pi(R_j)) \right] && \text{(PASTA Lemma~\ref{pasta})}\\
        &\geq \frac c 2 \cdot \gamma_j \cdot \mathbb E \left[ f_j(R) \right] && \text{(Lemma~\ref{lem:subd lowerbound at stationary dist})}\\
        & = \frac c 2 \cdot \gamma_j \cdot F_j \left( \frac 1 {\gamma_j} \mathbf x_j\right). && (\textrm{Def. $R_j$ and $F_j(\cdot)$)} \qedhere 
    \end{align*}
\end{proof}    

The preceding lemma, together with solvability of our relaxation and known correlation gap results yield this section's main result, given by the following theorem.
\begin{thm}
    \Cref{alg:combinatorial} run with a $c$-balanced monotone CRS for polytope $\calP_\calF$ yields the following SPI for $\calF$-constrainted buyers:
    \begin{enumerate}
        \item A polytime $\frac c 2 \cdot (1-1/e-\eps)^2$-competitive SPI for monotone submodular utilities.
        \item A polytime $\frac c 2 \cdot \frac{0.3}{200}$-competitive SPI for non-negative submodular utilities.
        \item A $\frac c 2 \cdot (1-1/e-\eps)$-competitive SPIs for monotone submodular utilities. 
        \item A $\frac c 2 \cdot \frac 1 {200}$-competitive SPIs for non-negative submodular utilities. 
    \end{enumerate}
    \end{thm}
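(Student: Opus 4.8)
The plan is to combine the per-buyer lower bound already in hand with standard submodular-maximization machinery. First I would note that, running \Cref{alg:combinatorial} on any $\vec x\in\calP_\mathrm{off}(\calI)$ with $w_i=1-\exp(-\lambda_i/\mu_i)$ and a $c$-balanced monotone CRS, Lemma~\ref{lem:alg-correct} guarantees well-definedness, and chaining the PASTA property (Lemma~\ref{pasta}) with Lemma~\ref{lem:buyer's submodular utility} for monotone $f_j$ (or with the pruned variant in the remark following Lemma~\ref{lem:subd lowerbound at stationary dist} for general non-negative $f_j$) shows the algorithm's average reward is at least $\frac c2\sum_{j\in\calB}\gamma_j\,F_j\!\big(\tfrac1{\gamma_j}\vec x_j\big)$. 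By Lemma~\ref{lem:submodular-relaxation}, the benchmark satisfies $\optoff(\calI)\le\max\{\sum_j\gamma_j f_j^+(\tfrac1{\gamma_j}\vec x_j)\mid \vec x\in\calP_\mathrm{off}(\calI)\}$, so the theorem reduces to choosing $\vec x\in\calP_\mathrm{off}(\calI)$ that makes $\sum_j\gamma_j F_j(\tfrac1{\gamma_j}\vec x_j)$ a constant fraction of $\max\{\sum_j\gamma_j f_j^+(\tfrac1{\gamma_j}\vec x_j)\}$.

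Next I would put this in a convenient form. Writing $\vec y_j=\tfrac1{\gamma_j}\vec x_j$, the feasible region becomes a down-closed solvable polytope $\calQ$ — namely $\prod_j\calP_\calF$ intersected with the linear constraints $\sum_j\gamma_j y_{ij}\le\lambda_i$ and $y_{ij}\le w_i$ — while $\sum_j\gamma_j F_j(\vec y_j)$ is precisely the multilinear extension $H$ of the non-negative submodular function $h(T)=\sum_j\gamma_j f_j(\{i:(i,j)\in T\})$ on ground set $\calG\times\calB$, and likewise $h^+(\vec y)=\sum_j\gamma_j f_j^+(\vec y_j)$; moreover $h$ is monotone exactly when every $f_j$ is. Thus everything reduces to the purely combinatorial question of finding $\vec y\in\calQ$ with $H(\vec y)$ comparable to $\max_{\vec y'\in\calQ}h^+(\vec y')$.

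For monotone submodular utilities (items~1 and~3) I would run continuous greedy \cite{calinescu2011maximizing,vondrak2008optimal,vondrak2013symmetry} to find, in polynomial time, $\vec y\in\calQ$ with $H(\vec y)\ge(1-1/e-\eps)\max_{\vec y'\in\calQ}H(\vec y')$, then invoke the pointwise correlation-gap bound $H(\vec y')\ge(1-1/e)h^+(\vec y')$ for monotone submodular $h$ \cite{agrawal2012price}; chaining with the $c/2$ loss gives a polytime $\frac c2(1-1/e-\eps)(1-1/e)\ge\frac c2(1-1/e-\eps)^2$-competitive algorithm, which is item~1, while ignoring computation lets us maximize $h^+$ over $\calQ$ exactly (using only the known distributions, not the realizations), avoiding the greedy factor and giving item~3's $\frac c2(1-1/e-\eps)$. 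For non-negative submodular utilities (items~2 and~4) the correlation gap is no longer bounded pointwise, so instead I would combine (a) the $(1/e-\eps)\ge0.3$ polytime approximation for maximizing the multilinear extension of a non-negative submodular function over a solvable down-closed polytope, run over $\calQ$, with (b) a constant-factor bound $\max_{\vec y\in\calQ}H(\vec y)\ge\tfrac1{200}\max_{\vec y\in\calQ}h^+(\vec y)$; together with the $c/2$ loss this yields the claimed $\frac c2\cdot\frac{0.3}{200}$ (polytime) and $\frac c2\cdot\frac1{200}$ (computationally unbounded) ratios.

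The main obstacle I anticipate is establishing step~(b) for non-monotone submodular functions: in contrast to the monotone case one \emph{cannot} simply maximize $h^+$ and evaluate $H$ at the maximizer, since $H$ can be arbitrarily smaller than $h^+$ at a fixed point. The argument must instead produce a point inside $\calQ$ whose multilinear value retains a constant fraction of $\max_{\calQ}h^+$ — e.g., by scaling down or sparsifying an optimal fractional solution to $\max_{\calQ}h^+$ and exploiting down-closedness of $\calQ$ — with the (unoptimized) constant $\tfrac1{200}$ absorbing the slack of that reduction. Everything else — well-definedness, the PASTA-based accounting, and the submodular-maximization primitives — is already established or standard, so the proof is essentially this reduction together with the non-monotone correlation-gap estimate.
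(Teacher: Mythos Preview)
Your proposal is correct and follows essentially the same route as the paper: reduce to choosing $\vec x\in\calP_\mathrm{off}(\calI)$ so that $\sum_j\gamma_j F_j(\vec x_j/\gamma_j)$ is a constant fraction of $\max\sum_j\gamma_j f_j^+(\vec x_j/\gamma_j)$, then combine continuous-greedy style multilinear maximization with a correlation-gap bound. The only point you leave open---your step~(b) for the non-monotone case---is exactly what the paper imports as a black box: the bound $F(\vec x/2)\ge\tfrac{1}{200}\,f^+(\vec x)$ of \cite{rubinstein2017combinatorial}, applied at an $f^+$-maximizer and using that $\vec x/2$ stays in the (down-closed) polytope, precisely the scaling argument you anticipate.
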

\begin{proof}
    By \cite{vondrak2008optimal}, for any $\eps>0$, there exists a polytime algorithm that $\left(1-\frac{1}{e}-\eps\right)$-approximates the multilinear extensions of a monotone non-negative submodular function subjec to any solvable poltyope, e.g., $\calP_\mathrm{off}(\calI)$,  provided that $\calP_\calF$ is solvable.
    On the other hand, by the correlation gap result of \cite{vondrak2007submodularity}, for every $\vec x\in [0,1]^n$, we have $F(\vec x) \geq \left( 1 - \frac 1 e \right) \cdot f^+(\vec x)$. Therefore, by applying the algorithm of \cite{vondrak2008optimal} to approximately maximize $\max \left\{ \sum_{j \in \calB} F_j\left(\frac 1 {\gamma_j} \mathbf x_j \right) \;\middle\vert\; \mathbf{x} \in \calP_\mathrm{off}(\calI)\right\}$, we obtain a polynomial time algorithm that computes $\vec{x}\in \calP_\mathrm{off}(\calI)$ such that
    \begin{align}\label{eq:find_nearoptimal_rates_monotonesubmodular}
    \sum_{j\in \calB} \gamma_j \cdot F_j \left ( \frac 1 {\gamma_j} \mathbf x_j\right) &\geq \left(1-\frac{1}{e}\right)^2 \max \left\{ \sum_{j \in \calB} f^+_j\left(\frac 1 {\gamma_j} \mathbf x_j \right) \;\middle\vert\; \mathbf{x} \in \calP_\mathrm{off}(\calI)\right\} \notag \\
    &\geq \left( 1- \frac 1 e\right) \cdot \left(1 - \frac 1 e -\epsilon \right) \optoff(\calI).  & \text{(Lemma~\ref{lem:submodular-relaxation})}
    \end{align}
    Therefore, by running \Cref{alg:combinatorial} with selling rates obtained in Equation~\ref{eq:find_nearoptimal_rates_monotonesubmodular} and $c$-balanced monotone CRS for $\calF$ and relaxation $\calP_\calF$ we obtain the claimed competitive ratio of the first part of the theorem, as the obtained algorithm has average reward at least 
    \begin{align*}
         \liminf_{T\rightarrow \infty} \frac 1 T \cdot \sum_{j\in \calB} \mathbb E \left[ \sum_{t \in \tau_j(T)} f_j( A^t \cap R_j^t)\right] &\geq \frac{c}{2} \cdot \sum_{j\in \calB} \gamma_j \cdot F_j\left(\frac{1}{\gamma_j} \vec x_j \right) &&\text{(Lemma~\ref{lem:buyer's submodular utility})}\\
         &\geq \frac c 2 \left(1 - \frac 1 {e}\right)^2 \cdot \optoff. &&\text{(Equation~\ref{eq:find_nearoptimal_rates_monotonesubmodular})}
    \end{align*}

    The remaining parts of the theorem follow in much the same way as the first, with minor modifications.
    For example, for non-negative submodular functions, the best known approximation of the multilinear extension subject to solvable polytopes is $0.3$ \cite{vondrak2013symmetry} and using that for any $\vec x\in [0,1]^n$, we have $F(\vec x /2) \geq \frac 1 {200} \cdot f^+(\vec x)$ \cite{rubinstein2017combinatorial}. 
    Similar to part one, applying the algorithm of \cite{vondrak2013symmetry} to approximately maximize $\max \left\{ \sum_{j \in \calB} F_j\left(\frac 1 {\gamma_j} \mathbf x_j \right) \;\middle\vert\; \mathbf{x} \in \calP_\mathrm{off}(\calI)\right\}$ and using $\vec{x}/2$ as an input and a monotone CRS $\pi$ for $\calF$ yields a polynomial time algorithm for non-negative submodular SPIs that is $\frac c 2 \cdot \frac{0.3}{200}$ approximate.
    
    Finally, the computationally unbounded algorithms follow the exact logic of the preceding algorithms, but avoid the use of the polytime approximation algorithms of \cite{vondrak2008optimal} and \cite{vondrak2013symmetry}, and instead use a vector $\vec{x}$ exactly maximizing the multilinear extension. Hence, we can compute selling rates $\vec x$ (with unbounded computations) which approximate $\sum_{j\in \calB} \gamma_j \cdot F_j \left ( \frac 1 {\gamma_j} \mathbf x_j\right)$ within $\left(1 - 1/e -\epsilon \right)$ and $\frac{1}{200}$. This concludes the proof of the last two parts of the theorem.
\end{proof}
\section{Approximating Online OPT for Matroids}\label{sec:Opton}

Relying on optimal CRS for matroids \cite{vondrak2011submodular}, \Cref{alg:combinatorial} yields a $\frac{1}{2}(1-\frac1e)$-competitive matroid-SPI i.e., it yields a $\frac{1}{2}(1-\frac1e)$ approximation of the optimum offline algorithm. 
In this section, we show that \Cref{alg:combinatorial} (alebit with different parameters) yields a better approxiomation (in polytime) of the optimum online algorithm for matroid-SPI. 
\begin{thm}\label{thm:online-OPT-matroid}
    There exists a polynomial-time algorithm for SPI with matroid-constrained buyers that $\left(\frac 1 2 \cdot \left( 1- \frac 1 e\right) + \epsilon_0\right)$-approximates the optimum online algorithm, for some $\epsilon_0 \geq 0.0019$.
\end{thm}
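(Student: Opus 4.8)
The plan is to run \Cref{alg:combinatorial} once more, but fed with an optimal solution $\vec x^\star$ of the \emph{online} relaxation $\max\{\vec v\cdot\vec x\mid \vec x\in\calP_{\mathrm{on}}(\calI)\}$, with a more aggressive scaling vector $\vec w$, and with an ``extended'' CRS. Since $\vec v\cdot\vec x^\star\ge\opton(\calI)$ by \Cref{cor:RB-geq-OPT}, and \Cref{alg:combinatorial} sells good $i$ to buyer $j$ at rate $s_{ij}=\gamma_j\Pr[i\in A\cap\pi(R_j)]$ by \Cref{lem:lowerbound_sellingrate}, it suffices by linearity to prove $s_{ij}\ge\big(\tfrac12(1-\tfrac1e)+\eps_0\big)x^\star_{ij}$ for all $i,j$. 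Running the chain of \Cref{lem:available-selected} verbatim, and keeping track of constants, gives
$$s_{ij}=\gamma_j\Pr[i\in A\cap\pi(R_j)]\;\ge\;\gamma_j\Pr[i\in S\cap\pi(R_j)]\;=\;\gamma_j\,\Pr[i\in\pi(R_j)\mid i\in P]\cdot\Pr[i\in S]\;\ge\;c\cdot x^\star_{ij}\cdot\frac{\Pr[i\in S]}{w_i},$$
where the first inequality uses $S\subseteq A$ (\Cref{obs:availibility-saved}), the equality uses \Cref{obs:independence across goods}, and the last inequality uses $c$-balancedness of $\pi$ together with $q_{ij}=x^\star_{ij}/(\gamma_j w_i)$ and $\pi(R_j)\subseteq P$; here $c$ is whatever balance a CRS can guarantee on the proposal marginals the algorithm actually produces. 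So two quantities control everything: the realizable balance $c$, and the ``saved fraction'' $\Pr[i\in S]/w_i$.

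The leverage from the online relaxation is its extra constraint \eqref{ec20-constraint}, which, combined with \eqref{eqn:ec-22-constraint}, shows $x^\star_{ij}/\gamma_j\le\min\{\Pr[i\in P],(\lambda_i-\textstyle\sum_\ell x^\star_{i\ell})/\mu_i\}$; this lets us take $w_i:=\max\{\tfrac{1}{1+\eps_1}\Pr[i\in P],\ \max_\ell x^\star_{i\ell}/\gamma_\ell\}\le\Pr[i\in P]$ for a small parameter $\eps_1>0$ and still keep every proposal probability $q_{ij}=x^\star_{ij}/(\gamma_j w_i)\le 1$, so \Cref{alg:combinatorial} remains well defined. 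Shrinking $w_i$ below $\Pr[i\in P]$ makes the algorithm propose good $i$ more often, which helps the saved fraction, but it also makes the proposal marginals $z_{ij}=\Pr[i\in P]\,q_{ij}=\tfrac{\Pr[i\in P]}{w_i}\cdot\tfrac{x^\star_{ij}}{\gamma_j}$ a coordinatewise blow-up of the matroid-polytope point $\vec x^\star_j/\gamma_j$ by a factor in $[1,1+\eps_1]$, so $\vec z_j$ may leave $\calP_\calF$. The first technical ingredient, \Cref{lem:extended-crs}, must therefore supply a \emph{monotone} CRS that on the $(1+\eps_1)$-dilation of the matroid polytope still returns an independent set and is $c(\eps_1)$-balanced, with $c(\eps_1)\ge(1-\tfrac1e)-D(\eps_1)$ and $D(\eps_1)\to 0$ as $\eps_1\to 0$; I would build it by first mapping $\vec z_j$ back into $\calP_\calF$ (writing it as a convex combination of a polytope point and a small independent correction) and then invoking the optimal monotone matroid CRS of \cite{vondrak2011submodular}, charging the balance loss to the correction.

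For the saved fraction I would reuse the birth--death computation of \cite{kessel2022stationary}: the count of saved items of good $i$ is a birth--death chain with birth rate $\lambda_i$ and, out of any positive state, death rate $\mu_i k+\beta_i$, where $\beta_i=\sum_j\gamma_j q_{ij}=(\sum_j x^\star_{ij})/w_i$ is the rate at which saved items of $i$ get used; writing $b_i:=\beta_i/\mu_i$ and letting $p_0(b,\rho)$ be the chain's empty-state probability at load $\rho=\lambda_i/\mu_i$, we get $\Pr[i\in S]=1-p_0(b_i,\rho_i)$ and hence $\tfrac{\Pr[i\in S]}{w_i}=\tfrac{\mu_i}{\sum_j x^\star_{ij}}\cdot b_i\,(1-p_0(b_i,\rho_i))$. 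Since $b\mapsto b(1-p_0(b,\rho))$ is increasing (from $0$ towards $\rho$), a smaller $w_i$ (a larger $b_i$) can only help. Now split on $\vec w$: for goods on which \eqref{eqn:ec-22-constraint} is (nearly) tight we are forced to keep $w_i=\Pr[i\in P]$, but then the \emph{same} tight constraint forces $\sum_j x^\star_{ij}\le\lambda_i-\mu_i\Pr[i\in P]$, i.e.\ $b_i\le\tfrac{\lambda_i}{\mu_i\Pr[i\in P]}-1$, which is exactly one less than the offline worst case and already lifts $\Pr[i\in S]/\Pr[i\in P]$ above $\tfrac12$ by a $\rho_i$-uniform amount; for all other goods, $w_i=\tfrac{1}{1+\eps_1}\Pr[i\in P]$ (or the even smaller $\max_\ell x^\star_{i\ell}/\gamma_\ell$) raises $b_i$ by an $\eps_1$-order amount, again lifting the ratio above $\tfrac12$ by $\Theta(\eps_1)$. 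Multiplying by the respective realizable balances ($1-\tfrac1e$ in the first case, $c(\eps_1)$ in the second) and optimizing the trade-off between $D(\eps_1)$ and the $\Theta(\eps_1)$ gain yields $\tfrac12(1-\tfrac1e)+\eps_0$ with a concrete $\eps_0\ge 0.0019$.

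The main obstacle is two-fold and quantitative. First, \Cref{lem:extended-crs}: producing a matroid CRS whose balance degrades only mildly (ideally $D(\eps_1)=O(\eps_1)$ with a small constant) on a neighborhood of the matroid polytope, while staying monotone and always returning independent sets, is the one place with no off-the-shelf tool. Second, pinning the constant: both the ``no-blow-up'' guarantee $(1-\tfrac1e)\cdot\min_{\rho}g_{\mathrm{on}}(\rho)$ and the ``blow-up'' guarantee $c(\eps_1)\cdot\min_{\rho}h_{\eps_1}(\rho)$ are explicit one-parameter minimizations over the load $\rho$ of functions of the birth--death chain, which must be bounded numerically, and then $\eps_1$ chosen to balance the two (which pull in opposite directions); extracting a positive $\eps_0$ is where the analysis is delicate, hence the modest value $0.0019$.
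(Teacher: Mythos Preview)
Your plan is essentially the paper's: solve $\max\{\vec v\cdot\vec x\mid\vec x\in\calP_{\mathrm{on}}(\calI)\}$, run \Cref{alg:combinatorial} with an aggressive $\vec w$ that exploits Constraint~\eqref{ec20-constraint}, use an extended matroid CRS (\Cref{lem:extended-crs}) tolerating proposal marginals in $c\cdot\calP_\calF$ with balance $\tfrac1c(1-e^{-c})$, and bound $\Pr[i\in S]/w_i$ via the saved-item birth--death chain.

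There is, however, one genuine error in your final step. You claim ``tight'' goods realize balance $1-\tfrac1e$ while the rest realize $c(\eps_1)$. But the CRS $\pi$ is applied once per arriving buyer $j$ to the \emph{entire} proposal vector $\vec z_j$, which mixes both kinds of goods, and $\vec z_j$ is only guaranteed to lie in $(1+\eps_1)\calP_\calF$; hence every coordinate receives the same degraded balance. After this correction your argument still survives at first order (the derivative of $\tfrac1c(1-e^{-c})\cdot\tfrac c{1+c}$ at $c=1$ is positive, so the non-tight product does exceed $\tfrac12(1-\tfrac1e)$ for small $\eps_1$), but the margin is only $O(\eps_1)$ and you never actually pin down the ``$\rho_i$-uniform'' constant in the tight case.

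The paper's execution differs in details that make the constants explicit. It splits on $u_i:=(\lambda_i-\sum_\ell x^\star_{i\ell})/\mu_i$ rather than on $\max_\ell x^\star_{i\ell}/\gamma_\ell$, setting $w_i=\min(\Pr[i\in P],u_i)$ on $\calG_L=\{i:\Pr[i\in P]\le c\,u_i\}$ (where $\Pr[i\in S]/w_i\ge 0.656$ is imported from \cite{kessel2022stationary}) and $w_i=\Pr[i\in P]/c$ on $\calG_H$ (where a direct birth--death computation gives $\Pr[i\in S]/w_i\ge c/(1+c)$). The resulting bound $\tfrac1c(1-e^{-c})\cdot\min\big(0.656,\tfrac c{1+c}\big)$ is then maximized at a \emph{moderate} blow-up $c\approx 1.14$, not an infinitesimal $1+\eps_1$, yielding $\approx 0.318$ and hence $\eps_0\ge 0.0019$. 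Thinking of the dilation as $1+\eps_1$ with $\eps_1\to 0$ recovers only the baseline $\tfrac12(1-\tfrac1e)$, so to land the stated constant you must use a non-trivial $c$ and the explicit $0.656$.
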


To motivate our algorithm, we first recall that by (the proof of) Lemma~\ref{lem:alg-correct}, \Cref{alg:combinatorial} run with $w_i=1-\exp(-\lambda_i/\mu_i)$ (which is equal to $\Pr[i\in P]$, by Proposition \ref{lem:Pr-available-exact}) results in each good $i$ proposing independently to buyer $j$ with probability 
$\Pr[i\in R_j]\leq x_{ij}/\gamma_j$, and so by Constraint \ref{eqn:flow-constraint-buyer-comb} the set of proposers is chosen from a product distribution with marginals in $\calP_\calF$. This allows us to appeal to CRS for the constraint set $\calF$. 

To approximate the optimum \emph{online} algorithm, we take $\vec x$ to be a solution to $\calP_\mathrm{on}(\calI)$, and note that we can now pick any smaller $w_i\geq \min(1-\exp(-\lambda_i/\mu_i),(\lambda_i-\sum_{\ell\in \calB}x_{i\ell})/\mu_i)$, and let goods propose with probability $x_{ij}/(\gamma_j\cdot w_i)\leq 1$, where the inequality follows by Constraint \eqref{ec20-constraint}.
However, proposing with higher marginal probability than $x_{ij}/\gamma_j$ results in a vector of proposals that can exceed the relaxation polytope $\calP_\calF$.
A key ingredient in the proof of Theorem~\ref{thm:online-OPT-matroid} are  therefore \emph{extended} CRS for matroids, yielding a bound on the probability of selecting an item even if the input vector (slightly) exceeds the relaxed polytope.

\begin{restatable}{lem}{extendedCRS}\label{lem:extended-crs}
Let $\calM=(\calG,\calF)$ be a matroid and let $\vec{x} \in [0,1]^{|\mathcal G|}\cap c \cdot \calP_\calF$ for some $c\geq 1$.
Then, there exists a function $\pi_x: 2^\calG \rightarrow \calF$ that satisfies the following, for $R\sim \calG(\vec{x})$. 
\begin{enumerate}
    \item $\pi_x(R) \subseteq R$,
    \item $\pi_x(R) \in \calF$, and
    \item $\Pr[i \in \pi_x(R) \mid i \in R] \geq \frac{1}{c} \cdot (1-e^{-c})$.
\end{enumerate}    
\end{restatable}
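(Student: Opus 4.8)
The plan is to reduce the ``extended'' CRS for a point $\vec{x}\in c\cdot\calP_\calF$ to an ordinary (monotone) CRS applied to a rescaled point inside the matroid polytope, paying for the rescaling in the balance ratio. Concretely, set $\vec{y} := \vec{x}/c \in \calP_\calF\cap[0,1]^{|\calG|}$. Since matroid polytopes admit an optimal $(1-1/e)$-balanced (in fact $b$-balanced for $b$ a function of the marginal, but here we use the uniform bound) monotone CRS by \cite{vondrak2011submodular}, in fact we want the sharper fact that there is a monotone CRS $\pi$ for matroids with $\Pr[i\in\pi(R')\mid i\in R']\ge g(y_i)$ where $g$ is the standard function $g(y)=(1-e^{-y})/y$; what we will actually use is the worst-case guarantee $\Pr[i\in\pi(R')\mid i\in R']\ge 1-1/e$ for $R'\sim\calG(\vec y)$. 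The issue is that $R\sim\calG(\vec x)$ has larger marginals than $\vec y$, so we cannot feed $R$ directly to $\pi$.

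The key trick is a standard ``thinning'' / subsampling step: given the active set $R\sim\calG(\vec x)$, independently keep each $i\in R$ with probability $y_i/x_i = 1/c$, obtaining a set $R'$. Then $R'\sim\calG(\vec y)$ exactly (each $i$ is in $R'$ with probability $x_i\cdot(1/c)=y_i$, independently), so we may legitimately run the matroid CRS $\pi$ on $R'$ and output $\pi_x(R):=\pi(R')$. Properties (1) and (2) are immediate: $\pi(R')\subseteq R'\subseteq R$ and $\pi(R')\in\calF$. For property (3) we compute, conditioning on $i\in R$,
\[
\Pr[i\in\pi_x(R)\mid i\in R] \;=\; \Pr[i\in R'\mid i\in R]\cdot\Pr[i\in\pi(R')\mid i\in R']
\;\ge\; \frac1c\cdot\Bigl(1-\frac1e\Bigr),
\]
wait — this gives $\frac1c(1-1/e)$, not $\frac1c(1-e^{-c})$, so the naive thinning is lossy. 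The fix, which is the main point, is to thin less aggressively by exploiting the marginal-dependent balance guarantee of the matroid CRS: the optimal matroid CRS of \cite{vondrak2011submodular} satisfies $\Pr[i\in\pi(R')\mid i\in R']\ge (1-e^{-y_i'})/y_i'$ for any $\vec y'\in\calP_\calF$, which is decreasing in $y_i'$. So instead of thinning all the way down to $\vec y=\vec x/c$, we only need $R'$ to have marginals in $\calP_\calF$; since $\vec x/c\in\calP_\calF$ and scaling down preserves membership, the ``cheapest'' valid choice keeps marginals as large as possible subject to lying in $\calP_\calF$, but a clean uniform choice is to keep each active element with probability $1/c$, giving marginal $x_i/c\le 1$, and then the per-element guarantee is $(1-e^{-x_i/c})/(x_i/c)\ge (1-e^{-1/c\cdot c\cdot 1})/1$... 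The cleanest route: keep each active $i$ independently with probability $1/c$ so $R'\sim\calG(\vec x/c)$, apply the marginal-sensitive matroid CRS, and bound
\[
\Pr[i\in\pi_x(R)\mid i\in R]=\frac1c\cdot\frac{1-e^{-x_i/c}}{x_i/c}=\frac{1-e^{-x_i/c}}{x_i}\ge\frac{1-e^{-c}}{c},
\]
where the last inequality uses $x_i\le c$ together with monotonicity of $t\mapsto (1-e^{-t/c})/t$... actually $(1-e^{-x_i/c})/x_i$ as a function of $x_i\in(0,c]$ is decreasing, so its minimum over $x_i\in(0,c]$ is at $x_i=c$, yielding $(1-e^{-c})/c\cdot(c/c)$, wait $(1-e^{-c/c})/c=(1-1/e)/c$. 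I need to be careful: at $x_i=c$ we get $(1-e^{-1})/c$, which is again the lossy bound.

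Let me instead take the thinning probability to be $x_i/c$ replaced by just ensuring $R'\sim\calG(\vec z)$ for the ``largest'' $\vec z\le\vec x$ with $\vec z\in\calP_\calF$ — but a uniform and sufficient choice is $\vec z=\vec x/c$ and we use that $R\sim\calG(\vec x)$ can be coupled so that $i\in R$ implies we want the combined probability $\Pr[i\in R']$ unconditionally to equal $x_i/c$ yet conditioned on $i\in R$ we retain $i$ with probability $1/c$; the subtlety producing $1-e^{-c}$ is that we should retain $i$ conditioned on $i\in R$ with probability $1$ if possible and instead thin the \emph{other} coordinates more, i.e., use a CRS that is defined directly for $\vec x\in c\calP_\calF$ via the rounding of \cite{chekuri2011multi}. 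So the actual plan: invoke the matroid CRS construction of \cite{vondrak2011submodular} which, run at scaling $b\ge1$ on $b\cdot\vec y$ with $\vec y\in\calP_\calF$, certifies $\Pr[i\in\pi(R)\mid i\in R]\ge (1-e^{-b})/b$ — this is exactly their Lemma/Theorem on $(b,1-e^{-b})/b$-balanced CRS valid on $b\cdot\calP_\calF$. Taking $b=c$, $\vec y=\vec x/c$, so $b\cdot\vec y=\vec x$, gives property (3) directly with constant $(1-e^{-c})/c$, and properties (1),(2) are built into their construction. The main obstacle is thus purely citational/bookkeeping: making sure the quoted matroid CRS of \cite{vondrak2011submodular} indeed extends to $c\cdot\calP_\calF$ with balance $(1-e^{-c})/c$ and is monotone; modulo that, the proof is the two-line coupling/thinning argument above together with the observation that the matroid independence is preserved under taking subsets. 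I would present it as: (i) state the scaled matroid CRS fact with constant $(1-e^{-c})/c$ on $c\cdot\calP_\calF$, (ii) set $\pi_x$ to be that CRS, (iii) verify (1)–(3) line by line.
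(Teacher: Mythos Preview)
Your subsampling attempts give the weaker bound $\tfrac{1}{c}(1-1/e)$, not $\tfrac{1}{c}(1-e^{-c})$, and you correctly notice this. The problem is that the two ``fixes'' you propose do not close the gap. First, there is no matroid CRS in \cite{vondrak2011submodular} with per-element balance $(1-e^{-y_i'})/y_i'$; the guarantee there is a uniform $(1-1/e)$ for $\vec y'\in\calP_\calF$, and your computation minimizing $(1-e^{-x_i/c})/x_i$ over $x_i\in(0,1]$ again bottoms out at $(1-1/e)/c$ (take $x_i\to 1$ and $c$ close to $1$, or more directly note $(1-e^{-x_i/c})/x_i \ge (1-e^{-1/c})$ is not the bound you want either). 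Second, the ``$(b,(1-e^{-b})/b)$-balanced on $b\cdot\calP_\calF$'' statement you want to cite is stated in \cite{vondrak2011submodular,chekuri2011multi} for $b\le 1$, not $b\ge 1$; extending it to $c\ge 1$ is exactly the content of the lemma, so appealing to it is circular. This matters: for $c>1$ we have $1-e^{-c}>1-1/e$, and the paper's application (Theorem~\ref{thm:online-OPT-matroid} with $c=1.14$) genuinely needs the stronger constant to beat $\tfrac{1}{2}(1-1/e)$.

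The paper's route is the one you gesture at but never execute: it re-runs the Poisson-clock correlation-gap argument directly for $\vec x\in c\cdot\calP_\calF$. Set $\vec x'=\vec x/c\in\calP_\calF$, give each element an independent Poisson clock of rate $x'_e$, and let $S(t)$ be the set of elements that have rung by time $t$. Then $S(c)$ is stochastically dominated by $R\sim\calG(\vec x)$ (since $\Pr[e\in S(c)]=1-e^{-x_e}\le x_e$), and the standard differential inequality $\tfrac{d}{dt}\E[r_\calM(S(t))]\ge r^+_\calM(\vec x')-\E[r_\calM(S(t))]$ integrated over $[0,c]$ gives $\E[r_\calM(R)]\ge(1-e^{-c})\,r^+_\calM(\vec x')\ge\tfrac{1}{c}(1-e^{-c})\,r^+_\calM(\vec x)$, the last step by concavity of $r^+_\calM$. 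This correlation-gap bound, together with the LP-duality characterization of CRSes \cite{calinescu2011maximizing,dughmi2019outer}, yields the $\tfrac{1}{c}(1-e^{-c})$-balanced scheme. The point is that the factor $(1-e^{-c})$ arises from running the clock process for time $c$ rather than time $1$; any reduction that first thins down to $\calP_\calF$ and then applies a time-$1$ analysis can only recover $(1-1/e)$.
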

The proof of Lemma~\ref{lem:extended-crs} is similar to the proof of the correlation gap of monotone submodular functions from \cite{calinescu2011maximizing} and is presented in \Cref{appendix:OPTon} for completeness.

Given the above, we want the marginal proposal probabilities to lie within a slightly scaled up version of the polytope $\calP_\calF$, else we would obtain too little from the extended CRS. 
Consequently, we wish to avoid letting goods with presence probability much larger than $\frac{\lambda_i - \sum_{\ell \in \calB} x_{i\ell}}{\mu_i} $ which leads marginal probability of proposals farther away from the polytope $\calP_\calF$.
We will therefore use different forms for $w_i$ for goods $i$, depending on whether their presence probability is higher or lower than $c \cdot \left( \frac{\lambda_i - \sum_{\ell \in \calB} x_{i\ell}}{\mu_i} \right)$, for $c\geq 1$ optimized later.
We therefore define the following partition of the goods.
\begin{enumerate}
    \item $\calG_L:=\left\{i\in \calG \;\middle\vert\; \Pr[i\in P] \leq c \cdot \left( \frac{\lambda_i - \sum_{\ell \in \calB} x_{i\ell}}{\mu_i} \right)\right\}$,
    \item $\calG_H:=\left\{i\in \calG \;\middle\vert\; \Pr[i\in P] > c \cdot \left( \frac{\lambda_i - \sum_{\ell \in \calB} x_{i\ell}}{\mu_i} \right)\right\} = \calG\setminus \calG_L$.
\end{enumerate}

We now define our variant of \Cref{alg:combinatorial} used in this section, and show that it outputs valid allocations.
\begin{lem}\label{lem:alg-correct-opton}
For any matroid SPI instance $\calI$, consider \Cref{alg:combinatorial} run on $\bf{x}\in \calP_{on}(\calI)$ with the extended CRS $\pi$ from Lemma~\ref{lem:extended-crs} and $\bf{w}$ with $$w_i = \begin{cases}\min \left( \Pr[i\in P], \frac{\lambda_i - \sum_{\ell \in \calB} x_{i\ell}}{\mu_i} \right) & i\in \calG_L\\
\Pr[i\in P]/c & i\in \calG_H.
\end{cases}$$
Then, this algorithm is well-defined and outputs a feasible allocation for each arriving buyer.
\end{lem}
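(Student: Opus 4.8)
The plan is to verify three things: (i) the proposal probabilities $q_{ij} = x_{ij}/(\gamma_j w_i)$ lie in $[0,1]$, so Line~\ref{line:i-proposes-j-comb} is well-defined; (ii) the resulting marginal vector of proposals lies in $c\cdot\calP_\calF$, so that the extended CRS of Lemma~\ref{lem:extended-crs} applies; and (iii) the set returned is a feasible (independent) subset of the available goods, so Line~\ref{line:Allocation} allocates a valid set. Steps (i) and (iii) are routine once (ii) is in place, so the crux is (ii).

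First I would handle well-definedness of $q_{ij}$. For $i\in\calG_L$ we have $w_i = \min(\Pr[i\in P], (\lambda_i-\sum_\ell x_{i\ell})/\mu_i)$; by Constraint~\eqref{eqn:ec-22-constraint} and Constraint~\eqref{ec20-constraint} respectively, $x_{ij}/\gamma_j \le \Pr[i\in P] = 1-\exp(-\lambda_i/\mu_i)$ and $x_{ij}/\gamma_j \le (\lambda_i-\sum_\ell x_{i\ell})/\mu_i$, hence $x_{ij}/\gamma_j \le w_i$ and $q_{ij}\le 1$. For $i\in\calG_H$ we have $w_i = \Pr[i\in P]/c$; since $i\in\calG_H$ means $\Pr[i\in P] > c(\lambda_i-\sum_\ell x_{i\ell})/\mu_i$, we get $w_i > (\lambda_i-\sum_\ell x_{i\ell})/\mu_i \ge x_{ij}/\gamma_j$ again by Constraint~\eqref{ec20-constraint}, so $q_{ij}<1$. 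Non-negativity is immediate from \eqref{eqn:positivity}. As in Lemma~\ref{lem:alg-correct}, each present good $i$ is included in $R_j^t$ with probability $q_{ij}\cdot\Pr[i\in P^t]\le q_{ij}\cdot\Pr[i\in P] = \Pr[i\in P]\cdot x_{ij}/(\gamma_j w_i)$, and presence is independent across goods and independent of the Bernoulli coins, so $R_j^t$ is drawn from a product distribution; denote its marginals by $\vec z_j^t$.

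The main step is to bound $\vec z_j^t$. Using $\Pr[i\in P^t]\le\Pr[i\in P]$ (Proposition~\ref{monotonicity of BD Processes}), the marginal of $i$ is at most $\Pr[i\in P]\cdot x_{ij}/(\gamma_j w_i)$. For $i\in\calG_L$ this is at most $x_{ij}/\gamma_j$ since $w_i\le\Pr[i\in P]$; for $i\in\calG_H$ it equals $c\cdot x_{ij}/\gamma_j$. Hence componentwise $\vec z_j^t \le c\cdot \vec x_j/\gamma_j$ (using $c\ge 1$ for the $\calG_L$ coordinates), and since $\vec x_j/\gamma_j\in\calP_\calF$ by Constraint~\eqref{eqn:flow-constraint-buyer-comb} and the matroid polytope is downward-closed, $\vec z_j^t$ is dominated by a point of $c\cdot\calP_\calF$ and thus itself lies in $c\cdot\calP_\calF$ (which, being a scaled matroid polytope, is downward-closed). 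Therefore Lemma~\ref{lem:extended-crs} applies to input $\vec z_j^t$: the map $\pi_{\vec z_j^t}$ satisfies $\pi_{\vec z_j^t}(R_j^t)\subseteq R_j^t$ and $\pi_{\vec z_j^t}(R_j^t)\in\calF$. Finally, Line~\ref{line:Allocation} allocates only the available goods in $\pi(R_j^t)$, a subset of an independent set, hence independent by downward-closedness of $\calF$; so the allocation is feasible. The only subtlety I would be careful about is ensuring the extended-CRS hypothesis is literally "$\vec z\in[0,1]^{|\calG|}\cap c\cdot\calP_\calF$" rather than a per-coordinate bound, which is why I invoke downward-closedness of the (scaled) matroid polytope explicitly.
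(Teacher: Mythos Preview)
Your overall structure matches the paper's proof, and parts (i) and (iii) are handled correctly. However, there is a genuine error in step (ii) for $i\in\calG_L$. You write that the marginal $\Pr[i\in P]\cdot x_{ij}/(\gamma_j w_i)$ is at most $x_{ij}/\gamma_j$ ``since $w_i\le\Pr[i\in P]$.'' But $w_i\le\Pr[i\in P]$ gives $\Pr[i\in P]/w_i\ge 1$, so this inequality goes the wrong way: you only get $\Pr[i\in P]\cdot x_{ij}/(\gamma_j w_i)\ge x_{ij}/\gamma_j$, not $\le$. Indeed, when $w_i=(\lambda_i-\sum_\ell x_{i\ell})/\mu_i<\Pr[i\in P]$, the marginal can strictly exceed $x_{ij}/\gamma_j$.

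The fix, which is what the paper does, is to first establish the uniform bound $w_i\ge\Pr[i\in P]/c$ for \emph{all} $i\in\calG$ and then conclude $z_{ij}^t\le c\cdot x_{ij}/\gamma_j$ directly. For $i\in\calG_H$ this is the definition of $w_i$; for $i\in\calG_L$ you use the definition of $\calG_L$, namely $\Pr[i\in P]\le c\cdot(\lambda_i-\sum_\ell x_{i\ell})/\mu_i$, together with $c\ge 1$, to see that both terms in the $\min$ defining $w_i$ are at least $\Pr[i\in P]/c$. With this correction your argument goes through, and your explicit invocation of downward-closedness of the matroid polytope to pass from $\vec z_j^t\le c\cdot\vec x_j/\gamma_j$ to $\vec z_j^t\in c\cdot\calP_\calF$ is a nice clarification that the paper leaves implicit.
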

\begin{proof}
Fix time $t\geq 0$ and buyer $j$. We observe that $w_i \geq {\Pr[i\in P]/c}$ for all $i\in \calG$. It is clear from the definition that above holds for $i\in \calG_H$. For $i\in \calG_L$, if $w_i = \Pr[i\in P]$ then again $w_i \geq {\Pr[i\in P]/c}$ as $c\geq 1$. Next, if $w_i = \frac{\lambda_i - \sum_{\ell \in \calB} x_{i\ell}}{\mu_i}$ then by the definition of $\calG_L$, we have $w_i \geq {\Pr[i\in P]}/{c}$.

First we need to show that $q_{ij} \in [0,1]$. Clearly $q_{ij}\geq 0$. Now, for $i\in \calG_L$, $q_{ij} = \frac{x_{ij}}{\gamma_j \cdot w_i} \leq 1$, since $\vec x$ satisfies Constraints~\eqref{eqn:ec-22-constraint} and \eqref{ec20-constraint}. For $i\in \calG_H$, we similarly have:
\begin{equation*}
    q_{ij} = \frac{c \cdot x_{ij}}{\gamma_j \cdot \Pr[i\in P]} \leq \frac c {\Pr[i\in P]} \cdot \left( \frac{\lambda_i - \sum_{\ell \in \calB} x_{i\ell}}{\mu_i} \right) < 1.
\end{equation*}
 
Next, we note that presence of different goods is independent, as are the $\Ber(q_{ij})$ coins in \Cref{line:i-proposes-j-comb}, and so $R^t_j$ is drawn from a product distribution over $\calG$.
We argue that the marginalls of this distribution at each time $t\geq 0$, denoted by
$z^t_{ij} := \Pr[i \in R^t_j]$, 
lie within $c \cdot\calP_\calF$.
For any good $i\in \calG$, by Proposition~\ref{monotonicity of BD Processes} and $w_i \geq \Pr[i\in P]/c$,
\begin{align*}
         z^t_{ij} = \frac{x_{ij}}{\gamma_j\cdot w_i} \cdot \Pr[i\in P^t]\leq  \frac{ x_{ij}}{\gamma_j\cdot w_i} \cdot \Pr[i\in P]\leq c \cdot \frac{x_{ij}}{\gamma_j}.
\end{align*}

On the other hand, Constraint~\eqref{eqn:flow-constraint-buyer-comb}
ensures that $\mathbf{x}_j \in \gamma_j\cdot \calP_\calF$, and so $\mathbf{z}_j^t\in c\cdot \calP_\calF$.
Consequently, by properties of the extend CRS (Lemma~\ref{lem:extended-crs}), $\pi$ outputs a feasible set $\pi(R^t_j)\in \calF$ in \Cref{line:Allocation}.
Therefore, by downward closedness of $\calF$, \Cref{line:Allocation} allocates a feasible subset to each buyer, $\pi(R^t_j)\in \calF\cap 2^{R^t_j}$.
\end{proof}

By Lemma~\ref{lem:lowerbound_sellingrate}, the selling rate of items of good $i$ to buyers of type $j$ by \Cref{alg:combinatorial} is $s_{ij} = \Pr[i\in A\cap \pi(R)]$, where $S$ denotes the \emph{saved goods} (see Definition~\ref{def:saved}). Similar to Lemma~\ref{lem:available-selected}, we can lower bound $\Pr[i\in A\cap \pi (R)]$ as follows. 

\begin{claim}\label{claim:allocation probability online OPT}
For any good $i\in \calG$ and buyer $j\in \calB$,
\begin{equation*}
    \Pr[i\in A\cap \pi(R)] \geq \frac 1 c \cdot (1-\exp(-c)) \cdot \frac{\Pr[i\in S ]}{w_i} \cdot \frac{x_{ij}}{\gamma_j}.
\end{equation*}
\end{claim}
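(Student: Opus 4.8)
The plan is to mimic the chain of inequalities in the proof of Lemma~\ref{lem:available-selected}, substituting the extended CRS bound of Lemma~\ref{lem:extended-crs} for the $c$-balancedness used there, and tracking the new choice of $w_i$. First I would use Observation~\ref{obs:availibility-saved} ($S\subseteq A$) to pass to the saved set, $\Pr[i\in A\cap \pi(R)]\geq \Pr[i\in S\cap \pi(R)]$, and then apply Bayes' law to split this as $\Pr[i\in\pi(R)\mid i\in S]\cdot\Pr[i\in S\mid i\in P]\cdot\Pr[i\in P]$. The key structural fact I would invoke is Lemma~\ref{obs:independence across goods}, whose proof only uses that $i\in S$ is determined by arrivals/departures of good $i$ and the $\Ber(q_{ij'})$ coins — all independent of other goods' presence and of the $\Ber(q_{ij})$ coin — so it applies verbatim here regardless of the choice of $w_i$; this gives $\Pr[i\in\pi(R)\mid i\in S] = \Pr[i\in\pi(R)\mid i\in P]$. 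Combining, $\Pr[i\in A\cap\pi(R)]\geq \Pr[i\in\pi(R)\mid i\in P]\cdot\Pr[i\in S\mid i\in P]\cdot\Pr[i\in P] = \Pr[i\in\pi(R)\mid i\in P]\cdot\Pr[i\in S]$, using $S\subseteq P$ in the last step.

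Next I would lower bound $\Pr[i\in\pi(R)\mid i\in P]$. Since $\pi(R)\subseteq P$ (every selected good proposed, hence is present), $\Pr[i\in\pi(R)\mid i\in P]\geq \Pr[i\in\pi(R)\mid i\in P]\cdot\Pr[i\in P]/\Pr[i\in P]$ — more cleanly, I want to relate $\Pr[i\in\pi(R)]$ to $\Pr[i\in R]$. By Lemma~\ref{lem:alg-correct-opton} the marginals $\vec z_j$ of the proposal distribution lie in $c\cdot\calP_\calF$, so the extended CRS of Lemma~\ref{lem:extended-crs} guarantees $\Pr[i\in\pi(R)\mid i\in R]\geq \frac1c(1-e^{-c})$, whence $\Pr[i\in\pi(R)]\geq \frac1c(1-e^{-c})\cdot\Pr[i\in R]$. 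Since $\Pr[i\in R] = q_{ij}\cdot\Pr[i\in P] = \frac{x_{ij}}{\gamma_j w_i}\Pr[i\in P]$, and $\Pr[i\in\pi(R)\mid i\in P]\geq \Pr[i\in\pi(R)]$ (as $\pi(R)\subseteq P$ means the event $i\in\pi(R)$ already entails $i\in P$, so conditioning on $i\in P$ only increases its probability), I get
\[
\Pr[i\in\pi(R)\mid i\in P]\ \geq\ \Pr[i\in\pi(R)]\ \geq\ \frac1c(1-e^{-c})\cdot\frac{x_{ij}}{\gamma_j w_i}\cdot\Pr[i\in P].
\]
Plugging this into the bound from the previous paragraph, and using $\Pr[i\in P] = w_i$ is \emph{not} what we want here — rather I should keep $\Pr[i\in P]$ symbolic and cancel it against $\Pr[i\in S\mid i\in P]\cdot\Pr[i\in P]=\Pr[i\in S]$ already accounted for; being careful, $\Pr[i\in A\cap\pi(R)]\geq \frac1c(1-e^{-c})\cdot\frac{x_{ij}}{\gamma_j w_i}\cdot\Pr[i\in P]\cdot\frac{\Pr[i\in S]}{\Pr[i\in P]} = \frac1c(1-e^{-c})\cdot\frac{\Pr[i\in S]}{w_i}\cdot\frac{x_{ij}}{\gamma_j}$, as claimed.

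The main subtlety — not a deep obstacle, but the point that needs care — is bookkeeping the ratio $\Pr[i\in S]/w_i$: in Lemma~\ref{lem:available-selected} one had $w_i=\Pr[i\in P]$ so $\Pr[i\in S\mid i\in P]=\Pr[i\in S]/w_i$ and it got bounded by $1/2$, but here $w_i$ is a different (smaller) quantity, so the factor $\Pr[i\in S]/w_i$ must be left explicit in the statement and bounded later when optimizing $c$. I would therefore be careful to write $\Pr[i\in S\mid i\in P]\cdot\Pr[i\in P] = \Pr[i\in S]$ exactly (using $S\subseteq P$), and only afterward divide and multiply by $w_i$ to produce the stated form. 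Everything else is a direct transcription of the Lemma~\ref{lem:available-selected} argument with Lemma~\ref{lem:extended-crs} in place of $c$-balancedness.
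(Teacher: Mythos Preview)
Your approach is the same as the paper's, and the structure (Observation~\ref{obs:availibility-saved} $\to$ Lemma~\ref{obs:independence across goods} $\to$ extended CRS bound) is exactly right. There is, however, one bookkeeping slip in the second paragraph: the inequality $\Pr[i\in\pi(R_j)\mid i\in P]\geq \Pr[i\in\pi(R_j)]$ is lossy by a factor of $\Pr[i\in P]$, and if you actually carry that loss through you end up with the weaker bound $\frac{1}{c}(1-e^{-c})\cdot\frac{x_{ij}}{\gamma_j w_i}\cdot\Pr[i\in P]\cdot\Pr[i\in S]$, not the claimed one; your final ``being careful'' computation silently drops this extra $\Pr[i\in P]$. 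The fix is to use the \emph{equality} $\Pr[i\in\pi(R_j)\mid i\in P]=\Pr[i\in\pi(R_j)\mid i\in R_j]\cdot \Pr[i\in R_j\mid i\in P]=\Pr[i\in\pi(R_j)\mid i\in R_j]\cdot q_{ij}$ (this is just Bayes, using that $i\in R_j$ iff $i\in P$ and the independent $\Ber(q_{ij})$ coin comes up heads, and that $\pi(R_j)\subseteq R_j$). This is precisely what the paper does, and it gives $\Pr[i\in\pi(R_j)\mid i\in P]\geq \frac{1}{c}(1-e^{-c})\cdot \frac{x_{ij}}{\gamma_j w_i}$ directly, after which multiplying by $\Pr[i\in S]$ yields the claim with no cancellation gymnastics.
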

\begin{proof}
    Following the proof of Lemma~\ref{lem:available-selected}, we get,
    \begin{align*}
        \Pr[i\in A\cap \pi(R)] &\geq \Pr[i\in S\cap \pi(R_j)] && \text{(Observation~\ref{obs:availibility-saved})} \\
   &= \Pr[i\in \pi(R_j)\mid i\in S] \cdot \Pr[i\in S ] && \text{(Bayes)}\\
   &= \Pr[i\in \pi(R_j)\mid i\in P] \cdot \Pr[i\in S] && (\text{Lemma~\ref{obs:independence across goods}})\\
        &=\Pr[i\in \pi(R) \mid i\in R ] \cdot \frac{x_{ij}}{\gamma_j \cdot w_i}\cdot \Pr[i\in S] && \text{(Bayes)} \\
        &\geq \frac 1 c \cdot (1-\exp(-c)) \cdot \frac{\Pr[i\in S ]}{w_i} \cdot \frac{x_{ij}}{\gamma_j} .
    \end{align*}
    Above, the last inequality holds due to $z_i = \Pr[i\in R_j]\in c \cdot \mathcal P_\mathcal F$ and Lemma~\ref{lem:extended-crs}.
\end{proof}

To complete the proof of Theorem~\ref{thm:online-OPT-matroid}, we need to lower bound the ratio $\frac{\Pr[i\in S]}{w_i}$ for all goods $i\in \calG$. We will analyse the ratio $\frac{\Pr[i\in S]}{w_i}$ for goods in $\calG_L$ and $\calG_H$ separately. 
For goods in $\calG_L$, we leverage  \cite[Claims 4.4 and 4.5]{kessel2022stationary} and obtain the following lemma, whose proof (similar in spirit to the first part of Lemma~\ref{lem:A|R}) is deferred to \Cref{appendix:combinatorial}.

\begin{restatable}{lem}{optonreduction}\label{lem:A|R G_L}
    For $i\in \calG_L$, we have $$\frac{\Pr[i\in S]}{w_i} \geq 0.656.$$
\end{restatable}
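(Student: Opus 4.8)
The plan is to analyze the birth-death process governing the number of saved items of good $i\in \calG_L$ at stationarity, exactly mirroring the single-good analysis of Kessel et al.\ but with the proposal rate appropriate to $\calG_L$. First I would set up the relevant chain: items of good $i$ arrive at rate $\lambda_i$ and each is immediately saved (Definition~\ref{def:saved}); a saved item perishes at rate $\mu_i$ per item; and a saved item gets used (matched to a proposal) at the rate at which buyers arrive and good $i$ proposes from its saved copy. The subtlety is that we need the rate at which \emph{saved} copies are consumed, and by the definition of the algorithm, upon a buyer-$j$ arrival good $i$ proposes with probability $q_{ij}=\frac{x_{ij}}{\gamma_j w_i}$ conditioned on $i$ being present, but the proposal is associated with a saved item (and thereby uses it) precisely when $i$ is saved. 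Since $S^t\subseteq P^t$, when $i\in S^t$ it is present, so the consumption rate of saved items, conditioned on $i$ being saved, is $\sum_{j\in\calB}\gamma_j\cdot q_{ij}=\sum_{j}\frac{x_{ij}}{w_i}=:\rho_i/w_i$ where $\rho_i:=\sum_j x_{ij}$. Crucially, this depends only on whether $i$ is saved, not on how many saved items there are, which is what makes the stationary analysis tractable.

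Next I would invoke \cite[Claims 4.4 and 4.5]{kessel2022stationary}. Their single-good analysis studies exactly this kind of chain and lower-bounds $\Pr[i\in S]$ in terms of the arrival rate $\lambda_i$, perish rate $\mu_i$, and consumption rate; the key output is a bound of the form $\Pr[i\in S]\geq g\!\left(\lambda_i/\mu_i,\ \text{(consumption rate)}/\mu_i\right)$ for an explicit function $g$. Plugging in consumption rate $\rho_i/w_i$ and recalling $w_i=\min\!\left(\Pr[i\in P],\ \frac{\lambda_i-\rho_i}{\mu_i}\right)$ for $i\in\calG_L$ and $\Pr[i\in P]=1-\exp(-\lambda_i/\mu_i)$, the ratio $\Pr[i\in S]/w_i$ becomes a function of the two scalar parameters $a:=\lambda_i/\mu_i$ and $b:=\rho_i/\mu_i$ (subject to $0\le b\le \min(1-e^{-a},\,a-b)$-type constraints coming from the LP feasibility and the definition of $\calG_L$, which forces $1-e^{-a}\le c(a-b)$ for the optimized constant $c$). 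I would then reduce the lemma to a two-variable optimization: minimize $\Pr[i\in S]/w_i$ over the feasible region of $(a,b)$, and show the minimum is at least $0.656$.

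The main obstacle I anticipate is precisely this scalar optimization: bounding $\Pr[i\in S]/w_i\ge 0.656$ uniformly over all feasible $(a,b)$ with the $\calG_L$ constraint. I expect the worst case to occur in a limiting regime (e.g.\ $a\to 0$, or $b$ at the boundary $b=\frac{\lambda_i-\rho_i}{\mu_i}$ where $w_i$ switches branches), and one must check both branches of the $\min$ defining $w_i$ as well as the interplay with the $c$ that gets optimized in Theorem~\ref{thm:online-OPT-matroid}. The cleanest route is likely to split on which term achieves the $\min$: when $w_i=\frac{\lambda_i-\rho_i}{\mu_i}$, the chain's consumption rate is $\rho_i\mu_i/(\lambda_i-\rho_i)$ and one gets a clean closed form; when $w_i=1-e^{-a}$, one reuses the $\frac12$-type bound from the first part of Lemma~\ref{lem:A|R} but now with the better denominator, squeezing out the improvement from $\tfrac12$ to $0.656$. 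I would carry out the monotonicity/convexity check in each branch (differentiating $g$ in $b$ for fixed $a$, then in $a$), verify the boundary values, and conclude. Since the statement is deferred to \Cref{appendix:combinatorial}, I would present the reduction here and relegate the two-variable calculus to the appendix, citing the Kessel et al.\ claims as black boxes for the single-good stationary bounds.
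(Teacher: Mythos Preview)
Your approach is correct and follows essentially the same strategy as the paper: set up the birth-death process for saved items of good $i$, bound the consumption rate, and invoke \cite[Claims~4.4 and~4.5]{kessel2022stationary}. The only difference is that you over-engineer the reduction. You track the exact consumption rate $\rho_i/w_i$ (with $\rho_i=\sum_j x_{ij}$) and then propose a two-variable optimization over $(a,b)=(\lambda_i/\mu_i,\rho_i/\mu_i)$, including a case split on which branch of the $\min$ defines $w_i$. The paper avoids all of this with one extra inequality: since $\rho_i\le\lambda_i$ by Constraint~\eqref{eqn:flow-constraint-seller}, the consumption rate satisfies $\gamma^*\le \lambda_i/w_i$. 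With this cruder bound, the expression for $\Pr[i\in S]/w_i$ from Proposition~\ref{lem:Pr-available-exact} becomes \emph{exactly} the function analyzed in \cite[Claims~4.4 and~4.5]{kessel2022stationary} (with $x=\lambda_i/\mu_i$ and $w=w_i$), so the $0.656$ bound follows verbatim with no further optimization. Also note that the $\calG_L$ membership constraint and the constant $c$ play no role in this particular bound; they matter only for $\calG_H$.
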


On the other hand, the ratio $\frac{\Pr[i\in S]}{w_i}$ for goods in the set $\calG_H$ is slightly more challenging. First, we note that in Lemma~\ref{lem:A|R} we bounded $\frac{\Pr[i\in S]}{\Pr[i\in P]} \geq \frac{1}{2}$ when $q_{ij} = \frac{x_{ij}}{\gamma_j \cdot \Pr[i\in P]}$.  However, for goods in $\calG_H$, $q_{ij} = \frac{c\cdot x_{ij}}{\gamma_j \cdot \Pr[i\in P]}$ for $c\geq 1$ which sells goods more aggressively which creates backward 
pressure on the number of saved items of good $i$. In the next lemma, we precisely capture this trade-off between selling goods aggressively (parametrized by $c$) and their saving for future buyers. More formally, we show that for goods in $\calG_H$, $\frac{\Pr[i\in S]}{\Pr[i\in P]}\geq \frac{1}{1+c}$ which recovers the lower bound obtained in Lemma~\ref{lem:A|R} when $c=1$. We delegate the proof of the lemma to appendix.  

\begin{restatable}{lem}{availablegh}\label{lem:A|R G_H}
    For $i\in \calG_H$, we have $$\frac{\Pr[i\in S]}{w_i} = c\cdot \frac{\Pr[i\in S]}{\Pr[i\in P]} \geq \frac{c}{1+ c}.$$
\end{restatable}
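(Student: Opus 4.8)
\textbf{Proof plan for Lemma~\ref{lem:A|R G_H}.}

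The plan is to model the number of saved items of a fixed good $i \in \calG_H$ as a birth-death process and compute the ratio of stationary probabilities directly, mirroring the single-good analysis of \cite{kessel2022stationary} but with the more aggressive selling rate $q_{ij} = \frac{c \cdot x_{ij}}{\gamma_j \cdot \Pr[i\in P]}$. First I would fix attention on good $i$ and observe that, by independence of different goods' arrivals/departures and of the $\Ber(q_{ij'})$ coins, the process tracking how many items of $i$ are currently saved evolves on its own: saved items of $i$ are born at rate $\lambda_i$ (each arriving item is saved), die at rate $\mu_i$ per saved item (perishing), and are additionally removed (``used'') at a rate governed by buyer arrivals times the proposal probability. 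Writing the aggregate buyer-side usage rate in terms of the $x_{i\ell}$'s: when there is at least one saved item of $i$, a buyer of type $\ell$ arriving (rate $\gamma_\ell$) triggers a use of a saved item with probability $q_{i\ell} = \frac{c\, x_{i\ell}}{\gamma_\ell \Pr[i\in P]}$, so the total use rate out of any state with $\ge 1$ saved item is $\sum_{\ell} \gamma_\ell q_{i\ell} = \frac{c}{\Pr[i\in P]} \sum_\ell x_{i\ell} =: \rho$.

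Next I would set up the stationary equations. The key simplification (already used implicitly in \cite{kessel2022stationary}) is that to lower-bound $\Pr[i\in S] = \Pr[\text{at least one saved item}]$ it suffices to track whether the saved count is $0$ or $\ge 1$, or more precisely to compare against a worst-case configuration; the relevant comparison is between the ``saved'' birth-death chain and the ``present'' birth-death chain (the latter is the standard $M/M/\infty$-type chain with $\Pr[i\in P] = 1 - \exp(-\lambda_i/\mu_i)$). I would argue that $\Pr[i \in S] \ge \frac{\lambda_i}{\lambda_i + \text{(effective depletion rate)}}\cdot(\text{something})$ and, after the dust settles, that
\[
\frac{\Pr[i\in S]}{\Pr[i\in P]} \ge \frac{1}{1+c}.
\]
Concretely, the depletion of saved items happens at rate $\mu_i$ (perish) plus $\rho$ (used), and by Constraint~\eqref{ec20-constraint} applied in aggregate, $\sum_\ell x_{i\ell} \le \frac{\mu_i}{1}\cdot$(availability slack), which bounds $\rho \le c\mu_i \cdot \frac{\lambda_i - \sum_\ell x_{i\ell}}{\mu_i \Pr[i\in P]} \le c$ times the natural rate; this is exactly where the factor $1+c$ enters rather than $2$. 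The final step is then $\frac{\Pr[i\in S]}{w_i} = c \cdot \frac{\Pr[i\in S]}{\Pr[i\in P]} \ge \frac{c}{1+c}$, using $w_i = \Pr[i\in P]/c$ for $i \in \calG_H$.

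The main obstacle I anticipate is making the birth-death comparison rigorous: the ``saved'' process is not literally a clean birth-death chain because the usage rate depends on whether the count is positive, and more subtly one must verify that reducing the problem to good $i$ in isolation is legitimate (this is where Lemma~\ref{obs:independence across goods}-style independence arguments are invoked) and that the relevant stochastic dominance holds uniformly, not just at stationarity. I would handle this by explicitly writing the generator of the two-state (empty / nonempty) lumped chain, solving its stationary distribution in closed form, and then invoking Proposition~\ref{monotonicity of BD Processes} and the PASTA property (Lemma~\ref{pasta}) to transfer the bound to what an arriving buyer sees; the arithmetic itself reduces to showing $\frac{1}{1 + c\cdot(\text{slack ratio})} \ge \frac{1}{1+c}$, which follows since the slack ratio $\frac{\lambda_i - \sum_\ell x_{i\ell}}{\mu_i \Pr[i\in P]} \le 1$ by Proposition~\ref{monotonicity of BD Processes}. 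Since the excerpt defers this proof to the appendix, I would keep the write-up parallel to the proof of Lemma~\ref{lem:A|R} and cite \cite[Claims 4.4 and 4.5]{kessel2022stationary} for the routine birth-death computations, only highlighting the one place where the extra factor of $c$ in $q_{ij}$ changes the constant from $\frac12$ to $\frac{1}{1+c}$.
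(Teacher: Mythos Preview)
Your framework is right---model the saved-item count as a birth--death chain with birth rate $\lambda_i$, per-item death rate $\mu_i$, and extra ``use'' rate $\gamma^*=\sum_\ell \gamma_\ell q_{i\ell}$, then compare $\Pr[i\in S]$ to $\Pr[i\in P]$---and this is exactly how the paper sets it up. But the execution after that diverges from both the paper and from anything that would actually close. First, the two-state lumping does not work: the death rate from state $k$ is $k\mu_i+\gamma^*$, so the chain is not lumpable to $\{0,\ge 1\}$, and there is no clean ``$\frac{1}{1+c\cdot(\text{slack})}$'' formula to fall back on. Second, your use of Constraint~\eqref{ec20-constraint} to bound $\rho$ is garbled (summing that constraint over $j$ introduces $\sum_j\gamma_j$, not $\mu_i$); the correct move, and the one the paper makes, is simply $\sum_\ell x_{i\ell}\le\lambda_i$ from Constraint~\eqref{eqn:flow-constraint-seller}, giving $\gamma^*\le c\lambda_i/\Pr[i\in P]$. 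Third, and most importantly, you cannot offload the analysis to \cite[Claims~4.4, 4.5]{kessel2022stationary}: those claims cover the $\calG_L$ regime (and are what the paper cites for Lemma~\ref{lem:A|R G_L}), but they do not contain the $c$-dependent bound you need here.

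The actual work the paper does---and which your plan is missing---is an honest calculus argument. Plugging $\gamma^*=c\lambda_i/(1-e^{-\lambda_i/\mu_i})$ into Proposition~\ref{lem:Pr-available-exact} and setting $x=\lambda_i/\mu_i$ reduces the claim to showing
\[
h_c(x):=\frac{1-\Bigl(1+\sum_{q\ge1}\prod_{r=1}^q \frac{x}{r+cx/(1-e^{-x})}\Bigr)^{-1}}{1-e^{-x}}\ \ge\ \frac{1}{1+c}\qquad\text{for all }x>0.
\]
The paper first lower-bounds $h_c$ by the cleaner $\alpha_c(x)=\dfrac{1-\exp\bigl(-x/(1+cx/(1-e^{-x}))\bigr)}{1-e^{-x}}$ (replacing $r+\cdot$ by $r(1+\cdot)$ in each factor), checks $\lim_{x\to0^+}\alpha_c(x)=\tfrac{1}{1+c}$, and then proves $\alpha_c$ is monotone increasing by explicitly differentiating and factoring the numerator as $cx^2(e^x-1)+2cx(e^x-1)+(e^x-1)^2\ge0$. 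This monotonicity step is the crux and is not ``routine'' in the sense of being citeable; your plan needs to include it.
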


We are now ready to prove Theorem~\ref{thm:online-OPT-matroid}. 
\begin{proof}[Proof of Theorem~\ref{thm:online-OPT-matroid}]
    Let $x^* = \{x_{ij}^*\}_{i,j}$ be the optimal solution to the LP $$\max\left\{\sum_{i, j} v_{ij} \cdot x_{ij} \;\middle\vert\; \mathbf{x} \in \calP_{\mathrm{on}}(\calI)\right\}.$$ Corollary~\ref{cor:RB-geq-OPT} implies that $\opton(\calI) \leq \sum_{i,j} v_{ij} \cdot x_{ij}^* $. We can lower bound the performance of the \Cref{alg:combinatorial} for matroids with parameters $\vec w$ defined in Lemma~\ref{lem:alg-correct-opton} as:
    \begin{align*}
        &\sum_{i\in \calG} \sum_{j\in \calB} \gamma_j v_{ij} \cdot \Pr[i\in \pi(R_j) \cap A]\\
        \geq &\sum_{i\in \calG} \sum_{j\in \calB} \gamma_j v_{ij} \cdot   \frac{\Pr[i\in S ]}{w_i} \cdot \frac{x^*_{ij}}{\gamma_j} \qquad \qquad \qquad \qquad \text{(Claim~\ref{claim:allocation probability online OPT})} \\
        \geq &\frac 1 c \cdot (1-\exp(-c)) \cdot \left( 0.656 \cdot \sum_{i\in \calG_L} \sum_{j\in \calB}  v_{ij} \cdot x^*_{ij}  + \frac{c}{1+c} \cdot  \sum_{i\in \calG_H} \sum_{j\in \calB}  v_{ij} \cdot x^*_{ij} \right),      
    \end{align*}
    where the last inequality follows from Lemma~\ref{lem:A|R G_L} and Lemma~\ref{lem:A|R G_H}. Now setting $c = 1.14$ and \Cref{cor:RB-geq-OPT}, we obtain:
    \begin{equation*}
      \sum_{i\in \calG} \sum_{j\in \calB} \gamma_j v_{ij} \cdot \Pr[i\in \pi(R_j) \cap A] \geq 0.318 \cdot \sum_{i\in \calG} \sum_{j\in \calB}  v_{ij} \cdot x^*_{ij} \geq \left(\frac 1 2 \cdot \left( 1- \frac 1 e\right) + 0.0019\right)\cdot \opton. \qedhere
      \end{equation*}
\end{proof}

\section{Multi-Good Stationary Prophet Inequality}
\label{sec:multi-good}
In this section, we provide an improved algorithm for the multi-good stationary prophet inequality problem, proving the following theorem. 

\begin{thm}\label{thm:multi-good}
There exists a $\left(1 - \frac{1}{\sqrt e}\right)\approx 0.393$-approximate policy for the multi-good unit-demand stationary prophet inequality problem.
\end{thm}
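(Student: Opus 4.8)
The plan is to specialize the proposal-and-resolution framework of \Cref{alg:combinatorial} to the unit-demand (rank-one) constraint, but — unlike the generic analysis — to exploit the optimal single-choice CRS for \emph{correlated} distributions (\Cref{thm:single_choice_CRS}) rather than losing the extra factor of $\tfrac12$ from the saved-goods reduction. First I would set up the relaxation: solve $\vec x^\* \in \arg\max\{\vec v\cdot\vec x \mid \vec x\in\calP_\mathrm{off}(\calI)\}$, which by \Cref{cor:RB-geq-OPT} upper-bounds $\optoff(\calI)$; here $\calF$ is the rank-one uniform matroid on $\calG$, so Constraint~\eqref{eqn:flow-constraint-buyer-comb} just says $\sum_i x_{ij}\le\gamma_j$. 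As in \Cref{alg:combinatorial}, upon arrival of buyer $j$ at time $t$ let each \emph{present} good $i$ propose independently with probability $q_{ij} = x_{ij}/(\gamma_j w_i)$ where $w_i = \Pr[i\in P]=1-e^{-\lambda_i/\mu_i}$, so that $\Pr[i\in R^t_j]\le x_{ij}/\gamma_j$ by \Cref{monotonicity of BD Processes}; then apply a single-choice CRS to the set of proposing \emph{available} goods and sell that one good. The point of allocating over present (not available) goods in the proposal step is that presence is independent across goods, so $R_j$ is a product distribution.

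The key quantitative step is to pin down, for this unit-demand process, the probability that a present good is available (equivalently, the analogue of \Cref{lem:A|R} but with the sharper constant $1-1/\sqrt e$ instead of $1/2$). The reason a better constant is available: when we only ever allocate one item per buyer, the ``drain'' on a good's stock of available items is softened, and one can show directly from the birth–death description of good $i$'s available-item count that $\Pr[i\in A \mid i\in P] \ge 1 - 1/\sqrt e$, or more precisely that the conditional availability rate seen by a proposal is at least this. I would carry this out exactly as in \cite{kessel2022stationary}'s single-good analysis: condition on $i$ being present, write the stationary distribution of the number of available items of good $i$ as a birth–death chain whose birth rate is $\lambda_i$ and whose death rate combines perishing ($\mu_i$ per item) and selling, bound the selling rate using Constraint~\eqref{eqn:ec-22-constraint} (which caps $x_{ij}\le\gamma_j(1-e^{-\lambda_i/\mu_i})$), and optimize the resulting expression over $\lambda_i/\mu_i$ to extract the worst-case ratio $1-1/\sqrt e$. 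Then, feeding the resulting per-element lower bound $\Pr_{R_j}[S\cap R_j\ne\emptyset]\ge (1-1/\sqrt e)\sum_{i\in S}\Pr[i\in R_j]$ into the single-choice CRS of \Cref{thm:single_choice_CRS} with $\beta_i \equiv 1-1/\sqrt e$ yields $\Pr[i\in\pi(R_j)]\ge(1-1/\sqrt e)\,x_{ij}/\gamma_j$; intersecting with availability and invoking PASTA (\Cref{lem:lowerbound_sellingrate}) gives selling rate $s_{ij}\ge(1-1/\sqrt e)x^\*_{ij}$, and linearity of expectation together with \Cref{cor:RB-geq-OPT} closes the proof.

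The main obstacle is the availability computation and, in particular, making sure the CRS is applied correctly: the distribution of proposers $R_j$ is a product distribution (presence $\times$ independent coins), but ``available among proposers'' is \emph{not} independent across goods, because a good's availability is negatively correlated with other goods' recent sales; this is exactly the correlation issue flagged in the ``Challenging correlations'' paragraph. The resolution is that \Cref{thm:single_choice_CRS} only needs the one-sided union-bound-type inequality $\Pr[S\cap R_j\ne\emptyset]\ge\sum_{i\in S}\beta_i\Pr[i\in R_j]$ on the \emph{actual} correlated arrival distribution, not independence — so I must verify that inequality holds with $\beta_i=1-1/\sqrt e$ for the correlated ``proposing-and-available'' distribution, which reduces (via $\Pr[i\in A\cap R_j\text{ as a proposer}]=\Pr[i\in A\mid i\in P]\cdot\Pr[i\in R_j]$ and the per-good availability bound) to the single-good birth–death estimate above. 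A secondary point to handle carefully is that the CRS in \Cref{thm:single_choice_CRS} must be run on the availability-filtered proposal set while its guarantee is stated per element; this is fine since we only claim a per-$(i,j)$ selling-rate bound and then sum.
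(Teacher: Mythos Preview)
Your proposal has a genuine gap: you misidentify where the constant $1-1/\sqrt{e}$ comes from, and the step you call ``the key quantitative step'' does not yield the union inequality that \Cref{thm:single_choice_CRS} requires.

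Concretely, you propose to prove an analogue of \Cref{lem:A|R} with the sharper constant $1-1/\sqrt e$ in place of $1/2$, i.e., $\Pr[i\in A\mid i\in P]\ge 1-1/\sqrt e$ per good, and then deduce the union bound $\Pr[S\cap(\text{available proposers})\ne\emptyset]\ge(1-1/\sqrt e)\sum_{i\in S}x_{ij}/\gamma_j$ from this per-element statement. But per-element marginals cannot control the union probability of the correlated set $A\cap R_j$; the inequality you need is a statement about the \emph{joint} law of availabilities across goods, which is precisely what is not independent here. (There is also a notational inconsistency in your plan: you first say the CRS is applied to the available proposers, but then write the union hypothesis for $R_j$, the present proposers, and speak of ``intersecting with availability'' afterward; these are two different algorithms with different analyses.) Finally, there is no sharper per-good constant to be had: the single-good analysis gives only $\Pr[i\in A\mid i\in P]\ge \tfrac12$, and this is what the paper uses.

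The paper's proof resolves the correlation issue by a \emph{stochastic-dominance coupling} rather than a per-good estimate. One introduces a relaxed process $\{\mathcal S_i\}_{i\in\calG}$ in which each good $i$ runs an independent single-good SPI (available item sold to an arriving type-$j$ buyer with probability $q_{ij}$), shows via \Cref{lem:stochastic-dominance} that the true vector of available counts stochastically dominates the relaxed one, and hence $\Pr[S\cap R_j\ne\emptyset]\ge\Pr[S\cap\tilde R_j\ne\emptyset]$. In the relaxed process availabilities are independent, so
\[
\Pr[S\cap\tilde R_j\ne\emptyset]=1-\prod_{i\in S}\Bigl(1-\tfrac{x_{ij}}{\gamma_j w_i}\Pr[i\in\tilde A]\Bigr)\ge 1-\prod_{i\in S}\Bigl(1-\tfrac{x_{ij}}{2\gamma_j}\Bigr)\ge 1-\exp\Bigl(-\tfrac12\sum_{i\in S}\tfrac{x_{ij}}{\gamma_j}\Bigr)\ge\bigl(1-e^{-1/2}\bigr)\sum_{i\in S}\tfrac{x_{ij}}{\gamma_j},
\]
using the per-good bound $\Pr[i\in\tilde A]/w_i\ge\tfrac12$ and concavity of $t\mapsto 1-e^{-t/2}$ on $[0,1]$ together with $\sum_i x_{ij}/\gamma_j\le 1$. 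This is where $1-1/\sqrt e$ actually arises: it is the combination of the $\tfrac12$ availability ratio with the product-form union bound and the unit-demand constraint, not an improved single-good estimate. Without the coupling (or some other device handling the joint law of availabilities), your argument does not go through.
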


The algorithm for multi-unit SPI is similar to the algorithm for combinatorial SPI defined in \Cref{alg:combinatorial}. The only difference is in the multi-good setting, each ``available" good instead of a ``present" good proposes to the arrived buyer. Then we resolve contention in the proposals using single choice CRS defined in Theorem~\ref{thm:single_choice_CRS} and allocate an item of some good $i\in \calG$ from the set of proposed good.
Its difference compared to previous algoirthms for multi-good SPI is that we (will) use an \emph{optimal} CRS.
The algorithm's pseudocode is given in \Cref{alg:main_new}. 

\begin{algorithm}
	\begin{algorithmic}[1]
		\Require $\mathbf{x}^*\in \mathbb{R}^{|\calG|\times |\calB|}$,  $\mathbf{w} \in \mathbb{R}^{|\calG|}$ and CRS $\pi$.
		\For{arrival of buyer of type $j \in \calB$}
		
		\State  Let $R^t\subseteq \calG$ contain each \textbf{available} good $i \in \calG$ independently with probability $q_{ij} :=  \frac{x_{ij}}{\gamma_j \cdot w_i}$.\label{line:i-proposes-j}
		
		\State Allocate an item of good selected by $\pi(R^t)$ to buyer $j$.\label{line:single-choice-CRS}
		\EndFor
	\end{algorithmic}
	\caption{The multi-good stationary prophet inequality algorithm}
	\label{alg:main_new}
\end{algorithm}

Our competitive ratio for multi-good SPI follow from linearity, together with a lower bound on the sell rate of good $i\in \calG$ to buyer $j\in \calB$, which are easily expressed using the PASTA property (Lemma~\ref{pasta}), as follows.
\begin{lem}\label{lem:lowerbound_sellingrate_singlechoice}
    Let $s_{ij}$ be the selling rate of good $i\in \calG$ to buyer $j\in \calB$ by \Cref{alg:main_new}. Then,
    $$s_{ij} = \gamma_j \cdot \Pr[i\in \pi(R_j)].$$
\end{lem}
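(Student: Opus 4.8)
The statement is the multi-good analogue of Lemma~\ref{lem:lowerbound_sellingrate}, and the plan is to prove it the same way, via the PASTA property (Lemma~\ref{pasta}). The one point that needs checking first is that in \Cref{alg:main_new} every selection of good $i$ by the CRS actually results in a \emph{sale} of an item of good $i$: in \Cref{line:i-proposes-j} the proposal set $R^t$ consists only of \emph{available} goods, so $\pi(R^t)\subseteq R^t$ is a (single) available good by the single-choice CRS of Theorem~\ref{thm:single_choice_CRS}, and hence \Cref{line:single-choice-CRS} allocates a feasible (unit-demand) bundle and, whenever $i\in \pi(R^t_j)$, sells an item of good $i$ to the arriving buyer.

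Next I would invoke the independence hypothesis required by PASTA. By memorylessness of the Poisson arrival process of buyers of type $j$, the arrival times of type-$j$ buyers are independent of the entire history of the induced stochastic process up to that point---in particular of the set $A^t$ of available goods, of the internal randomness of the CRS $\pi$, and of the $\Ber(q_{ij})$ coin flips used to form $R_j^t$. Therefore PASTA (Lemma~\ref{pasta}) applies to the event $\{i\in \pi(R_j)\}$: the fraction of type-$j$ arrivals who, at their arrival time, observe that $i$ is selected (i.e. $i\in \pi(R_j^t)$) equals the fraction of time this event holds under the stationary distribution of the process, namely $\Pr[i\in \pi(R_j)]$, where $R_j$ is the proposal set seen by a type-$j$ buyer arriving at time $t\to\infty$.

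Finally, since type-$j$ buyers arrive at rate $\gamma_j$ and each such arrival with $i\in \pi(R_j)$ contributes exactly one sale of good $i$ to a type-$j$ buyer (by the first paragraph), the rate at which items of good $i$ are sold to buyers of type $j$ is
\begin{equation*}
    s_{ij} = \gamma_j \cdot \Pr[i\in \pi(R_j)],
\end{equation*}
as claimed. I do not expect any real obstacle here: the only thing to be careful about is that, unlike in the combinatorial setting where one must intersect with the available set $A$, here proposals are drawn from available goods to begin with, so $\pi(R_j)$ is automatically a set of available goods and no ``$\cap A$'' correction is needed.
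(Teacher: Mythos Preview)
Your proposal is correct and follows essentially the same approach as the paper: apply PASTA (Lemma~\ref{pasta}) to the event $\{i\in \pi(R_j)\}$, using memorylessness of the type-$j$ Poisson arrivals for the required independence, and conclude $s_{ij}=\gamma_j\cdot\Pr[i\in\pi(R_j)]$. Your extra observation that proposals in \Cref{alg:main_new} are drawn from \emph{available} goods (so no ``$\cap A$'' correction is needed, unlike Lemma~\ref{lem:lowerbound_sellingrate}) is exactly the point the paper is making when it says $\pi(R_j)\subseteq R_j$ is a set of available goods.
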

\begin{proof}
Upon the arrival of the buyer $j$, \Cref{alg:combinatorial} allocates a random set of available goods $\pi(R_j)\subseteq R_j$. The PASTA property (Lemma~\ref{pasta}) ensures that the fraction of time buyer $j\in \calB$ (Poisson arrivals independent of the history) observes that good $i\in \calG$ is available and belongs to set $\pi(R_j)$ is equal to the fraction of time that this latter event occurs. Therefore, the rate at which an item of good $i\in\calG$ is sold to buyer $j\in\calB$ can be expressed as: 
\begin{align*}
        s_{ij} &=\gamma_j \cdot  \Pr[i\in \pi(R_j)]. 
\end{align*}
\end{proof}

Next, we present a key lemma that shows that the CRS used in \Cref{alg:main_new} (in \Cref{line:single-choice-CRS}) is $\left(1 - \frac{1}{\sqrt e} \right)$-balanced. 
\begin{lem}\label{lem:lowerbound_on_no_proposal}
For any subset of goods $S\subseteq \calG$,
$\Pr[S\cap R_j \neq  \emptyset] \geq \left(1 - \frac 1 {\sqrt e} \right)\cdot \sum_{i\in S}\frac{x_{ij}}{\gamma_j}.$
Moreover, for all $i\in \calG$, $\Pr[i \in \pi(R_j)] \geq \left( 1 -\frac{1}{\sqrt e} \right)\cdot \frac{x_{ij}}{\gamma_j}$. 
\end{lem}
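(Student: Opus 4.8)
The plan is to prove the set inequality first and then deduce the per-element bound from it using the single-choice CRS of Theorem~\ref{thm:single_choice_CRS}. The heart of the set inequality is to replace the (positively correlated) distribution of the proposal set $R_j$ by an independent distribution that it stochastically dominates. To this end I would consider a \emph{relaxed process} in which the goods evolve completely independently: good $i$ still has items arriving at rate $\lambda_i$ and perishing at rate $\mu_i$, and on each arrival of a buyer of type $j$ good $i$ independently flips a $\Ber(q_{ij})$ coin and, whenever the coin is heads and an unsold present item of good $i$ exists, that item is immediately sold. Since in \Cref{alg:main_new} an available, proposing good is sold only when it is \emph{also} selected by the CRS, the relaxed process consumes items (weakly) more aggressively than the original one; hence a standard coupling (Lemma~\ref{lem:stochastic-dominance}) gives that the availability indicator vector of the original process stochastically dominates that of the relaxed process. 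In the relaxed process the availability vector is a product vector (goods are independent), and its restriction to a single good $i$ is precisely a single-good SPI process, for which---exactly as in the proof of Lemma~\ref{lem:A|R}, via the analysis of \cite{kessel2022stationary}---the stationary availability probability of good $i$ is at least $\tfrac12\Pr[i\in P]=\tfrac12 w_i$ (here $w_i=1-\exp(-\lambda_i/\mu_i)=\Pr[i\in P]$ as in \Cref{alg:combinatorial}).

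Combining these facts, and using that the arriving buyer's $\Ber(q_{ij})$ coins are fresh, hence independent of each other and of the availabilities in the relaxed process, I would let $R_j'$ be the proposal set of the relaxed process and bound
\[
\Pr[S\cap R_j\neq\emptyset]\;\geq\;\Pr[S\cap R_j'\neq\emptyset]\;=\;1-\prod_{i\in S}\bigl(1-\Pr[i\in R_j']\bigr)\;\geq\;1-\prod_{i\in S}\Bigl(1-\tfrac12\cdot\tfrac{x_{ij}}{\gamma_j}\Bigr),
\]
where the last step uses $\Pr[i\in R_j']=\Pr[i\ \text{available in relaxed}]\cdot q_{ij}\geq\tfrac12 w_i\cdot\tfrac{x_{ij}}{\gamma_j w_i}=\tfrac12\cdot\tfrac{x_{ij}}{\gamma_j}$ together with monotonicity of $1-\prod_i(1-\cdot)$. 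Writing $t:=\sum_{i\in S}\frac{x_{ij}}{\gamma_j}$, unit-demand feasibility (Constraint~\eqref{eqn:flow-constraint-buyer-comb}, with $\calP_\calF$ the simplex) gives $t\leq\sum_{i\in\calG}\frac{x_{ij}}{\gamma_j}\leq 1$, so applying $1-z\leq e^{-z}$ factor by factor yields $\Pr[S\cap R_j\neq\emptyset]\geq 1-e^{-t/2}$. Finally, since $g(t):=1-e^{-t/2}$ is concave on $[0,1]$ with $g(0)=0$ and $g(1)=1-1/\sqrt e$, it lies above its chord there, i.e.\ $g(t)\geq(1-1/\sqrt e)\,t$, which is exactly the claimed bound.

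For the ``moreover'' part I would apply Theorem~\ref{thm:single_choice_CRS} to the distribution of $R_j$ with the weights $\beta_i:=(1-1/\sqrt e)\cdot\frac{x_{ij}/\gamma_j}{\Pr[i\in R_j]}$ (for goods $i$ with $x_{ij}>0$; the desired bound is trivial when $x_{ij}=0$). With this choice, $\sum_{i\in S}\beta_i\Pr[i\in R_j]=(1-1/\sqrt e)\sum_{i\in S}\frac{x_{ij}}{\gamma_j}$, so the hypothesis of Theorem~\ref{thm:single_choice_CRS} is exactly the set inequality just established; its conclusion then gives a size-$\leq 1$ selection rule $\pi$ with $\Pr[i\in\pi(R_j)]\geq\beta_i\Pr[i\in R_j]=(1-1/\sqrt e)\,\frac{x_{ij}}{\gamma_j}$, as claimed. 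This is consistent since $\Pr[i\in R_j]\geq\tfrac12\frac{x_{ij}}{\gamma_j}>(1-1/\sqrt e)\frac{x_{ij}}{\gamma_j}$.

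I expect the main obstacle to be the first paragraph: carefully justifying the stochastic-dominance coupling between the genuinely correlated multi-good availability process and the decoupled relaxed process, and verifying that the per-good projection of the relaxed process is exactly the single-good SPI instance whose $\tfrac12$-availability guarantee \cite{kessel2022stationary} we invoke. Everything afterwards is the elementary concavity/chord estimate above together with a black-box application of Theorem~\ref{thm:single_choice_CRS}.
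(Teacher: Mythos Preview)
Your proposal is correct and follows essentially the same route as the paper: the paper likewise introduces the decoupled ``relaxed process'' $\{\mathcal S_i\}_i$, invokes the stochastic-dominance coupling (via Lemma~\ref{lem:stochastic-dominance}, reducing to \cite[Lemma~3.2 and Claim~A.1]{kessel2022stationary}) to pass from $R_j$ to the independent $\tilde R_j$, uses the single-good bound $\Pr[i\in \tilde A]/w_i\geq\tfrac12$, and then the same product/exponential/convexity chain $1-\prod(1-\tfrac12\cdot\tfrac{x_{ij}}{\gamma_j})\geq 1-e^{-t/2}\geq(1-1/\sqrt e)\,t$; the ``moreover'' part is obtained, exactly as you do, by plugging the set inequality into Theorem~\ref{thm:single_choice_CRS}. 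The one place where the paper is slightly more explicit than your sketch is in conditioning on the coin outcomes $H_j$ before applying the dominance lemma, so that the comparison reduces to a monotone event in the availability vector---this is precisely the detail you flagged as the main obstacle.
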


We note that lower bounding the probability $\Pr[S\cap R \neq  \emptyset]$ directly is challenging, as the availability of good $i\in \mathcal G$ may depend on the availability of other goods. In fact, for any goods $i,i'$, events $i\in A$ and $i'\in A$ can be negatively correlated. 
Earlier works \cite{collina2020dynamic,aouad2020dynamic,kessel2022stationary} resolved unfavorable correlation issues by considering several independent Poisson processes, which they showed to either stochastically dominate or be dominated by the processes of interest.

\paragraph{Relaxed Process $\{\mathcal S_i\}_{i\in \calG}$.}
We similarly introduce a stochastic dominance relation where the arrivals and departure of each good are independent of each other. We define stochastic process $\mathcal S_i$ for each good $i$ where we only sell good $i\in G$ at a higher rate than $s_{ij}$. More formally, we consider $\mathcal S_i$ where items of good $i$ are supplied according to the Poisson process with rate $\lambda_i>0$ and perish at exponential rate $\mu_i$, Buyer of type $j$ arrives at rate $\gamma_j$ and an arrived buyer receives an item of good $i$ (if available) with probability $q_{ij}$. Crucially, the stochastic processes $\{\mathcal S_i\}_{i\in \mathcal G}$ are mutually independent. We denote the set of available goods in the relaxed process by $\tilde A$, i.e. goods with at least one item available in their respective processes. 
 We also let $\tilde R_j$ be the set that includes each available good in the relaxed process $i\in \tilde A$ with probability $q_{ij}$ independently. Next, we show that the probability $\Pr[S\cap R_j \neq  \emptyset]$ can be lower bounded by $\Pr[S\cap \tilde R_j \neq  \emptyset]$. The proof of the lemma is similar to that of \cite[Lemma 3.2 and Claim 3.1]{kessel2022stationary}, and is therefore deferred to \Cref{proof of stochastic dominance}. 

\begin{lem} \label{lem:stochastic dominance}
For any set of goods $S\subseteq \calG$,
\begin{equation*}
    \Pr[S\cap R_j\neq \emptyset] \geq \Pr[S\cap \tilde R_j \neq \emptyset] \geq \left(1 - \frac 1 {\sqrt e} \right)\cdot \sum_{i\in S}\frac{x_{ij}}{\gamma_j}.
\end{equation*}
\end{lem}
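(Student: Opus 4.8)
### Proof Proposal for Lemma~\ref{lem:stochastic dominance}

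The plan is to prove the two inequalities separately. The second inequality, $\Pr[S\cap \tilde R_j \neq \emptyset] \geq (1-1/\sqrt e)\sum_{i\in S} x_{ij}/\gamma_j$, is the ``easy'' direction in the sense that it only involves the independent relaxed process $\{\mathcal S_i\}_{i\in\calG}$. Since the processes $\mathcal S_i$ are mutually independent, the events $\{i\in\tilde A\}$ are independent across $i$, and hence so are the events $\{i\in \tilde R_j\}$ (each being $\{i\in\tilde A\}$ intersected with an independent $\Ber(q_{ij})$ coin). Writing $p_i := \Pr[i\in\tilde R_j] = q_{ij}\cdot\Pr[i\in\tilde A]$, I would first compute $\Pr[i\in\tilde A]$ for the birth-death chain $\mathcal S_i$ (a single-good process where an arriving buyer of type $j$ grabs an available item w.p.\ $q_{ij}$): this is a standard stationary-distribution computation, and by the single-good analysis of \cite{kessel2022stationary} (essentially the content behind Lemma~\ref{lem:A|R}) one gets $\Pr[i\in\tilde A]\cdot q_{ij} \geq$ something comparable to $x_{ij}/\gamma_j$ up to the loss captured by $(1-1/\sqrt e)$. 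Then $\Pr[S\cap\tilde R_j\neq\emptyset] = 1 - \prod_{i\in S}(1-p_i) \geq 1 - \exp(-\sum_{i\in S}p_i)$, and using $1-e^{-z}\geq (1-1/\sqrt e)\cdot 2z$ for... — actually the cleaner route is $1-\prod(1-p_i)\geq (1-1/\sqrt e)\sum_i p_i$ whenever $\sum_i p_i$ is bounded appropriately; I would invoke the concavity bound $1-e^{-z}\geq (1-e^{-c})z/c$ on the relevant range and track the constant so that it collapses to $1-1/\sqrt e$ given the constraints \eqref{eqn:ec-22-constraint} on $q_{ij}$ and the factor-$2$ type loss in the saved-item analysis.

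The first inequality, $\Pr[S\cap R_j\neq\emptyset]\geq \Pr[S\cap\tilde R_j\neq\emptyset]$, is where the real work is, and I expect this coupling argument to be the main obstacle. The idea, following \cite[Lemma 3.2, Claim 3.1]{kessel2022stationary}, is to construct a coupling between the true multi-good process run by \Cref{alg:main_new} and the collection of independent single-good relaxed processes $\{\mathcal S_i\}$, driven by common Poisson clocks for item arrivals, item departures, and buyer arrivals, together with shared $\Ber(q_{ij})$ coins. In the true process a buyer of type $j$ receives at most one item total (the one chosen by $\pi$ among the proposers), whereas in the relaxed process each good $i$ independently ``uses up'' one of its available items whenever its coin fires and it has an available item — so the relaxed process sells good $i$ strictly more aggressively. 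I would argue by induction on event times that, under this coupling, the set of available items of good $i$ in the true process always contains (as a superset, or at least dominates in count) the available items of good $i$ in $\mathcal S_i$; hence $\tilde A \subseteq A$ pointwise under the coupling, and therefore $\tilde R_j \subseteq R_j$, which gives $\{S\cap\tilde R_j\neq\emptyset\}\subseteq\{S\cap R_j\neq\emptyset\}$ and the claimed inequality. The subtlety to handle carefully is that in the true process a proposing good might \emph{not} be sold (because $\pi$ picked a different good), so the true process consumes items of good $i$ only on a subset of the occasions the relaxed process does — this is exactly what makes the domination go the right way, and it must be checked that no event breaks the invariant (in particular, item arrivals and perishings are coupled identically and preserve the superset relation).

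Having established both inequalities, the lemma follows by chaining them, and the ``moreover'' statement in Lemma~\ref{lem:lowerbound_on_no_proposal} then follows by applying Theorem~\ref{thm:single_choice_CRS} with $\beta_i = 1-1/\sqrt e$ (uniform over $i$) to the distribution $\calD$ of $R_j$, whose defining hypothesis $\Pr[S\cap R_j\neq\emptyset]\geq \sum_{i\in S}\beta_i\Pr[i\in R_j]$ is precisely what Lemma~\ref{lem:stochastic dominance} provides. Since the detailed coupling bookkeeping closely mirrors \cite{kessel2022stationary}, I would present the invariant and the inductive step explicitly but defer the most routine case checks, and I would keep careful track of the constant through the relaxed-process stationary computation to confirm it is exactly $1-1/\sqrt e$ (as opposed to the $1/2$ obtained for the present-item version in the additive section), the improvement coming from proposing among \emph{available} rather than \emph{present} goods together with the optimal single-choice CRS.
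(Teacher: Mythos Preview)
Your proposal is correct and matches the paper's approach: stochastic dominance for the first inequality (the paper formalizes it via the intensity-matrix characterization of Lemma~\ref{lem:stochastic-dominance} after conditioning on the Bernoulli coins $H_j$, whereas you sketch a direct pathwise item-level coupling, but both routes work and both defer to \cite{kessel2022stationary}), and independence of the relaxed processes for the second. One clarification on the constant tracking: the precise chain is $\Pr[i\in\tilde A]/w_i\geq \tfrac12$ (the single-good bound behind Lemma~\ref{lem:A|R}), giving $p_i\geq x_{ij}/(2\gamma_j)$, and then
\[
1-\prod_{i\in S}(1-p_i)\;\geq\;1-\exp\Bigl(-\tfrac12\sum_{i\in S}\tfrac{x_{ij}}{\gamma_j}\Bigr)\;\geq\;(1-e^{-1/2})\sum_{i\in S}\tfrac{x_{ij}}{\gamma_j},
\]
where the last step uses the \emph{rank-one} constraint $\sum_{i\in\calG}x_{ij}/\gamma_j\leq 1$ coming from \eqref{eqn:flow-constraint-buyer-comb}, not \eqref{eqn:ec-22-constraint} as you wrote.
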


Lemma~\ref{lem:stochastic dominance} combined with Theorem~\ref{thm:single_choice_CRS} implies Lemma~\ref{lem:lowerbound_on_no_proposal}. We are now ready to prove Theorem~\ref{thm:multi-good}.

\begin{proof}[Proof of Theorem~\ref{thm:multi-good}]
  Let $s_{ij}$ be the selling rate of \Cref{alg:main_new}. We can lower bound \Cref{alg:main_new}'s performance as:
  \begin{align*}
      \sum_{j \in \calB}\sum_{i\in \calG}v_{ij} \cdot s_{ij} &= \sum_{j \in \calB} \gamma_j \cdot \sum_{i\in \calG} v_{ij} \cdot  \Pr[i\in \pi(R)] && \text{(Lemma~\ref{lem:lowerbound_sellingrate_singlechoice})}\\
      &\geq \sum_{j \in \calB} \gamma_j \cdot \sum_{i\in \calG}  v_{ij} \cdot \left( 1 -\frac{1}{\sqrt e} \right)\cdot \frac{x^*_{ij}}{\gamma_j} && \text{(Lemma~\ref{lem:lowerbound_on_no_proposal})}\\
      &=\left(1 -\frac{1}{\sqrt e}\right) \cdot \sum_{j\in \calB} \sum_{i\in \calG}v_{ij}\cdot x^*_{ij} \\
      & \geq \left(1 -\frac{1}{\sqrt e}\right) \cdot \optoff. && (\textrm{\Cref{cor:RB-geq-OPT}}) \qedhere
  \end{align*}
\end{proof}

\section{Conclusion and Open Questions}

We provided a wide range of results for the stationary prophet inequality problem, an infinite time-horizon counterpart of the classic prophet inequality problem; our main result are asymptotically tight characterizations for a wide range of combinatorial constraints on the buyers' demands, for both linear and submodular valuations.
Our work also suggests a wealth of follow-up questions.

\paragraph{Tight Bounds.}
We provide asymptotically tight characterizations of combinatorial (ex-ante) SPIs in terms of offline CRS. 
However, only for the most basic, single-good SPI problem with unit-demand buyers do we currently know optimal bounds \cite{kessel2022stationary}.
Can the bound of $1/2$ for the simple problem be achieved for the more general \emph{multi-good} problem with unit demands, improving our bound of $1-\frac{1}{\sqrt{e}}$? 
What about matroid demands? Alternatively, are these problem strictly harder than the basic SPI problem?
Generally, for what natural constraints can we design optimal SPIs?

\paragraph{Approximating the Optimum Online.}
We provide polytime algorithms approximating the optimal matroid-SPI algorithm. 
What is the best approximation achievable for this (and other) SPI problems? For which classes can we provide (F)PTASes, and which are APX hard?

\paragraph{Richer valuations classes.} 
Our results extend beyond linear valuations, to submodular valuations. 
Can similar results be obtained for still richer valuation functions, such as subadditive or XOS valuations?

\paragraph{Posted-Price Mechanisms.} The single-good algorithms of \cite{kessel2022stationary} are pricing-based, resulting in incentive-compatible mechanisms for these dynamics, but only for unit-demand buyers in the single-good setting. 
Can similar results be obtained for combinatorially-constrained buyers (with multiple goods), mirroring the rich work on this question for the classic stationary prophet problem?

\paragraph{Acknowledgements.} The authors thank Ziv Scully for pointing out \cite{ross1995stochastic} to their attention.

\appendix
\section{Deferred Proofs: Reductions to Single-Good SPI}\label{appendix:combinatorial}

In this section we prove the following lemmas from sections \ref{sec:combinatorial} and \ref{sec:Opton}, repsectively.
\savedgivenproposed*
\optonreduction*

\begin{claim}\label{claim: A|R lowerbound G_H}
     For $i\in \calG_H$, we have $$\frac{\Pr[i\in S]}{w_i} \geq \min_{x\geq 0} \frac{1-  \left(1 + \sum_{q=1}^\infty \prod_{r=1}^q \frac{x}{r  +  c \cdot x/(1-\exp(-x))} \right)^{-1}}{1 - \exp\left( -x\right)}.$$
\end{claim}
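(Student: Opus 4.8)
The quantity $\Pr[i\in S]$ is the stationary probability that good $i$ currently has at least one \emph{saved} item, so the object to analyze is the process $N^t_i$ counting the number of saved items of good $i$ at time $t$. First I would argue that $(N^t_i)_{t\ge 0}$ is an autonomous continuous-time birth--death chain: new items of good $i$ arrive at rate $\lambda_i$ and are saved on arrival (birth rate $\lambda_i$ out of every state); each of the $k$ currently-saved items perishes independently at rate $\mu_i$; and whenever $N^t_i\ge 1$, good $i$ is present (a saved item is a present item), hence it proposes to an arriving buyer of type $j$ with probability $q_{ij}$, and by \Cref{def:saved} each such proposal consumes exactly one saved item (regardless of whether the CRS ultimately sells it). Thus the down-rate out of state $k\ge 1$ is $k\mu_i + Q_i$, where $Q_i:=\sum_{j\in\calB}\gamma_j q_{ij}$ is the total proposal rate of good $i$; crucially this rate does not depend on $N^t_i$ beyond the event $\{N^t_i\ge1\}$, and the arrivals/coins for good $i$ are external to the process, so $N^t_i$ is a Markov birth--death chain and in particular ergodic.

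Next I would solve for its stationary law via the cut (detailed-balance) equations $\pi_k\lambda_i=\pi_{k+1}\big((k+1)\mu_i+Q_i\big)$, giving $\pi_k=\pi_0\prod_{r=1}^k\frac{\lambda_i}{r\mu_i+Q_i}$ and, after normalizing, $\pi_0=\big(1+\sum_{q\ge 1}\prod_{r=1}^q\frac{\lambda_i}{r\mu_i+Q_i}\big)^{-1}$; since $i\in S$ iff $N_i\ge1$ in stationarity,
\[
\Pr[i\in S]=1-\Big(1+\sum_{q\ge1}\prod_{r=1}^q\tfrac{\lambda_i}{r\mu_i+Q_i}\Big)^{-1}.
\]
Now I specialize to $i\in\calG_H$, where $w_i=\Pr[i\in P]/c$ and hence $q_{ij}=\frac{c\,x_{ij}}{\gamma_j\Pr[i\in P]}$, so $Q_i=\frac{c\sum_{j}x_{ij}}{\Pr[i\in P]}\le\frac{c\,\lambda_i}{\Pr[i\in P]}$ by Constraint~\eqref{eqn:flow-constraint-seller}. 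Each factor $\frac{\lambda_i}{r\mu_i+Q_i}$ is decreasing in $Q_i$, so the whole sum is, and therefore $\Pr[i\in S]$ is decreasing in $Q_i$; replacing $Q_i$ by its upper bound only decreases the right-hand side. Substituting $x:=\lambda_i/\mu_i$ (dividing numerator and denominator of each factor by $\mu_i$) and $\Pr[i\in P]=1-e^{-x}$ (\Cref{lem:Pr-available-exact}) turns the factors into $\frac{x}{r+cx/(1-e^{-x})}$; dividing through by $w_i$ and minimizing over $x\ge 0$ (the bound depends on the instance only through $x=\lambda_i/\mu_i$) yields the claim. (For the sharper form used in \Cref{lem:A|R G_H} one keeps the factor $c$, i.e.\ writes $\frac{\Pr[i\in S]}{w_i}=c\cdot\frac{\Pr[i\in S]}{\Pr[i\in P]}$, and then evaluates the resulting infimum.)

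I expect the only delicate point to be the first paragraph: checking carefully that $N^t_i$ is genuinely a birth--death Markov chain with the stated rates. One must confirm that a saved good is always present (so it proposes at the full rate $Q_i$ whenever $N^t_i\ge1$), that a single buyer arrival removes at most one saved item of good $i$ (the proposal indicator $i\in R^t_j$ is a single coin), that an item is consumed at the proposal step and not at the sale step, and that $Q_i$ is state-independent. Once these are pinned down, the rest is the standard stationary-distribution computation for birth--death processes (mirroring \cite{kessel2022stationary}), together with the elementary observation that the tail mass $\Pr[N_i\ge1]$ is monotone in the death rate $Q_i$.
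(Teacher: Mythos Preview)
Your proposal is correct and follows essentially the same route as the paper: identify the number of saved items of good $i$ as a birth--death chain with birth rate $\lambda_i$ and death rate $k\mu_i+\gamma^*$ (your $Q_i$), invoke the stationary formula of \Cref{lem:Pr-available-exact}, upper bound $\gamma^*\le\sum_j\gamma_jq_{ij}=\lambda_i/w_i=c\lambda_i/\Pr[i\in P]$ via Constraint~\eqref{eqn:flow-constraint-seller}, and substitute $x=\lambda_i/\mu_i$. The paper compresses all of this into ``similar to Lemmas~\ref{lem:A|R G_L} and~\ref{lem:A|R}'' and jumps straight to the expression with the factor $c$ in front (which you also note in your parenthetical), but the underlying argument is identical.
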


Effectively, our proof technique for both lemmas is to reduce to the analysis of the single-good SPI algorithms of \cite{kessel2022stationary}.
To prove the above, we first need the following standard fact for the kind of birth-death processes we study. (See, e.g., \cite{kessel2022stationary}.)

\begin{restatable}{prop}{PrAvailableExact}
\label{lem:Pr-available-exact}
For any online algorithm that sells any available item of good $i \in \calG$ to buyers which arrive at rate $\gamma^*\geq 0$, the stationary probability of an item of good $i$ being available satisfies
\[  \bbP \left[i\in A\right] = 1 - \left(\sum_{q=0}^\infty \prod_{r=1}^q \frac{\lambda_i}{r \cdot \mu_i + \gamma^*} \right)^{-1} \in \left[1-\left(\sum_{q=0}^\infty \frac{1}{q!} \left( \frac{\lambda_i}{ \mu_i + \gamma^*} \right)^q \right)^{-1},\, 1- \exp\left( -\frac{\lambda_i}{\mu_i} \right) \right]. \]
\end{restatable}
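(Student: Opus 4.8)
The plan is to recognize the number of \emph{available} items of good $i$ as a birth-death process and read off its stationary distribution. Under any online algorithm that sells an available item of good $i$ to every buyer of a stream arriving at Poisson rate $\gamma^*$, the count $N^t$ of available items of good $i$ at time $t$ is a continuous-time Markov chain on $\{0,1,2,\dots\}$: from any state it jumps up by one at rate $\lambda_i$ (each newly supplied item is available upon arrival), and from state $q\ge 1$ it jumps down by one at rate $q\mu_i+\gamma^*$, where $q\mu_i$ is the total rate at which the $q$ available items perish and $\gamma^*$ is the rate at which a buyer arrives and is sold one of them. (Items that were sold but have not yet perished do not affect $N^t$, so this description is self-contained.) First I would note that the ratio test applies --- $\tfrac{\lambda_i}{r\mu_i+\gamma^*}\to 0$ as $r\to\infty$ --- so that the chain is irreducible and positive recurrent, hence has a unique stationary distribution $\pi$.

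Next I would compute $\pi$ by the cut (detailed-balance) equations of a birth-death chain: equating flow across $\{q,q+1\}$ gives $\pi_q\,\lambda_i=\pi_{q+1}\big((q+1)\mu_i+\gamma^*\big)$, so by induction $\pi_q=\pi_0\prod_{r=1}^{q}\frac{\lambda_i}{r\mu_i+\gamma^*}$ (empty product $=1$), and normalization forces $\pi_0=\big(\sum_{q=0}^{\infty}\prod_{r=1}^{q}\frac{\lambda_i}{r\mu_i+\gamma^*}\big)^{-1}$. Since good $i$ is available exactly when $N^\infty\ge 1$, this yields the claimed identity $\Pr[i\in A]=1-\pi_0$. (As a sanity check, $\gamma^*=0$ recovers the M/M/$\infty$ queue and $\Pr[i\in P]=1-\exp(-\lambda_i/\mu_i)$.)

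For the two-sided bound I would compare the series $\sum_{q\ge 0}\prod_{r=1}^{q}\frac{\lambda_i}{r\mu_i+\gamma^*}$ with $\sum_{q\ge 0}\frac{1}{q!}z^q=e^z$ termwise. Since $\gamma^*\ge 0$, we have $r\mu_i\le r\mu_i+\gamma^*\le r(\mu_i+\gamma^*)$ for every $r\ge 1$, hence
\begin{equation*}
\frac{1}{q!}\left(\frac{\lambda_i}{\mu_i+\gamma^*}\right)^{\!q}\ \le\ \prod_{r=1}^{q}\frac{\lambda_i}{r\mu_i+\gamma^*}\ \le\ \frac{1}{q!}\left(\frac{\lambda_i}{\mu_i}\right)^{\!q}.
\end{equation*}
Summing over $q$ and inverting gives $e^{-\lambda_i/\mu_i}\le \pi_0\le e^{-\lambda_i/(\mu_i+\gamma^*)}=\big(\sum_{q\ge 0}\tfrac1{q!}(\tfrac{\lambda_i}{\mu_i+\gamma^*})^q\big)^{-1}$, i.e. $1-\big(\sum_{q\ge0}\tfrac1{q!}(\tfrac{\lambda_i}{\mu_i+\gamma^*})^q\big)^{-1}\le \Pr[i\in A]\le 1-\exp(-\lambda_i/\mu_i)$, as required.

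I do not anticipate a genuine obstacle: the only step meriting a word of care is the assertion that $N^t$ is a bona fide (positive-recurrent) birth-death process --- that the exogenous selling stream contributes exactly the extra death rate $\gamma^*$ from states $q\ge 1$ and nothing else, which is precisely what the hypothesis ``sells any available item of good $i$ to buyers arriving at rate $\gamma^*$'' encodes. Everything downstream is the standard M/M/$\infty$-with-extra-departures stationary computation, which one could alternatively cite in lieu of the self-contained derivation above.
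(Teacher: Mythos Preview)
Your proof is correct and follows the standard birth--death chain analysis. Note that the paper itself does not give a proof of this proposition: it introduces it as ``the following standard fact for the kind of birth-death processes we study,'' with a citation to \cite{kessel2022stationary}, and leaves it at that. Your self-contained derivation via detailed balance and the termwise comparison $r\mu_i \le r\mu_i+\gamma^* \le r(\mu_i+\gamma^*)$ is exactly what one would write out if asked to supply the argument behind that citation.
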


\begin{proof}[Proof of Lemma~\ref{lem:A|R}]
First, since $i\in R_j = i\in P \land (X_{ij}=1)$, for $X_{ij}\sim\Ber(q_{ij})$ independent of $i\in P$, together with Bayes' Law, we have that
$\Pr[i\in S \mid i\in R_j] = \Pr[i\in S \mid i\in P]$. So, we wish to lower bound $\Pr[i\in S \mid i\in P]$.
To this end, we note that the number of present and saved items of good $i$ follows a birth-death process as described in Proposition~\ref{lem:Pr-available-exact}, with the former having sale rate of zero, and the latter having a sale rate of $\gamma^*$ upper bounded as follows.
$$\gamma^* \leq \sum_{j\in \calB}  \gamma_j \cdot q_{ij} \leq \lambda_i/w_i.$$ 
\color{black}
Above, the first inequality follows from proposal being a prerequisite condition for allocation, and the second inequality follows from Constraint~\eqref{eqn:flow-constraint-seller}.
Thus, by Proposition~\ref{lem:Pr-available-exact}, we have the following
\begin{equation*}
   \Pr[i\in S \mid i\in P] = \frac{ \Pr \left[ i\in S\right] }{\Pr[i \in P]}\geq \frac{1-  \left(1 + \sum_{q=1}^\infty \prod_{r=1}^q \frac{\lambda_i}{r \cdot \mu_i +  \lambda_i/(1-\exp(-\lambda_i/\mu_i))} \right)^{-1}}{1 - \exp\left( -\lambda_i/\mu_i\right)}\geq \frac{1}{2},
\end{equation*}
where the last inequality follows from  
\cite[Claim~4.3 in arXiv version]{kessel2022stationary} applied to $x=\lambda_i/\mu_i$.
\end{proof}

\begin{proof}[Proof of Lemma~\ref{lem:A|R G_L}]
    The number of  saved items of good $i$ follows a birth-death process as described in Proposition \ref{lem:Pr-available-exact}, with the former having sale rate of zero, and the latter having a sale rate of $\gamma^*$ upper bounded as follows.
$$\gamma^* \leq \sum_{j\in \calB}  \gamma_j \cdot q_{ij} \leq \lambda_i/w_i.$$  
Therefore, by Proposition~\ref{lem:Pr-available-exact}, we have the following
\begin{equation*}
    \frac{ \Pr \left[ i\in S\right] }{w_i}\geq \frac{1-  \left(1 + \sum_{q=1}^\infty \prod_{r=1}^q \frac{\lambda_i}{r \cdot \mu_i +  \lambda_i/w_i} \right)^{-1}}{w_i}.
\end{equation*}

    The proof follows from  \cite[Claims~4.4 and 4.5 in arXiv version]{kessel2022stationary} applied to the above bound by letting $x=\lambda_i/\mu_i$ and $w =\min \left( \Pr[i\in P], \frac{\lambda_i - \sum_{\ell \in \calB} x_{i\ell}}{\mu_i} \right)$.
\end{proof}

\section{Deferred Proofs of \Cref{sec:Opton}}\label{appendix:OPTon}

\extendedCRS*
\begin{proof}
We denote the rank function of matroid $\calM$ by $r_\calM()$. Consider any non-negative weight vector $\vec y$ of elements of matroid $\calM$, for $R\sim \calG(\vec x)$, and concave closure $r^+_\calM$, we prove the following inequality:
\begin{equation*}
 \frac{\mathbb E \left[ r_{\calM} (R) \right]}{r^+_\calM (\vec x)}  = \frac{\mathbb E \left[\max_{S \subseteq R(\vec x), S\in \calF}  \sum_{i\in S} y_i \right]}{r^+_\calM (\vec x)}   \geq \frac 1 c \cdot (1-e^{-c}).
\end{equation*}     
The above equation combined with the characterization of CRSes in \cite{calinescu2011maximizing,dughmi2019outer} shows the existence of a CRS map that rounds the set $R \sim \calG(\vec x)$ to a feasible set in matroid $\calM$ such that each element in $R$ is selected by CRS with probability $\geq \frac 1 c (1-\exp(-c))$.

To prove the above inequality, for each $e\in \calG$, we set up an independent Poisson clock $\mathcal C_e$ of rate $x_e ' = \frac {x_e}{c}$. We define a random process that starts with an empty set $S(0) = \emptyset$ at time $t = 0$. At any time when the clock $\mathcal C_e$ sends a signal, we include element $e$ in $S$, which increases its value by $f(e\mid S)$. (If $e$ is already in S, the marginal value $f(e\mid S)$ is zero.) 
Let $S(t)$ be the random set selected during the process until time $t$. By the definition of a Poisson clock, $S(c)$ contains element $e$ independently with probability $1-e^{-c \cdot \frac{x_e}{c}}\leq x_e$.
We have $\E[r_\calM(S(c))]\leq \E_{R\sim \calG(\vec x)} [ r_\calM(R)]$ due to monotonicity of the rank function of matroid. Now, we compute the change in $r_\calM(S(t))$ in infinitesimal interval $[t , t+dt]$. The probability that $e\in \calG$ is added to $S$ is $x_e' dt$. Since $dt \rightarrow 0$, we can focus on the case when only one of the clocks sends a signal. Thus the expected increase of $r_{\calM}(S(t))$ in the interval $[t,t+dt]$ (ignoring $O(dt^2)$ terms)
\begin{align*}
  \E [r_{\calM}(S(t+dt)) - r_{\calM}(S(t))\mid S(t) = S] &\geq \left( \sum_{e\in E} r_\calM(e\mid S)\cdot x_e' \right)dt\\
  &\geq \left(\min_{S\subseteq \calG} \left \{ r_\calM(S) - \sum_{e\in E} r_\calM(e\mid S)\cdot x_e' \right\} - r_\calM(S)\right) dt.
\end{align*}
 Above, the last inequality holds by adding and subtracting $r_\calM(S)\cdot dt$. Now recall the definition of concave closure of $r^+_\calM(\vec x')$ from Definition~\ref{def:concave_closure}. For any feasible vector $\alpha_S$ for the vector $\vec x'$ and any $T\subseteq \calG$, we have
 \begin{equation*}
     \sum_{S\subseteq \calG} \alpha_S \cdot r_\calM(S) \leq \sum_{S\subseteq \calG} \alpha_S \cdot \left( r_\calM (T) + \sum_{e\in \calG} f(e\mid T)\right) = r_\calM(T) - \sum_{e\in E} r_\calM(e\mid T)\cdot x_e'.
 \end{equation*}
Above the first inequality holds due to the submodularity of $r_\calM$. This implies that,
\begin{align*}
    \min_{S\subseteq \calG} \left \{ r_\calM(S) - \sum_{e\in E} r_\calM(e\mid S)\cdot x_e' \right\} \geq r_\calM^+(\vec x' ).
\end{align*}
This further implies that,
\begin{align*}
  &\E [r_{\calM}(S(t+dt)) - r_{\calM}(S(t))\mid S(t) = S] \geq (r^+_\calM(\vec x') - r_\calM(S))dt\\
  \implies & \frac{d}{dt}\E[r_\calM(S(t))] \geq r^+_\calM(\vec x') - \E[r_\calM(S(t))]\\
  \implies & \E[r_\calM(S(c))] \geq   (1 - \exp(-c)) \cdot r_\calM(\vec x') \\
  \implies & \E_{R\sim \calG(\vec x)}[r_\calM(R)] \geq \frac{1}{c} \cdot (1 - \exp(-c)) \cdot r_\calM(\vec x). 
\end{align*}
Above, the first step follows by taking expectations over $S(t)$. The second step follows by solving the differential equation from $t = 0$ to $t=c$. The last inequality holds because $\E[r_\calM(S(c))] \leq \E_{R\sim \calG(\vec x)}[r_\calM(R)]$ and $r_\calM^+$ is concave which implies $r_\calM(\vec x') \geq \frac{1}{c} \cdot r_\calM(\vec x)$. Combining everything, we conclude the proof. 
\end{proof}

\availablegh*
\begin{proof}
    For $i\in \calG_H$, similar to  Lemma~\ref{lem:A|R G_L} and Lemma~\ref{lem:A|R}, 
    we obtain 
    \begin{equation*}
        \frac{\Pr[i\in S]}{w_i} = c \cdot \frac{\Pr[i \in S]}{\Pr[i\in P]} \geq  c \cdot \frac{1-  \left(1 + \sum_{q=1}^\infty \prod_{r=1}^q \frac{\lambda_i}{r \mu_i  +  c \cdot \lambda_i/(1-\exp(-\lambda_i / \mu_i))} \right)^{-1}}{1 - \exp\left( -\lambda_i / \mu_i\right)}.
    \end{equation*}
    Above, the first equality follows by the definition of $w_i$ for goods $i\in \cal G_H$ and the last equality follows due to Proposition~\ref{lem:Pr-available-exact}. Next, by setting $z = \frac{\lambda_i}{\mu_i}$ we obtain an analytical bound on the ratio $\frac{\Pr[i\in S]}{w_i} \geq c \cdot \max_{x\geq 0} h_c(x)$, where $$h_c (x): = \frac{1-  \left(1 + \sum_{q=1}^\infty \prod_{r=1}^q \frac{x}{r  +  c \cdot x/(1-\exp(-x))} \right)^{-1}}{1 - \exp\left( -x\right)}.$$ To complete the proof, next we show that the function $h_c (x)$ is lower bounded by $\frac{1}{1+c}$. As a first step, obtain a manageable lower bound on $h_c(x)$ as follows:
    \begin{align*}
        h_c(x) &=  \frac{1-  \left(1 + \sum_{q=1}^\infty \prod_{r=1}^q \frac{x}{r  +  c \cdot x/(1-\exp(-x))} \right)^{-1}}{1 - \exp\left( -x\right)}\\
        &\geq \frac{1-  \left(1 + \sum_{q=1}^\infty \prod_{r=1}^q \frac{x}{r \left(1 +  c \cdot x/(1-\exp(-x)) \right)} \right)^{-1}}{1 - \exp\left( -x\right)}\\
        &=\frac{1 - \exp \left( - \frac{x}{1 +  c \cdot x/(1-\exp(-x))}\right)}{1 - \exp\left( -x\right)}. 
    \end{align*}
Above, the inequality holds because $r>1$, the last equality holds due to Taylor's expansion of $e^x$. For sake of exposition, we let $\alpha_c(x):=\frac{1 - \exp \left( - \frac{x}{1 +  c \cdot x/(1-\exp(-x)) }\right)}{1 - \exp\left( -x\right)}$. We first observe that $\lim_{x\rightarrow 0+} \alpha_c(x) = \frac 1 {1+c}$. Next, we show that for any $c>0$, the function $\alpha_c(x)$ is increasing in $x$ which will conclude the proof of the claim. We differentiate the function $\alpha_c(x)$ and simplify to obtain:
\begin{align*}
    \alpha'_c(x) = \frac{e^{x \left(2 - \frac{e^x - 1}{e^x (c x + 1) - 1}\right)} (-c x^2 + e^x (c\cdot x^2 (c + 1)  + 2 c x - 2) - 2 c x + e^{2 x} + 1)}{(e^x - 1)^2 (e^x (c x + 1) - 1)^2}.
\end{align*}
We can observe that the above  denominator is non-negative, namely that $\frac{e^{x \left(2 - \frac{e^x - 1}{e^x (c x + 1) - 1}\right)}}{(e^x - 1)^2 (e^x (c x + 1) - 1)^2} \geq 0$. Hence we only analyze the numerator in the preceding equation and show that it too is non-negative. 
\begin{align*}
    -c x^2 + e^x (c\cdot x^2 (c + 1)  + 2 c x - 2) - 2 c x + e^{2 x} + 1)
    &= -c x^2 + c x^2 e^x + c^2 x^2 e^x + 2c x e^x -2e^x +e^{2x}+1\\
    &= c x^2 \cdot (e^x -1)+ 2c x (e^x  -1) + e^{2x} - 2e^x - 1\\
    & = c x^2 \cdot (e^x -1)+ 2c x (e^x  -1) + (e^{x}  - 1)^2\\
    &\geq 0.\qedhere
\end{align*}
\end{proof}


\section{Deferred Proofs of \Cref{sec:multi-good}: Mutli-Good SPIs}\label{appendix:multi-good}

In our analysis of the multi-good problem we will need to prove stochastic dominance between two processes, for which the following lemma will prove useful. For this lemma, we recall that a set $S\subseteq \calY \subseteq \mathbb{R}^n$ is or upward closed if for every $y\geq \tilde y$ with $\tilde y \in S$ and $y\in \calY$, we have that $y \in S$. 

\begin{lem}[\cite{brandt1994pathwise}]
\label{lem:stochastic-dominance}
Let $Y, \tilde Y$ be two stochastic processes taking values in $\calY\subseteq \mathbb{R}^n$, with time-homogeneous intensity matrices $Q, \tilde Q$.
Then, $Y$ stochastically dominates $\tilde Y$ ($\Pr[Y\geq y]\geq \Pr[ \tilde Y\geq y]$ for all $y\in \calY$) if and only if the following holds: for every $y, \tilde y \in \calY$ and upward closed set $S\subseteq \calY$, if $y\geq \tilde y$, and either $y, \tilde y \in S$ or $y, \tilde y \notin S$, then
\[ \sum_{z \in S} Q(y, z) \geq \sum_{z \in S} \tilde Q (\tilde y, z).\]
\end{lem}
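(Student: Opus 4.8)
The plan is to prove this by constructing an order-preserving (monotone) coupling of the two continuous-time jump processes; the ``if'' direction is the substantive one, and the ``only if'' direction follows from a first-order-in-time expansion. Throughout I read ``$Y$ stochastically dominates $\tilde Y$'' in the standard way for comparison theorems: for every pair of initial states $y_0\ge\tilde y_0$ and every $t\ge0$, $Y_t$ (started from $y_0$) stochastically dominates $\tilde Y_t$ (started from $\tilde y_0$), which in particular gives the claimed comparison of the stationary laws.

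For the ``if'' direction, assume the displayed rate inequality holds for every upward-closed $S$ and every ordered pair $y\ge\tilde y$ with $y,\tilde y$ on the same side of $S$. I would build a continuous-time Markov process $(Y,\tilde Y)$ taking values in the ordered set $\calD:=\{(y,\tilde y)\in\calY\times\calY : y\ge\tilde y\}$ whose two coordinate marginals are distributed as $Y$ and $\tilde Y$, and which never leaves $\calD$. It suffices to produce, for each $(y,\tilde y)\in\calD$, a coupling of the jump measures $Q(y,\cdot)$ and $\tilde Q(\tilde y,\cdot)$ --- after padding each with ``stay-put'' mass on the diagonal so the two measures have equal total mass --- that is supported on $\{(a,b):a\ge b\}$. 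Existence of such a coupling is precisely Strassen's theorem: a coupling of two probability measures $\mu,\nu$ on $\calY$ supported on $\{(a,b):a\ge b\}$ exists if and only if $\mu(U)\ge\nu(U)$ for every upward-closed $U$. Checking that the padding turns the ``same side of $S$'' hypothesis on $Q,\tilde Q$ into exactly the inequality $\mu(U)\ge\nu(U)$ for the padded jump measures is the crux of this step. Gluing these per-state couplings yields a well-defined coupled generator $\bar Q$; constructing the coupled process through its embedded jump chain (no explosion for the locally finite processes of interest) produces, from any $Y_0\ge\tilde Y_0$, a trajectory with $Y_t\ge\tilde Y_t$ for all $t$ almost surely.

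Given the coupling, the conclusion is immediate: for every $y\in\calY$ the set $\{z:z\ge y\}$ is upward closed, so $Y_t\ge\tilde Y_t$ a.s.\ gives $\{\tilde Y_t\ge y\}\subseteq\{Y_t\ge y\}$, hence $\Pr[Y_t\ge y]\ge\Pr[\tilde Y_t\ge y]$; letting $t\to\infty$ transfers the ordering to the stationary laws. For the ``only if'' direction, suppose the rate inequality fails, so there are an upward-closed $S$ and $y\ge\tilde y$ with $y,\tilde y$ on the same side of $S$ and $\sum_{z\in S}Q(y,z)<\sum_{z\in S}\tilde Q(\tilde y,z)$. Starting the chains at $y$ and $\tilde y$, the expansion $p_t=I+tQ+O(t^2)$ gives $\Pr[Y_t\in S]=\mathbbm 1[y\in S]+t\sum_{z\in S}Q(y,z)+O(t^2)$ and likewise for $\tilde Y$; since $y,\tilde y$ lie on the same side of $S$ the leading constants agree, so $\Pr[Y_t\in S]-\Pr[\tilde Y_t\in S]=t\big(\sum_{z\in S}Q(y,z)-\sum_{z\in S}\tilde Q(\tilde y,z)\big)+O(t^2)<0$ for all small $t>0$, contradicting stochastic dominance applied to the monotone indicator $\mathbbm 1_S$.

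The main obstacle is the feasibility bookkeeping in the Strassen step. The raw jump measures $Q(y,\cdot)$ and $\tilde Q(\tilde y,\cdot)$ generally have different total out-rates $-Q(y,y)\ne-\tilde Q(\tilde y,\tilde y)$, so one must first pad them to a common mass by adding ``stay-put'' transitions; only after this padding does the max-flow/marriage characterization of monotone couplings apply, and one must verify that the restriction to $y,\tilde y$ on the same side of $S$ is exactly what guarantees $\mu(U)\ge\nu(U)$ for every upward-closed $U$ once the padding is in place. A secondary, routine point is making the coupled process rigorous --- constructed via its jump chain so that measurability and non-explosion are not in question --- which is straightforward for the birth-death-type processes arising in our application.
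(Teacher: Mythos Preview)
The paper does not prove this lemma; it is quoted from \cite{brandt1994pathwise} and used as a black box, so there is no in-paper proof to compare against. Your sketch follows the standard route for such comparison theorems---per-state monotone coupling of jump measures via Strassen's theorem, glued into a pathwise-ordered process, with the converse via a first-order expansion---which is indeed the approach of the cited reference, so your proposal is appropriate and correct in outline.
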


\subsection{Proof of Lemma~\ref{lem:stochastic dominance}}\label{proof of stochastic dominance}

We let $Y \in \R^{2n}$ be the vector whose elements, which we refer to as $Y_{A_i}$ and $Y_{P_i}$, represent the number of items of good $i$ available and the negative of the number of items of good $i$ present, respectively, under \Cref{alg:main_new}. We use $e_{A_i}$ and $e_{P_i}$ to denote the vectors with all zeros except at the elements corresponding to $A_i$ and $P_{i}$ which are $1$.
We can think of $Y$ as simply the (augmented) state of the marketplace under \Cref{alg:main_new}, where the set $\calY \subseteq \mathbb{R}^{2n}$ of valid states is such that for any $y \in \calY$, we have $0 \leq y_{A_i} \leq C$, $y_{P_i} \leq 0$, and $y_{A_i} \leq |y_{P_i}|$.
Under \Cref{alg:main_new}, the stochastic process governing $Y$ is described by intensity matrix $Q$, where for any $y, y^\prime \in \calY$,
\begin{equation}\label{eq:stoc-process-alg1}
Q(y, y^\prime) = \begin{cases}
	\lambda_i &  y^\prime = y + e_{A_i} - e_{P_i} \\
	y_{A_i} \cdot \mu_i &  y^\prime = y - e_{A_i} + e_{P_i} \\
		\left(|y_{P_i}| - y_{A_i} \right) \cdot \mu_i &  y^\prime = y + e_{P_i} \\
	    \sum_{j \in \calB} \gamma_j \cdot \Pr[i\in \pi(R) \mid i\in A] &  y^\prime = y - e_{A_i} \text{ and } y_{A_i} > 0 \\
		0 & \text{o.w.}
	\end{cases}
	\end{equation}

	and $Q(y, y) = - \sum_{y^\prime \in \calY : y^\prime \neq y} Q(y, y^\prime)$.
	
	Although the availability of good $i$ and the presence of other goods $i^\prime \neq i$ are correlated under $Q$, we show that $Y$ stochastically dominates a stochastic process $\tilde Y$ under which they are, in fact, independent across elements. $\tilde Y$ can be thought of as a collection of $n$ independent single-good instances, where each instance consists of a different good $i \in \calG$ and the full set of buyers $\calB$. 
	
	More specifically, we let $\tilde Y$ represent the state of a stochastic process on the same space $\calY$ governed by intensity matrix $\tilde Q$, where for any $y, y^\prime \in \calY$,
	\begin{equation}\label{eq:ind-process}
	\tilde Q(y, y^\prime) = \begin{cases}
		\lambda_i &  y^\prime = y + e_{A_i} - e_{P_i}  \\
		y_{A_i} \cdot \mu_i &  y^\prime = y - e_{A_i} + e_{P_i} \\
		\left(|y_{P_i}| - y_{A_i} \right) \cdot \mu_i &  y^\prime = y + e_{P_i} \\
		\sum_{j \in \calB} \gamma_j \cdot \frac{b\cdot x_{ij}}{\gamma_j\cdot w_i}&  y^\prime = y - e_{A_i} \text{ and } y_{A_i} > 0 \\
		0 & \text{o.w.}
	\end{cases}
	\end{equation}
	and $\tilde Q(y, y) = - \sum_{y^\prime \in \calY : y^\prime \neq y} \tilde Q(y, y^\prime)$. Observe that the $Q$ and $\tilde Q$ are identical except at the rate at which they make a transition to the state with one fewer available item.  
 
 The execution of \Cref{alg:main_new} when a buyer of type $j$ arrives is equivalent to the following: the seller first determines which goods can be sold to buyer $j$ by sampling a set of permissible goods from the product distribution $\Ber(q_{1j}) \times \dots \times \Ber(q_{nj})$, denoted as a set $H_j$.
 Next, Following the proof of Lemma 3.2 and Claim A.1 from \cite{kessel2022stationary} we conclude that the stochastic process $Y \mid \{H_j\}_{j\in \calB}$ stochastically dominates the stochastic process $\tilde Y\mid \{H_j\}_{j\in \calB}$. This implies that for any $S\subseteq \calG$,
 \begin{align*}
    \Pr[S\cap R_j \neq \emptyset ] &= 1-\Pr[S\cap R_j = \emptyset]\\ 
    &=1 - \sum_{H_j\subseteq \calG} \Pr\left[Y \leq \sum_{i\in H_j\cap S} e_{A_i} \;\middle\vert\; \{H_j\}_{j\in \calB}\right]\\
    &\geq 1 - \sum_{H_j\subseteq \calG} \Pr\left[\tilde Y \leq \sum_{i\in H_j} e_{A_i} \;\middle\vert\; \{H_j\}_{j\in \calB}\right] && (Y \mid \{H_j\}_{j\in \calB} \succeq \tilde Y \mid \{H_j\}_{j\in \calB})\\
    & = 1- \Pr[ S \cap \tilde R_j = \emptyset]\\
    & = \Pr[S\cap \tilde R_j \neq \emptyset].
 \end{align*}
 Above, the second  and the second last equality holds because if all goods in $i\in H_j\cap S$ are not available then $S\cap R = \emptyset$. To complete the proof, we consider
 \begin{align*}
\Pr[S\cap \tilde R_j \neq \emptyset] &= 1- \Pr[ S \cap \tilde R_j = \emptyset]= 1 - \prod_{i\in S} \Pr[i\notin \tilde R_j]= 1 - \prod_{i\in S} \left(1 - \frac{x_{ij}}{\gamma_j \cdot w_i}\Pr[i\in \tilde A]\right)\\
&\geq 1 - \prod_{i\in S} \left(1 - \frac{x_{ij}}{2 \cdot \gamma_j }\right) \geq \left(1 - \frac 1 {\sqrt e} \right) \sum_{i\in S} \frac{x_{ij}}{\gamma_j}. 
 \end{align*}
 Above, the second inequality holds because the events $\{i\in \tilde R_j\}_{i\in S}$ are independent, as $\{\mathcal S_i\}_{i\in S}$ are independent. The first inequality holds because $\frac{\Pr[i\in \tilde A]}{w_i} \geq \frac 1 2$ (follows from Lemma~\ref{lem:A|R}). The last inequality holds because $1-\exp(-bx)\geq (1-\exp(-b))x$ by convexity. 
\section{Challenging Correlation for \Cref{alg:combinatorial} }\label{appendix:challenging_correlation}
In this section we construct an instance where the events $\{i\in \pi(R)\}$ and $\{i\in A\}$ are (strictly) negatively correlated due to the positive correlations between the availability of good $i$ and presence of some other good $i'$. To make this argument concrete, consider the following example:
\begin{example}\label{example:Bad correlation for alg}

    Consider an SPI instance with one-uniform matroid constraint with two goods $\calG = \{a,b\}$ and one buyer $\calB = \{c\}$. Items of good $a$ are produced at rate $1$ and perish at rate $\frac{1}{\log \frac 1 \epsilon}$, and items of good $b$ are produced at rate $\delta \cdot \epsilon$ and perish at rate $\epsilon$. The valuation for the buyers are $v_{ac} = v_a = 1+\epsilon$ and $v_{bc}=v_b=1$. The buyer $c$ arrives at rate $\gamma_c = 1$. We can observe that the optimal selling rate $\vec x^*\in \arg\max\{\vec v\cdot \vec x \mid \vec x\in \calP_{off}(\calI)\}$ is $x^*_a = 1 - \exp(-\log 1/\epsilon) = 1 - \epsilon$ and $x^*_b = \epsilon$. We consider the CRS scheme as follows:  $\pi(\{a, b \}) = a, \pi(\{a\}) = a, \pi(\{b\})= b, \pi(\{\emptyset \}) = \emptyset $.  We now compute $q_a$ and $q_b$ as follows: $$q_a = \frac{x_a^*}{\Pr[a\in P]} = 1 \text{ and } q_b = \frac{x_b^*}{\Pr[b\in P]}.$$ In addition, the probability of the proposals can be expressed as: $\Pr[a \in R] = x_a^* \text{ and } \Pr[b \in R] = x_b^*.$ We also have $$\Pr[a\in P]  = 1 - \epsilon \text{ and } \Pr[b\in P] = 1 - \exp(-1/\delta).$$
\end{example}
\begin{lem}\label{lem:badcorrlation}
    On \Cref{example:Bad correlation for alg}, for $\delta = 0.3$, we have $\Pr[b\in \pi(R_c)\cap A] , \Pr[b\in \pi(R_c)], \Pr[b \in A] >0$. Moreover, $$\Pr[b\in \pi(R_c)\cap A] \leq 0.81 \cdot \Pr[b\in \pi(R_c)] \cdot \Pr[b \in A].$$ 
\end{lem}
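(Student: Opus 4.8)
The plan is to compute all three quantities $\Pr[b\in \pi(R_c)\cap A]$, $\Pr[b\in \pi(R_c)]$ and $\Pr[b\in A]$ explicitly in the limit $\epsilon\to 0$ (treating $\epsilon$ as an infinitesimal and $\delta=0.3$ as fixed), and then check the claimed inequality. The first observation is that, since the CRS is $\pi(\{a,b\})=a$, good $b$ is \emph{selected} only when $a$ does not propose, i.e. when $a\notin R_c$; and under the chosen rates $q_a=1$, $a$ proposes exactly when $a$ is present. Hence $b\in\pi(R_c)$ iff ($b\in R_c$) and ($a\notin P$), so $\Pr[b\in\pi(R_c)] = \Pr[b\in R_c]\cdot\Pr[a\notin P \mid b\in R_c]$. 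Since the presence processes of $a$ and $b$ are independent (presence is independent across goods), and $b\in R_c$ depends only on $b$'s presence and an independent coin, $\Pr[a\notin P\mid b\in R_c]=\Pr[a\notin P]=\epsilon$. Thus $\Pr[b\in\pi(R_c)] = \epsilon\cdot x_b^* = \epsilon^2$, in particular strictly positive. The quantity $\Pr[b\in A]$ is strictly positive by Proposition~\ref{lem:Pr-available-exact} (it equals $1-(\sum_q \prod_r \lambda_b/(r\mu_b+\gamma^*))^{-1}$ for the effective sale rate $\gamma^*$ of good $b$, which is finite), and $\Pr[b\in\pi(R_c)\cap A]$ is squeezed between $0$ and $\Pr[b\in\pi(R_c)]$ but must be shown to be strictly positive — this follows because with positive probability $a$ is absent, $b$ is present, has never been sold recently, and proposes.

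The crux is then to quantify the negative correlation: I need to show $\Pr[b\in\pi(R_c)\cap A] \le 0.81\cdot\Pr[b\in\pi(R_c)]\cdot\Pr[b\in A]$. The key mechanism is that $b$ is available precisely because buyers who could have taken $b$ were instead served good $a$ — but $b$ gets \emph{selected} by the CRS only in the rare event $a\notin P$. Conditioning on $b\in\pi(R_c)$ forces $a\notin P$ at the arrival time, which (via the dynamics of $a$'s birth-death process on the timescale $1/\mu_a=\log(1/\epsilon)$, much slower than $b$'s timescale $1/\mu_b = 1/\epsilon$) makes it likely that $a$ was absent for a long stretch before the arrival, during which $b$ was being sold to buyers, depleting $b$'s availability. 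So I would: (i) set up the joint birth-death chain for (number of present/available items of $b$) together with the indicator of $a$'s presence; (ii) compute $\Pr[b\in A]$ via Proposition~\ref{lem:Pr-available-exact} using the effective sale rate $\gamma^* = \gamma_c\cdot q_b\cdot \Pr[a\notin P] = q_b\cdot\epsilon$ of good $b$ (small, so $\Pr[b\in A]$ is close to $\Pr[b\in P] = 1-e^{-1/\delta}$); (iii) compute $\Pr[b\in A\mid b\in\pi(R_c)]$, which involves conditioning on $a$ being absent, so $b$ faces the \emph{larger} sale rate $q_b$ (not $q_b\epsilon$) over the relevant recent history — this is the depleted regime; (iv) divide to get the ratio $\Pr[b\in A\mid b\in\pi(R_c)]/\Pr[b\in A]$ and show it is at most $0.81$ as $\epsilon\to0$ with $\delta = 0.3$.

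The main obstacle is step (iii): computing $\Pr[b\in A\mid b\in\pi(R_c)]$ requires understanding the quasi-stationary behavior of $b$'s availability conditioned on $a$ having been absent long enough that $b$'s faster chain has essentially equilibrated to its $a$-absent stationary law. On $b$'s timescale ($\mu_b=\epsilon$), the event $\{a\notin P\}$ persists for a duration of order $1/\mu_a = \log(1/\epsilon) \gg 1/\mu_b$, so conditioned on $\{a\notin P\}$ at the arrival instant, $b$'s chain is well-approximated by its stationary distribution with sale rate $q_b$ throughout. Thus $\Pr[b\in A\mid b\in\pi(R_c)]$ is (to leading order) the stationary availability probability of a single-good birth-death process with birth rate $\lambda_b=\delta\epsilon$, death rate $\mu_b=\epsilon$, sale rate $q_b = x_b^*/\Pr[b\in P] = \epsilon/(1-e^{-1/\delta})$; normalizing by $\epsilon$, this is the availability of a process with birth rate $\delta$, death rate $1$, sale rate $1/(1-e^{-1/\delta})$, via Proposition~\ref{lem:Pr-available-exact}. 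Comparing to $\Pr[b\in A] \to 1-e^{-1/\delta}$ (the availability with sale rate $\to 0$, i.e. presence probability), the ratio becomes an explicit function of $\delta$ alone; plugging $\delta=0.3$ and verifying it is $\le 0.81$ is then a finite numerical check. I would make the limiting statements rigorous by keeping the $O(\epsilon)$ and $O(1/\log(1/\epsilon))$ error terms explicit and noting they vanish, so the inequality holds for all sufficiently small $\epsilon$ (which is what the "$\epsilon\to0$" instance in the statement intends).
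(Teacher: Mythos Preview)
Your overall decomposition matches the paper's: both observe that $b\in\pi(R_c)$ iff $b\in R_c$ and $a\notin P$, hence $\Pr[b\in\pi(R_c)]=x_b^*\cdot\Pr[a\notin P]$, and both reduce the target inequality to comparing $\Pr[b\in A\mid a\notin P]$ with $\Pr[b\in A]$ and then sending $\epsilon\to 0$. The paper carries this out by directly asserting conditional birth--death identities for $b$'s availability given $a\notin P$ (sale rate $q_b$) and given $a\in P$ (sale rate $0$), bounding $\Pr[b\in A\mid a\notin P]\le 1/(\delta+q_b/\epsilon)$ and $\Pr[b\in A]\ge (1-e^{-1/\delta})(1-\epsilon)+O(\epsilon)$, and obtaining the limiting ratio $\dfrac{1/(\delta+1/(1-e^{-1/\delta}))}{(1-e^{-1/\delta})^2}\le 0.81$ at $\delta=0.3$.

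However, your justification of step (iii) via timescale separation contains a genuine error. You claim the event $\{a\notin P\}$ persists for a duration of order $1/\mu_a=\log(1/\epsilon)$ and that this is $\gg 1/\mu_b$. Both assertions are wrong. First, the empty period of the $a$-queue has duration $\Exp(\lambda_a)=\Exp(1)$, i.e.\ order $1$; the quantity $1/\mu_a$ is the lifespan of a single $a$-item, not the length of an $a$-absent interval. Second, $1/\mu_b$ is of order $1/\epsilon$, so even your (incorrect) $\log(1/\epsilon)$ satisfies $\log(1/\epsilon)\ll 1/\mu_b$, not $\gg$. Consequently an $a$-absent interval is far too short for $b$'s chain to relax, and your quasi-stationary approximation---replacing $\Pr[b\in A\mid a\notin P]$ by the stationary availability of the $b$-chain run with constant sale rate $q_b$---is not supported by the timescale argument you give. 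You will need a different justification for that conditional identity (the paper states it directly rather than deriving it from separation of scales).
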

The intuition here is that when good $a$ is not present then an item or good $b$ are sold to the buyer with the positive rate in contrast to when the items of good $a$ are present in which case an item of good $b$ is never sold. Hence, intuitively, $\Pr[b\in A]$ should be larger than $\Pr[b\in A \mid a \notin P]$. This precisely positively correlates the availability of good $b$ with the presence of good $a$ which leads unfavorable correlation for \Cref{alg:combinatorial}.

To formally prove Lemma~\ref{lem:badcorrlation}, we first upper bound the probability $\Pr[b\in \pi(R_c)\cap A]$ and lower bound the product of the probabilities  $\Pr[b\in \pi(R_c)] \cdot \Pr[b \in A]$. 

\begin{claim}
   On \Cref{example:Bad correlation for alg}, $\Pr[b\in \pi(R)\cap A] =  \Pr[b\in A \mid a\notin P ] \cdot \Pr[a \notin P] \cdot  q_b$.
\end{claim}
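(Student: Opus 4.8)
The plan is to unpack the event $\{b\in\pi(R)\}$ for the specific CRS $\pi$ fixed in \Cref{example:Bad correlation for alg}. Since $\pi(\{a,b\})=\pi(\{a\})=a$, $\pi(\{b\})=b$ and $\pi(\emptyset)=\emptyset$, the unique set $T$ with $b\in\pi(T)$ is $T=\{b\}$. Hence
\[
\{b\in\pi(R)\}=\{R=\{b\}\}=\{b\in R\}\cap\{a\notin R\}.
\]
So the first step is simply this combinatorial identification, after which it remains to rewrite $\{b\in R\}$ and $\{a\notin R\}$ in terms of presence of goods and the arriving buyer's proposal coins.

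Next I would recall that (by \Cref{line:i-proposes-j}) $R$ contains each \emph{present} good independently with its proposal probability, i.e.\ $\{a\in R\}=\{a\in P\}\cap\{X_a=1\}$ and $\{b\in R\}=\{b\in P\}\cap\{X_b=1\}$, where $X_a\sim\Ber(q_a)$ and $X_b\sim\Ber(q_b)$ are the proposal coins of the arriving buyer, sampled freshly on arrival and therefore independent of each other and of the entire market history (in particular of the events $\{a\in P\}$, $\{b\in P\}$ and $\{b\in A\}$). Since $q_a=1$ in this instance, $X_a=1$ always, so $\{a\notin R\}=\{a\notin P\}$. Combining with the previous step,
\[
\{b\in\pi(R)\}\cap\{b\in A\}=\{b\in A\}\cap\{b\in P\}\cap\{X_b=1\}\cap\{a\notin P\}.
\]

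Finally, I would use that an available item is in particular present, so $\{b\in A\}\subseteq\{b\in P\}$ and the event $\{b\in P\}$ above is redundant; then I peel off the independent coin $X_b$ and condition on $\{a\notin P\}$:
\[
\Pr[b\in\pi(R)\cap A]=\Pr\big[\{b\in A\}\cap\{a\notin P\}\big]\cdot\Pr[X_b=1]=\Pr[b\in A\mid a\notin P]\cdot\Pr[a\notin P]\cdot q_b,
\]
which is exactly the claimed identity. There is no genuine obstacle here: the computation is a direct case analysis of the (four-point) CRS, and the only points requiring care are that the current buyer's proposal coin $X_b$ is independent of the market state (so it factors out cleanly) and that availability implies presence (so no extra conditioning on $\{b\in P\}$ is needed). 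The identity is what drives the subsequent claims, since it exhibits $\Pr[b\in\pi(R)\cap A]$ in terms of $\Pr[b\in A\mid a\notin P]$ rather than $\Pr[b\in A]$, making the negative correlation quantitatively visible.
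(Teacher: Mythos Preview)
Your argument is correct and follows essentially the same route as the paper: identify that $b\in\pi(R)$ iff $a\notin P$ and $b\in R$, then factor out the independent coin $X_b$ and condition. One small fix: the line you cite should be \Cref{line:i-proposes-j-comb} (in \Cref{alg:combinatorial}, which proposes from \emph{present} goods) rather than \Cref{line:i-proposes-j} (in \Cref{alg:main_new}, which proposes from \emph{available} goods); your prose already says ``present,'' so this is only a label mismatch.
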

\begin{proof}
    We first notice that when an item of good $a$ is present then it proposes to the buyer with probability $1$. In addition, CRS $\pi$ outputs good $a$ whenever it proposes. Therefore, $b\in \pi(R)$ iff $a\notin P$ and $b\in R$. Hence, 
    \begin{equation*}
        \Pr[b\in \pi(R)\cap A] =  \Pr[b\in A\land a \notin P \land b\in R] = \Pr[b\in A \cap a\notin P  ] \cdot q_b = \Pr[b\in A \mid a\notin P ] \cdot \Pr[a \notin P] \cdot  q_b. 
    \end{equation*}
\end{proof}

Next, we compute the product product of probability $\Pr[b\in \pi(R_c)] \cdot \Pr[b \in A]$.
\begin{claim}
  On \Cref{example:Bad correlation for alg}, $$\Pr[b\in \pi(R)] \cdot \Pr[b\in A] = q_b \cdot  \Pr[ a \notin P] \cdot \Pr[b\in A] \cdot \left( 1 - \exp \left( - \frac{1}{\delta} \right) \right).$$  
\end{claim}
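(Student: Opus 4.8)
The plan is to reuse the structural observation established in the preceding claim — namely that on this instance $b\in\pi(R)$ holds exactly when $a\notin P$ and $b\in R$ — and then to further decompose the event $\{b\in R\}$ into the presence of good $b$ and its independent proposal coin.

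First I would recall why $\{b\in\pi(R)\}=\{a\notin P\}\cap\{b\in R\}$: whenever good $a$ is present it proposes with probability $q_a=1$, and the prescribed CRS $\pi$ always outputs $a$ when $a$ proposes, so $a\notin P$ is necessary for $b$ to be selected; and when $a\notin P$, the remaining possibilities are $\pi(\{b\})=b$ and $\pi(\emptyset)=\emptyset$, so $b\in R$ is also sufficient. Next I would unfold $\{b\in R\}$: good $b$ can propose only if it is present, and conditioned on $b\in P$ it proposes via an independent Bernoulli coin $X_b\sim\Ber(q_b)$. Hence $\{b\in R\}=\{b\in P\}\cap\{X_b=1\}$.

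Then I would invoke independence exactly as in the proof of \Cref{lem:alg-correct}: the presence of distinct goods is governed by distinct, independent supply/perish Poisson processes, and selling decisions never affect presence, so $\{a\notin P\}$ and $\{b\in P\}$ are independent, and $X_b$ is independent of both. Therefore
$$\Pr[b\in\pi(R)]=\Pr[a\notin P]\cdot\Pr[b\in P]\cdot q_b=q_b\cdot\Pr[a\notin P]\cdot\left(1-\exp(-1/\delta)\right),$$
using $\Pr[b\in P]=1-\exp(-1/\delta)$ from the instance description. Multiplying both sides by $\Pr[b\in A]$ gives the claimed identity.

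There is no substantive obstacle here; the only point requiring care is the justification that $\{a\notin P\}$ and $\{b\in P\}$ are independent (different goods, independent birth–death dynamics, and allocations do not alter presence), which is precisely the cross-good presence independence already used earlier in the paper. The rest is bookkeeping with the explicitly given CRS table and the values of $q_a$, $q_b$, and $\Pr[b\in P]$ recorded in \Cref{example:Bad correlation for alg}.
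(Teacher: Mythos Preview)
Your proposal is correct and follows essentially the same route as the paper: identify $\{b\in\pi(R)\}=\{a\notin P\}\cap\{b\in R\}$, factor $\{b\in R\}$ as $\{b\in P\}\cap\{X_b=1\}$, invoke independence of presence across goods and of the Bernoulli coin, plug in $\Pr[b\in P]=1-\exp(-1/\delta)$, and multiply through by $\Pr[b\in A]$. If anything, your write-up is slightly more careful than the paper's, which in its first displayed line writes $\Pr[b\in R\land a\notin P]$ where $\Pr[b\in P\land a\notin P]$ is what is actually being used (otherwise the extra $q_b$ factor would be double-counted); your decomposition makes this step transparent.
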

\begin{proof}
We have,
\begin{align*}
    \Pr[b\in \pi(R)] \cdot \Pr[b\in A] &= q_b \cdot \Pr[b\in R \land a \notin P] \cdot \Pr[b\in A]\\
    &= q_b \cdot  \Pr[ a \notin P] \cdot \Pr[b\in A] \cdot \left( 1 - \exp \left( - \frac{1}{\delta} \right) \right)
\end{align*}
Above, the last inequality holds because events $\{a\notin P\}$ and $\{b \in R\}$ are independent. Next, we lower bound $\Pr[b\in A]$ and upper bound $\Pr[b\in A \mid a\notin P]$. 
\end{proof}
Next, we bound the probabilities $\Pr[b\in A]$ and $\Pr[b\in A\mid a \notin P]$ to complete the proof of Lemma~\ref{lem:badcorrlation}.

\begin{claim}\label{claim:bA|anoP}
On \Cref{example:Bad correlation for alg},
    \begin{equation*}
        \Pr[b \in A \mid a\notin P] = 1 - \left( \sum_{q=0}^\infty\prod_{r=0}^q \frac{\epsilon}{r\cdot \delta \cdot \epsilon + q_b} \right)^{-1} \in \left(1 -\exp\left( - \frac{1}{q_b/\epsilon+\delta} \right) , \frac{1}{\delta+q_b/\epsilon}\right).
    \end{equation*}
\end{claim}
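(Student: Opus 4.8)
The plan is to reduce $\Pr[b\in A\mid a\notin P]$ to the stationary availability probability of a single-good birth--death process, and then invoke Proposition~\ref{lem:Pr-available-exact}. The starting observation is that in \Cref{example:Bad correlation for alg} the \emph{presence} process of good $a$ is autonomous: items of $a$ are supplied and perish at fixed exogenous rates, and selling an item of $a$ never changes which items of $a$ are present, so the number of present items of good $a$ evolves as a birth--death process that is independent of good $b$ and of the buyer arrivals. Now condition on $\{a\notin P\}$. Whenever a buyer arrives (at rate $\gamma_c=1$), good $b$ proposes with probability $q_b$ independently, and since $a\notin P$ the only nonempty feasible set the CRS can output is $\{b\}$ (recall $\pi(\{b\})=b$), so every proposal of $b$ is selected and, if $b$ is available, an available item of $b$ is sold. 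Consequently, conditioned on $\{a\notin P\}$, the number of available items of good $b$ is the stationary state of a birth--death chain with supply rate $\lambda_b=\epsilon$ and, from state $k\ge 1$, downward rate $k\mu_b+\gamma^*$, where $\mu_b=\delta\epsilon$ (available items are present, hence each perishes at rate $\mu_b$; a present-but-sold item perishing leaves the available count unchanged, so this count is Markov on its own) and $\gamma^*=\gamma_c\cdot q_b=q_b$ is the rate of the (Poisson-thinned) buyers who take good $b$.

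Granting this reduction, Proposition~\ref{lem:Pr-available-exact} with $\lambda_i=\epsilon$, $\mu_i=\delta\epsilon$ and $\gamma^*=q_b$ yields directly
$$\Pr[b\in A\mid a\notin P] \;=\; 1-\Bigl(\,\sum_{q\ge 0}\prod_{r=1}^{q}\frac{\epsilon}{r\delta\epsilon+q_b}\Bigr)^{-1},$$
which is the closed form in the claim. The lower bound is exactly the left endpoint of the interval in Proposition~\ref{lem:Pr-available-exact}: since $r\delta\epsilon+q_b\le r(\delta\epsilon+q_b)$ for every $r\ge1$, the series above is at least $\sum_{q\ge0}\frac{1}{q!}\bigl(\frac{\epsilon}{\delta\epsilon+q_b}\bigr)^{q}=\exp\!\bigl(\frac{1}{\delta+q_b/\epsilon}\bigr)$, so $\Pr[b\in A\mid a\notin P]\ge 1-\exp(-\frac{1}{\delta+q_b/\epsilon})$, and strictly so because the $q\ge2$ terms are strictly larger.

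For the upper bound I would not use the (here much weaker) right endpoint $1-e^{-1/\delta}$ of Proposition~\ref{lem:Pr-available-exact}, but instead an elementary tail bound on the series. Write $S:=\sum_{q\ge0}\prod_{r=1}^{q}\frac{\epsilon}{r\delta\epsilon+q_b}$; pulling the $r=1$ factor $\frac{\epsilon}{\delta\epsilon+q_b}$ out of every nonempty product and then using $\frac{\epsilon}{(r+1)\delta\epsilon+q_b}\le\frac{\epsilon}{r\delta\epsilon+q_b}$ termwise gives
$$S-1 \;=\; \frac{\epsilon}{\delta\epsilon+q_b}\sum_{q\ge0}\prod_{r=1}^{q}\frac{\epsilon}{(r+1)\delta\epsilon+q_b} \;\le\; \frac{\epsilon}{\delta\epsilon+q_b}\cdot S.$$
Dividing by $S$ gives $\Pr[b\in A\mid a\notin P]=1-\tfrac1S=\tfrac{S-1}{S}\le\frac{\epsilon}{\delta\epsilon+q_b}=\frac{1}{\delta+q_b/\epsilon}$, again strictly since $\delta\epsilon>0$. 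Combined with the previous paragraph this establishes membership in the stated open interval.

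The step I expect to require the most care is the reduction in the first paragraph, namely the assertion that the \emph{conditional} law of good $b$'s available count given $\{a\notin P\}$ coincides with the stationary law of the birth--death chain in which $a$ is treated as permanently absent. I would attempt to make this precise by conditioning on the entire sample path of good $a$'s (autonomous) presence process and checking that, restricted to the times when $a$ is absent, good $b$'s available count has exactly the generator of the stated chain, then combining this with the PASTA-style time-average interpretation of stationary probabilities; the independence of $a$'s presence from good $b$'s supply/perish randomness (with $a$ influencing $b$ only through the sale rate) is what makes this tractable. Everything downstream of the reduction is routine series algebra.
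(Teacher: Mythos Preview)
Your proof is correct and follows essentially the same approach as the paper: reduce to a single-good birth--death process with supply rate $\epsilon$, death rate $\delta\epsilon$, and sale rate $\gamma^*=q_b$, then apply Proposition~\ref{lem:Pr-available-exact} for the closed form and lower bound. Your upper bound derivation differs slightly from the paper's---the paper simply bounds each factor $\frac{\epsilon}{r\delta\epsilon+q_b}\le\frac{\epsilon}{\delta\epsilon+q_b}$ and sums the resulting geometric series---whereas you pull out the $r=1$ factor and bound the shifted tail by $S$ itself to get $S-1\le\frac{\epsilon}{\delta\epsilon+q_b}\,S$; both manipulations are equivalent (each yields $S\le(1-c)^{-1}$ with $c=\frac{1}{\delta+q_b/\epsilon}$). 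You are also more candid than the paper about the reduction step: the paper asserts without further argument that the conditional process follows the stated birth--death chain, so your flagging of this point is appropriate rather than a gap relative to the paper's own proof.
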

\begin{proof}
    We note that the number of available items of good $b$ conditioned on $a\notin P$ again follows a birth-death process with the former having sale rate of zero, and the latter having a sale rate of $\gamma^* = \gamma_c \cdot q_b  = \epsilon $. Here, we note that the an item of available good $b$ is sold whenever it proposes to the buyer as we conditioned on $a\notin P$. Hence, we can upper bound
    \begin{align*}
        \Pr[b\in A \mid a\notin P] &= 1 - \left( \sum_{q=0}^\infty\prod_{r=0}^q \frac{\epsilon}{r\cdot \delta \cdot \epsilon + q_{b}} \right)^{-1}\\
        &\leq 1 - \left( \sum_{q=0}^\infty\prod_{r=0}^q \frac{\epsilon}{ \delta \cdot \epsilon + 
         q_{b}} \right)^{-1}\\
         &=  \frac{\epsilon}{\delta\epsilon + q_b} && \text{(Sum of geometric progression)}\\
        & =\frac{1}{\delta+q_b/\epsilon}.
    \end{align*}
Next, we obtain a lower bound as follows:
\begin{align*}
    \Pr[b\in A \mid a\notin P] &= 1 - \left( \sum_{q=0}^\infty\prod_{r=0}^q \frac{\epsilon}{r\cdot \delta \cdot \epsilon + q_{b}} \right)^{-1} \\& \geq  1 - \left( \sum_{q=0}^\infty\prod_{r=0}^q \frac{\epsilon}{ r\cdot( \delta \cdot \epsilon + 
         q_{b})} \right)^{-1}\\
    &= 1 - \left( \sum_{q=0}^\infty\frac{1}{q!} \cdot \left(\frac{\epsilon}{  \delta \cdot \epsilon + 
         q_{b}} \right)^q \right)^{-1}\\
    & = \exp\left( - \frac{1}{q_b/\epsilon+\delta} \right) \qquad \qquad \qquad \text{(Taylor Expansion)} \qedhere 
\end{align*}
\end{proof}

\begin{claim}\label{claim:bA|aP}
On \Cref{example:Bad correlation for alg},
    \begin{equation*}
        \Pr[b \in A]  \geq \left(1 -\exp\left( - \frac{1}{\frac{1}{1-\exp(-1/\delta)}+\delta} \right) \right) \cdot \epsilon +  \left( 1- \exp \left( - \frac{1}{\delta}\right) \right) \cdot  \left( 1- \epsilon \right).
    \end{equation*}
\end{claim}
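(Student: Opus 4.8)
The plan is to condition on whether good $a$ is present at the stationary reference time. Since the example records $\Pr[a\in P]=1-\exp(-\lambda_a/\mu_a)=1-\epsilon$, the law of total probability gives
\[
\Pr[b\in A]\;=\;\Pr[b\in A\mid a\notin P]\cdot\epsilon\;+\;\Pr[b\in A\mid a\in P]\cdot(1-\epsilon).
\]
The first conditional probability is precisely the quantity analyzed in Claim~\ref{claim:bA|anoP}, whose lower bound is $1-\exp\!\bigl(-\tfrac{1}{\delta+q_b/\epsilon}\bigr)$; substituting $q_b/\epsilon=1/(1-\exp(-1/\delta))$ (which follows from $q_b=x_b^*/\Pr[b\in P]=\epsilon/(1-\exp(-1/\delta))$) produces exactly the first summand of the claimed bound.

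It remains to show $\Pr[b\in A\mid a\in P]\ge 1-\exp(-1/\delta)=\Pr[b\in P]$, which yields the second summand. The key structural fact is that $q_a=1$, so whenever good $a$ is present it proposes to the arriving buyer with probability one and the CRS $\pi$ (with $\pi(\{a,b\})=\pi(\{a\})=a$) then allocates an item of $a$; consequently \emph{good $b$ is never sold during any maximal interval over which $a$ is continuously present}. I would therefore write $\Pr[b\in A\mid a\in P]=\Pr[b\in P\mid a\in P]-\Pr[b\in P\setminus A\mid a\in P]$, observe that $\Pr[b\in P\mid a\in P]=\Pr[b\in P]=1-\exp(-1/\delta)$ because presence of a good is independent across goods (the algorithm never affects presence, only availability), and then bound the correction term $\Pr[b\in P\setminus A\mid a\in P]$: a $b$-item that is present but already sold must have been sold during one of the (rare) $a$-absent excursions, and such an item is unlikely to still be present once $a$ has returned.

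The main obstacle is exactly this last step --- controlling $\Pr[b\in P\setminus A\mid a\in P]$. I would handle it by coupling the count of present items of good $b$ with a pure birth--death process across the boundaries of the $a$-present/absent excursions, together with a renewal-reward (Little's-law) estimate: $b$-items are sold at rate $\Theta\bigl(\gamma_c\cdot q_b\cdot\Pr[a\notin P]\bigr)$ and, once sold, survive for mean time $1/\mu_b$, so the expected number of sold-but-present $b$-items --- and hence $\Pr[b\in P\setminus A\mid a\in P]$ --- is small. Once this correction term is shown negligible (using the slack in Claim~\ref{claim:bA|anoP}'s bound on the first conditional probability together with the parameter scaling of the example), adding the two summands yields the claim. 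The first term, by contrast, is immediate from Claim~\ref{claim:bA|anoP}.
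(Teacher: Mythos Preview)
Your decomposition is exactly the paper's: condition on whether $a\in P$, invoke Claim~\ref{claim:bA|anoP} for the $a\notin P$ branch, and argue that $\Pr[b\in A\mid a\in P]$ equals (or is essentially) $1-\exp(-1/\delta)=\Pr[b\in P]$.

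The difference is in how the second branch is handled. The paper does not bound a correction term at all: it asserts that, conditioned on $a\in P$, the count of available $b$-items follows a birth--death process with sale rate $\gamma^*=0$ (since $b$ is never sold while $a$ is present), and reads off $\Pr[b\in A\mid a\in P]=1-\exp(-1/\delta)$ directly from Proposition~\ref{lem:Pr-available-exact}. You are right to notice that this step is informal---conditioning on $\{a\in P\}$ at the reference time does not literally turn the history of $b$'s availability into a homogeneous rate-$0$-sale chain, and a $b$-item could have been sold during a prior $a$-absent excursion and still be present now. Your proposal to control $\Pr[b\in P\setminus A\mid a\in P]$ via a renewal/Little's-law estimate is the natural way to make this rigorous; note, however, that this yields $\Pr[b\in A\mid a\in P]\ge 1-\exp(-1/\delta)-(\text{small})$ rather than the clean inequality in the claim, so you would need to verify that the slack you identify in Claim~\ref{claim:bA|anoP} (which is real, since the bound there replaces $r\cdot\delta\epsilon+q_b$ by $r(\delta\epsilon+q_b)$) actually absorbs it. The paper, by contrast, sidesteps this entirely by treating the conditioned process heuristically---acceptable here since the appendix is only exhibiting an example, and the downstream Lemma~\ref{lem:badcorrlation} has ample numerical slack at $\delta=0.3$.
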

\begin{proof}
    We note that the number of available items of good $b$ follows a birth-death process as described in Proposition~\ref{lem:Pr-available-exact}, with the former having sale rate of zero, and the latter having a sale rate of $\gamma^* = \gamma_c \cdot q_b \cdot \Pr[a\notin P \mid a \in P] = 0$  Here, we note that the an item of available good $b$ is sold iff an item of good $a$ is not present. Once we condition on the event $a\in P$ then an item of $b$ never gets sold. This implies that:
    \begin{align*}
        \Pr[b\in A\mid a\in P] & =  1 - \left( \sum_{q=0}^\infty\prod_{r=0}^q \frac{\epsilon}{r\cdot \delta \cdot \epsilon} \right)^{-1}  =1 - \left( \sum_{q=0}^\infty \frac{1}{q!} \cdot \left(\frac{1}{\delta} \right)^q \right)^{-1}\\
        &= 1- \exp \left( - \frac{1}{\delta}\right) \qquad \qquad \qquad \text{(Taylor Expansion)} 
    \end{align*}
Claim~\ref{claim:bA|anoP} with above inequality implies that
\begin{align*}
    \Pr[b\in A] &= \Pr[b\in A \mid a\notin P]\cdot \Pr[a\notin P] + \Pr[b\in A \mid a\in P]\cdot \Pr[a\in P]\\
    &\geq \left(1 -\exp\left( - \frac{1}{q_b/\epsilon+\delta} \right) \right) \cdot \epsilon +  \left( 1- \exp \left( - \frac{1}{\delta}\right) \right) \cdot  \left( 1- \epsilon \right)\\
    &=\left(1 -\exp\left( - \frac{1}{\frac{1}{1-\exp(-1/\delta)}+\delta} \right) \right) \cdot \epsilon +  \left( 1- \exp \left( - \frac{1}{\delta}\right) \right) \cdot  \left( 1- \epsilon \right).
\end{align*}
Above, the inequality holds because of Claim~\ref{claim:bA|anoP}, Claim~\ref{claim:bA|aP} and $\Pr[a\in P] = 1 - \epsilon$. The second equality holds because $q_b = \frac{\epsilon}{\Pr[b\notin P]} = \frac{\epsilon}{1 - \exp(-1/\delta)}$.
\end{proof}

Now, we are ready to prove the main lemma.
\begin{proof}[Proof of Lemma~\ref{lem:badcorrlation}]
 Using Claim~\ref{claim:bA|anoP} and Claim~\ref{claim:bA|aP}, we have all the required probabilities $\Pr[b\in \pi(R_c)\cap A], \Pr[b\in \pi(R_c)], \Pr[b \in A]$ are positive. Moreover,
\begin{align*}
    &\frac{\Pr[i\in A\cap \pi(R)] }{ \Pr[i\in A] \cdot \Pr[i\in \pi(R)]}  =\frac{q_b \cdot \Pr[b\in A \mid a\notin P]\cdot \Pr[a \notin P]}{q_e \cdot \Pr[b\in A] \cdot \Pr[a\notin P] \cdot (1- \exp(-1/\delta)) }\\
    &=\frac{1}{1-\exp(-1/\delta)} \cdot \frac{\Pr[b\in A\mid a\notin P]}{\Pr[b\in A]}\\
    &\leq \frac{\frac{1}{\delta+1/(1-\exp(-1/\delta))}}{ (1-\exp(-1/\delta)) \cdot \left(\left(1 -\exp\left( - \frac{1}{\frac{1}{1-\exp(-1/\delta)}+\delta} \right) \right) \cdot \epsilon +  \left( 1- \exp \left( - \frac{1}{\delta}\right) \right) \cdot  \left( 1- \epsilon \right)\right)}.
\end{align*}
Now taking $\epsilon \rightarrow 0+$, we get,
\begin{equation*}
    \frac{\Pr[i\in A\cap \pi(R)] }{ \Pr[i\in A] \cdot \Pr[i\in \pi(R)]} \leq \frac{\frac{1}{\delta+1/(1-\exp(-1/\delta))}}{(1-\exp(-1/\delta)) \cdot \left( \left( 1- \exp \left( - \frac{1}{\delta}\right) \right) \right)}. 
\end{equation*}
Now by setting $\delta = 0.3$, we obtain the result. 
\end{proof}

\bibliographystyle{acmsmall}
\bibliography{abb,ultimate}

\end{document}